\documentclass{article}

\usepackage{microtype}
\usepackage{graphicx}
\usepackage{subcaption}
\usepackage{setspace}
\usepackage{booktabs} 
\usepackage[ruled, vlined, linesnumbered, noend]{algorithm2e}

\usepackage{minitoc}
\doparttoc 
\usepackage{hyperref}

\usepackage[accepted]{icml2026}

\usepackage{graphicx}
\usepackage{amsmath}
\usepackage{amssymb}
\usepackage{mathtools}
\usepackage{amsthm}

\usepackage{amsmath,amsfonts,bm,amsthm,amssymb}

\def\eqref#1{equation~\ref{#1}}

\def\1{\bm{1}}

\def\vq{{\bm{q}}}

\def\vu{{\bm{u}}}
\def\vv{{\bm{v}}}
\def\vw{{\bm{w}}}

\def\vy{{\bm{y}}}

\DeclareMathAlphabet{\mathsfit}{\encodingdefault}{\sfdefault}{m}{sl}
\SetMathAlphabet{\mathsfit}{bold}{\encodingdefault}{\sfdefault}{bx}{n}

\def\gC{{\mathcal{C}}}

\def\gF{{\mathcal{F}}}

\DeclareMathOperator*{\argmax}{arg\,max}

\DeclareMathOperator{\reC}{Re}
\DeclareMathOperator{\imC}{Im}

\theoremstyle{plain}
\newtheorem{theorem}{Theorem}
\newtheorem{lemma}[theorem]{Lemma}
\newtheorem{definition}[theorem]{Definition}

\usepackage{enumitem}
\usepackage{comment}
\usepackage[capitalize,noabbrev]{cleveref}

\theoremstyle{plain}
\newtheorem{proposition}[theorem]{Proposition}
\newtheorem{corollary}[theorem]{Corollary}

\theoremstyle{definition}

\theoremstyle{remark}

\definecolor{darkblue}{rgb}{0, 0, 0.5}
\definecolor{myred}{rgb}{0.796,0.255,0.329}%
\hypersetup{colorlinks=true, citecolor=darkblue, linkcolor=darkblue, urlcolor=darkblue}
\definecolor{jcg}{RGB}{100,160,0}

\definecolor{syk}{RGB}{100,0,160}

\definecolor{hty}{RGB}{160,100,0}

\definecolor{kys}{RGB}{0, 100, 200}

\usepackage{xcolor}
\usepackage{longtable} 
\usepackage{booktabs}  
\usepackage{amsmath}   
\usepackage{amssymb}   
\usepackage[table]{xcolor}
\usepackage{tabularx}
\definecolor{lightpurple}{RGB}{235, 225, 250}
\usepackage{float}
\usepackage{algorithm}
\usepackage{algorithmic}
\usepackage{wrapfig}
\usepackage{xurl}

\usepackage[most]{tcolorbox}
\definecolor{lightorange}{RGB}{255, 240, 220}
\definecolor{boxBlue}{RGB}{0, 110, 180}   
\definecolor{boxGrey}{RGB}{245, 245, 245} 
\definecolor{white}{RGB}{255, 255, 255}   
\usepackage[textsize=tiny]{todonotes}
\tcbuselibrary{breakable}
\usepackage{xltabular}
\icmltitlerunning{Towards Solving the Gilbert-Pollak Conjecture via Large Language Models}
\newtcolorbox{simplealgobox}[1][]{
  enhanced,
  colback=green!8!white,   
  colframe=black,          
  boxrule=0.2pt,           
  arc=1mm,                 
  top=2mm, bottom=4mm, left=4mm, right=4mm, 
  #1
}

\newtcolorbox{responsebox}[2][]{
    breakable,
    enhanced,
    colback=cyan!5,        
    colframe=cyan!50!blue, 
    coltext=black,
    coltitle=white,
    fonttitle=\bfseries\rmfamily,
    arc=3mm,
    boxrule=1pt,
    width=0.95\linewidth,
    center,
    title=#2,
    #1
}

\newtcolorbox{constructionbox}[2][]{
    enhanced standard jigsaw,
    breakable,
    width=\linewidth,

    colback=black!3,          
    colframe=black!60,        
    coltitle=white,
    fonttitle=\bfseries\sffamily,

    left=1mm, right=1mm,      
    top=1mm, bottom=1mm,      
    boxsep=1.5mm,             
    arc=1mm,                  

    frame hidden,
    overlay broken={
        \draw[black!60, line width=0.5pt] (frame.north west)--(frame.south west);
        \draw[black!60, line width=0.5pt] (frame.north east)--(frame.south east);
        \ifnumequal{\tcbbreakpart}{1}{
            \draw[black!60, line width=0.5pt, rounded corners=1mm] (frame.north west)--(frame.north east);
        }{}
        \ifnumequal{\tcbbreakpart}{\tcbbreaktotal}{
            \draw[black!60, line width=0.5pt, rounded corners=1mm] (frame.south west)--(frame.south east);
        }{}
    },

    title={#2},
    #1
}
\begin{document}
\newbox\voidbox
\sbox\voidbox{\vbox{\tableofcontents}}

\twocolumn[
  \icmltitle{Towards Solving the Gilbert-Pollak Conjecture via Large Language Models}



  \icmlsetsymbol{equal}{*}

  \begin{icmlauthorlist}
    \icmlauthor{Yisi Ke}{equal,pkueecs}
    \icmlauthor{Tianyu Huang}{equal,pkusms}
    \icmlauthor{Yankai Shu}{equal,pkueecs}
    \icmlauthor{Di He}{sklgai}
    \icmlauthor{Jingchu Gai${}^\dagger$}{cmu}
    \icmlauthor{Liwei Wang${}^\dagger$}{sklgai,pkucmlr}
  \end{icmlauthorlist}
  \icmlaffiliation{pkueecs}{School of EECS, Peking University}
  \icmlaffiliation{pkusms}{School of Mathematical Sciences, Peking University}
  \icmlaffiliation{pkucmlr}{Center for Machine Learning Research, Peking University}
  \icmlaffiliation{sklgai}{State Key Laboratory of General Artificial Intelligence, Peking University, Beijing, China}
  \icmlaffiliation{cmu}{Carnegie Mellon University, Machine Learning Department}

  \icmlcorrespondingauthor{Jingchu Gai}{jgai@andrew.cmu.edu}
  \icmlcorrespondingauthor{Liwei Wang}{wanglw@pku.edu.cn}

  \icmlkeywords{Machine Learning, ICML}

  \vskip 0.3in

]



\printAffiliationsAndNotice{\icmlEqualContribution ${}^\dagger$Co-corresponding authors.} 


\begin{abstract}
The Gilbert-Pollak Conjecture \citep{gilbert1968steiner}, also known as the Steiner Ratio Conjecture, states that for any finite point set in the Euclidean plane, the Steiner minimum tree has length at least $\sqrt{3}/2 \approx 0.866$ times that of the Euclidean minimum spanning tree (the Steiner ratio). A sequence of improvements through the 1980s culminated in a lower bound of $0.824$, with no substantial progress reported over the past three decades. Recent advances in LLMs have demonstrated strong performance on contest-level mathematical problems, yet their potential for addressing open, research-level questions remains largely unexplored. In this work, we present a novel AI system for obtaining tighter lower bounds on the Steiner ratio. Rather than directly prompting LLMs to solve the conjecture, we task them with generating rule-constrained geometric lemmas implemented as executable code. These lemmas are then used to construct a collection of specialized functions, which we call verification functions, that yield theoretically certified lower bounds of the Steiner ratio. Through progressive lemma refinement driven by reflection, the system establishes a new certified lower bound of 0.8559 for the Steiner ratio. The entire research effort involves only thousands of LLM calls,  demonstrating the strong potential of LLM-based systems for advanced mathematical research.
\end{abstract}

\section{Introduction}
The Gilbert-Pollak Conjecture \citep{gilbert1968steiner} stands as a seminal open problem in combinatorial optimization and computational geometry, with profound implications for geometric network design \citep{kahng1994optimal} and operations research. Given a finite set of points in the Euclidean plane, the Steiner Minimal Tree (SMT) is the shortest network that connects all points, allowing the introduction of additional auxiliary nodes, known as Steiner points. The Steiner ratio is the infimum of the ratio between the length of the SMT and that of the Minimum Spanning Tree (MST). \citet{gilbert1968steiner} conjectured that this ratio is lower-bounded by $\sqrt{3}/{2}$. Subsequent work has sought to establish tight lower bounds approaching this conjectured optimum.  Remarkably, no further improvements have been reported over nearly four decades, reflecting the difficulty of the problem. See Figure~\ref{fig:history} and the Preliminaries for a summary of the relevant literature.


\begin{figure}
    \centering
    \vspace{-15pt}
    \includegraphics[width=\linewidth]{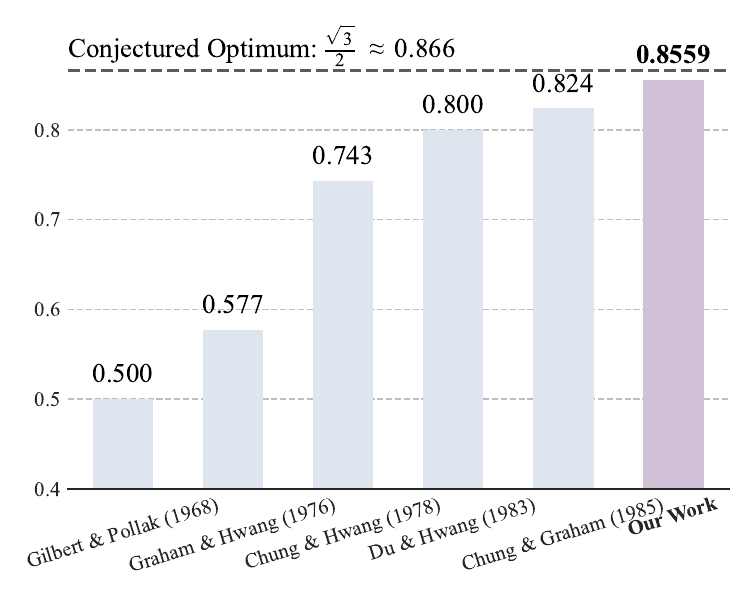}
    \vspace{-25pt}
    \caption{\emph{Progress on Steiner Ratio Lower Bound}}
    \vspace{-20pt}
    \label{fig:history}
\end{figure}

Recently, Large Language Models (LLMs) have demonstrated remarkable results in solving contest-level mathematics \citep{achim2510aristotle,chen2025seed,chervonyi2025gold}. A natural next step is to explore whether these capabilities can be extended to tackling unsolved mathematical conjectures, such as the Gilbert-Pollak Conjecture. However, naïvely prompting LLMs to address such tasks often yields incorrect proofs with subtle logical gaps and fundamental misunderstandings of key concepts. Unlike canonical definitions commonly found in textbooks, research-level concepts appear only sparsely in training data. Compared to well-structured competition problems, open research problems exhibit a vastly larger and less constrained search space. As a result, models struggle to identify valid proof strategies and become substantially more prone to hallucination.

To address this challenge, we propose a novel AI system for automating theoretical discovery in the context of the Gilbert–Pollak Conjecture. We first introduce a new mathematical tool, termed verification functions. We prove that identifying a valid collection of such functions yields a rigorous lower bound on the Steiner ratio, which serves as the system’s reward signal. To facilitate the search for valid collections of verification functions, we further decompose each verification function into two parametric lemmas, whose implementations are generated by LLMs as rule-constrained executable code. We also develop a reflection process that identifies key elements limiting further reward improvements and provides this information to the LLM, guiding subsequent generations to overcome the bottleneck. An overview of the system is illustrated in Figure~\ref{fig:structure}.

Our proposed system has several key advantages over general LLM-based reasoning systems. By narrowing and concretizing the search space, the LLM can focus on generating simpler, more isolated, and more fundamental lemmas in executable code, making the overall objective significantly easier than attempting to solve the conjecture end to end. The verification function then integrates these lemmas into a rigorous computation of the lower bound, ultimately determining the final outcome. Moreover, rather than relying solely on a scalar reward signal, our reflection mechanism returns mathematically structured feedback to the LLM, enabling more effective and targeted iterative refinement.

\begin{figure*}[t] 
    \centering
    \includegraphics[width=0.9\textwidth]{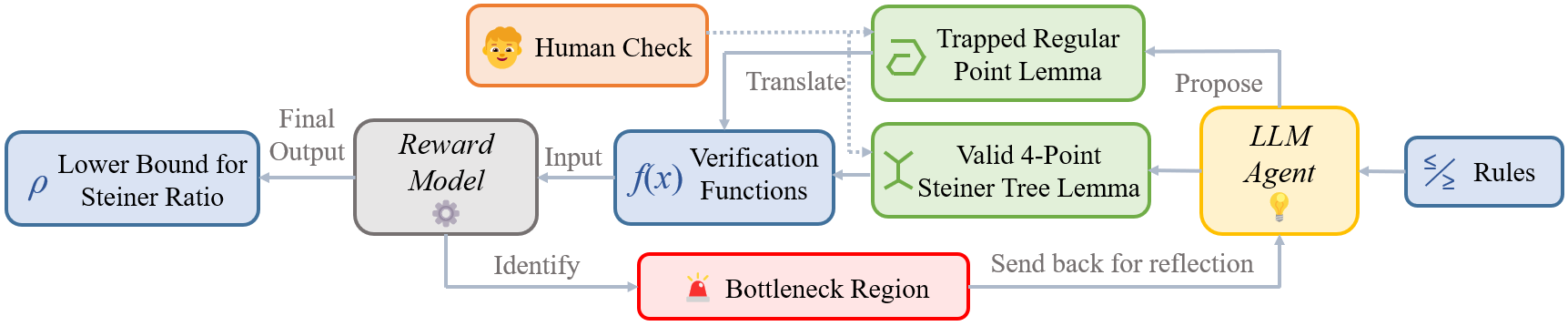}
    \caption{\emph{Illustration of our proposed LLM Math Research System for the Gilbert-Pollak Conjecture}}
    \label{fig:structure}
\end{figure*}

Leveraging this AI system tailored to the Gilbert–Pollak Conjecture, we successfully improve the lower bound of the Steiner ratio from $0.824$---the previous state-of-the-art result established by \citet{chung1985new}---to $0.8559$. Notably, the iterative refinement process requires only about ten iterations. The entire research effort involves only thousands of LLM calls, incurring a total cost of just a few hundred dollars, thereby demonstrating the strong potential of LLM-based systems for advanced mathematical research.

Reproducible codes and the final proof can be found in \url{ https://github.com/keyisi2006/Steiner-Ratio}. \emph{If you are a mathematician interested solely in the mathematical results presented in this paper, you may proceed directly to Appendix~\ref{app:math_version}, where we provide a simplified version that excludes content related to LLMs.}

\section{Preliminary}
Before presenting our method, we review the background of the Gilbert-Pollak Conjecture and introduce the notation used throughout this paper.

\subsection{Gilbert-Pollak Conjecture and Steiner Tree}

We begin by formally defining the Minimum Spanning Tree and the Steiner Minimum Tree.

\begin{tcolorbox}[
  enhanced, breakable,
  colframe=orange!75!white,  
  boxrule=0.35pt, arc=1mm,   
  title={\textbf{Minimum Spanning Tree \& Steiner Minimal Tree}}, 
  coltitle=black, fonttitle=\small\sffamily\bfseries,
  colbacktitle=orange!15!white, 
  colback=orange!5!white,       
  boxed title style={
    sharp corners, boxrule=0pt,
    top=1pt, bottom=0.5pt, left=1.5mm, right=1.5mm,
    borderline={0.5pt}{0pt}{orange!75!white} 
  },
  attach boxed title to top left={xshift=2mm,yshift*=-1.2mm},
  boxsep=1mm, top=1mm, bottom=1mm, left=1mm, right=1mm,
  before skip=10pt, after skip=10pt
]
\begin{definition}[Minimum Spanning Tree]
\label{def_mst}
Consider a set $V$ of $n$ points in the Euclidean plane $\mathbb{R}^2$. A spanning tree on $V$ is a connected, acyclic graph with vertex set $V$. When the length of each edge is defined as the Euclidean distance between its endpoints, a spanning tree that minimizes the total length is called a \emph{Minimum Spanning Tree}.
\end{definition}

\begin{definition}[Steiner Minimal Tree]
\label{def_smt}
Consider a set $V$ of $n$ points in the Euclidean plane. The shortest network interconnecting all points in $V$, where the length of each edge is measured by Euclidean distance, is necessarily a tree, referred to as a \emph{Steiner Minimal Tree}. A Steiner Minimal Tree may contain auxiliary vertices not in $V$. These additional vertices are called \emph{Steiner points}, while the points in $V$ are referred to as \emph{regular points}.
\end{definition}
\end{tcolorbox}
\begin{figure}[H] 

    \centering

    \includegraphics[width=0.4\textwidth]{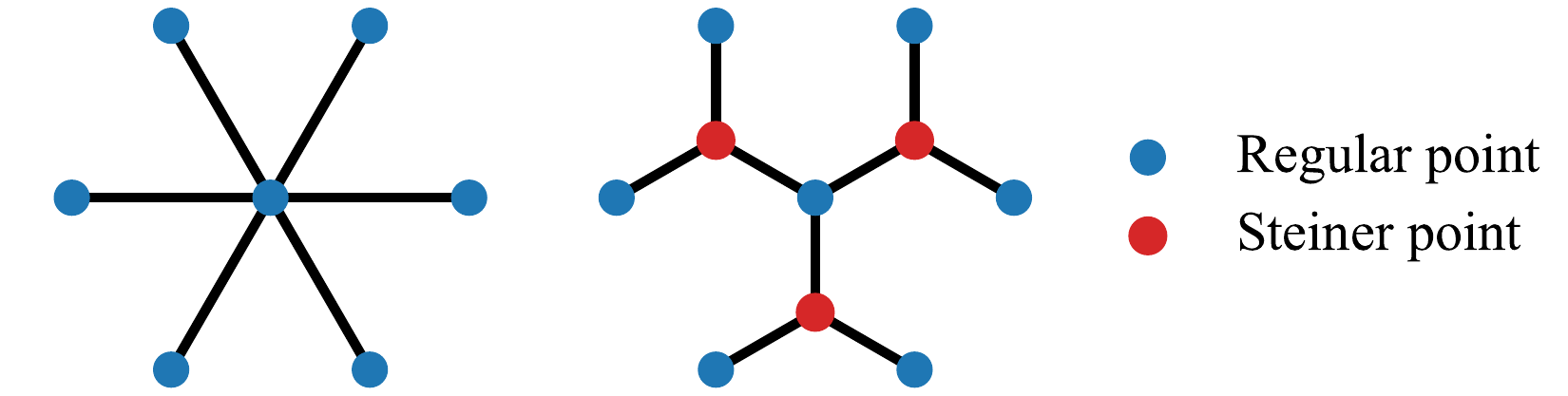} 


    \caption{\emph{Illustration of minimum spanning tree (left) and Steiner minimal tree (right).}}
    \label{fig:MST_SMT}

\end{figure}
Based on the definitions of Minimum Spanning Tree and Steiner Minimum Tree, we formally define the Steiner ratio and verbalize Gilbert-Pollak Conjecture  as follows:
\begin{tcolorbox}[
  enhanced, breakable,
  colframe=orange!75!white, 
  boxrule=0.35pt, arc=1mm,
  title={\textbf{Steiner Ratio and Gilbert-Pollak Conjecture}}, 
  coltitle=black, fonttitle=\small\sffamily\bfseries,
  colbacktitle=orange!15!white, 
  colback=orange!5!white,       
  boxed title style={
    sharp corners, boxrule=0pt,
    top=1pt, bottom=0.5pt, left=1.5mm, right=1.5mm,
    borderline={0.5pt}{0pt}{orange!75!white} 
  },
  attach boxed title to top left={xshift=2.5mm,yshift*=-1.2mm},
  boxsep=1mm, top=1mm, bottom=1mm, left=1mm, right=4mm,
  before skip=10pt, after skip=10pt
]
\begin{definition}[Steiner Ratio]
\label{def_steiner_ratio}
The Steiner ratio is defined as the infimum of the ratio of the length of the Steiner Minimal Tree to the length of the Euclidean Minimum Spanning Tree over all finite sets of points $V$:
\vspace{-10pt}
\begin{align*}
\textstyle\rho_{\text{Steiner}} = \inf_{V}L_S(V)/L_m(V),
\end{align*}
where $L_S(V)$ and $L_m(V)$ denote the lengths of Steiner Minimal Tree and Minimum Spanning Tree, respectively.
\end{definition}
\textbf{Conjecture (Gilbert-Pollak).} 
\citet{gilbert1968steiner} famously conjectured that this ratio is lower-bounded by: $\rho_{\text{Steiner}} = \sqrt{3}/2\approx 0.866.$
\end{tcolorbox}

\noindent\textbf{Progress on the Steiner Ratio Lower Bound.} \citet{gilbert1968steiner} originally conjectured that $\rho_{\text{Steiner}}=\sqrt{3}/2$. Given the intractability of a direct proof, extensive research has historically focused on tightening the lower bound. This bound was successively improved from an initial $0.5$ (Moore, as reported in \citet{gilbert1968steiner}) to $0.577$ \citep{graham1976remarks}, $0.743$ \citep{chung1978lower}, $0.8$ \citep{du1983new}, and finally $0.824$ \citep{chung1985new}. Remarkably, however, progress has stalled entirely since 1985. The primary obstacle lies in the combinatorial explosion of topological structures required for analysis, which far exceeds the capacity of human derivation.


\subsection{Mathematical Framework for Proving Steiner Ratio Lower Bound}
\label{sec:framework}
In this section, we briefly describe the mathematical framework for proving lower bound for Steiner ratio. Many of the mathematical details can be found in Appendix~\ref{app:background}.


We first give a very brief and informal description. The framework for establishing the Steiner ratio lower bound is mathematical induction \citep{du1983new}. Assuming that for any set of points with cardinality at most $n-1$, the Steiner ratio is lower-bounded by $\rho$. To extend this to a set $V$ of $n$ points, the core strategy involves a reduction argument: identifying a specific subset $V' \subset V$ of size $n-r$ (where $r \ge 1$) such that the marginal gain in the Steiner tree length satisfies $\Delta L_S / \Delta L_m \ge \rho$. Here, $\Delta L_S = L_S(V) - L_S(V')$ and $\Delta L_m = L_m(V) - L_m(V')$ denote the reduction in the lengths of the Steiner Minimal Tree and the Minimum Spanning Tree, respectively. Since $|V'| \le n-1$, the inductive hypothesis guarantees $L_S(V') / L_m(V') \ge \rho$. Combining this with the marginal condition $\Delta L_S \ge \rho \Delta L_m$ completes the inductive step, proving $L_S(V) / L_m(V) \ge \rho$.

However, executing this reduction presents two formidable challenges: (1) Combinatorial Search: Identifying the optimal subset $V'$---specifically, determining the reduction size $r$ (e.g., reducing to $n-1$, $n-2$, etc.) and selecting the exact points---requires searching a vast combinatorial space.
(2) Geometric Verification: For a chosen $V'$, proving the inequality $\Delta L_S \ge \rho \Delta L_m$ is non-trivial. It necessitates formulating novel geometric theorems and analyzing complex functional properties, such as monotonicity and extremal values over high-dimensional continuous domains.

Throughout the sequence of previous works on the lower bound, the complexity of these geometric constructions and functional analyzes has progressively escalated, eventually exceeding the limits of manual tractability. This intractability motivates our introduction of an LLM-based system to automate the discovery process.

We now present the precise mathematical framework. The crux of the induction method is to \textbf{prune} some regular points $V^* \subsetneq V$ from the SMT by performing certain operations, thereby reducing the problem size to $V' = V \setminus V^*$. The key is to prove that the Steiner ratio $\rho$ can still be maintained during these operations.

The pruning operation can be summarized into three steps:  (1) \textit{Steiner Tree Pruning}: Remove a subset of edges $S^-$ from the SMT such that $V^*$ becomes disconnected from the rest. (2) \textit{Spanning Tree Connection}: Restore connectivity between $V^*$ and the remaining points by adding a set of edges $t^*$ between regular points, which can be viewed as part of a spanning tree. (3) \textit{Steiner Tree Reconstruction}: Connect the remaining points using a new Steiner tree $S^+$, thereby forming a complete Steiner tree.

We refer to the tuple $\tau=(V^*,S^-,S^+,t^*)$ arising from this pruning procedure as a \textbf{splitting}; a formal definition is provided in Appendix~\ref{app:induction}.

\begin{wrapfigure}{r}{0.17\textwidth} 
    \centering
    
    \vspace{-20pt}
    \includegraphics[width=\linewidth]{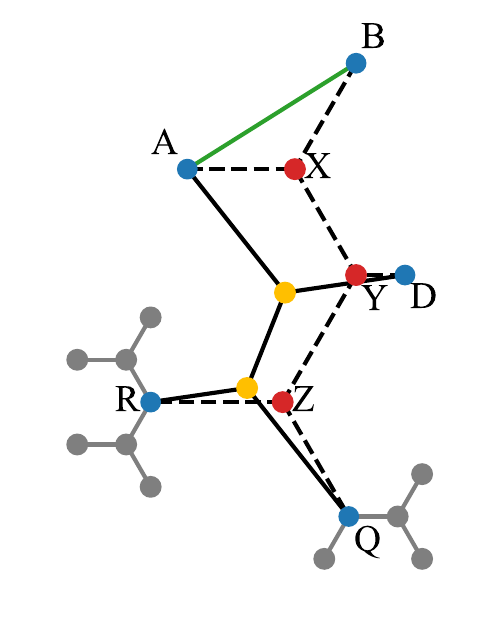} 
    \caption{Illustration of Pruning Process. We try to prune $V^* = \{B\}$. Black dashed lines are $S^-$. Black solid lines are $S^+$. Green line is $t^*$. Yellow points are Steiner points in $S^+$. }
    \label{fig:core_lemma}
    
\end{wrapfigure}

In this pruning process, the decrease in the Steiner tree length is given by $\Delta L_S \ge L_{S^-} - L_{S^+}$, while the increase in the spanning tree length is given by $\Delta L_m \le L_{t^*}$. Here, $L_X$ denotes the total edge length of graph $X$. The requirement for the induction to hold is that $\Delta L_S / \Delta L_m \ge \rho$, which is equivalent to $\rho \cdot L_{t^*} + L_{S^+} - L_{S^-} \le 0$. If this condition holds, then combining it with the inductive hypothesis $L_S(V') / L_m(V') \ge \rho$ allows us to derive $L_S(V) / L_m(V) \ge \rho$, thereby completing the induction step.
Therefore, we can reformulate the verification of a candidate lower bound $\rho$ for the Steiner ratio as the minimax condition in Theorem~\ref{thm:equivalent_formulation}.
Let $\mathcal{W}=[0,+\infty)^n$ denote the parameter space of edge lengths; see Definition~\ref{def:parameter_space} in Appendix~\ref{app:properties} for its precise construction.

\begin{tcolorbox}[
  enhanced standard jigsaw, 
  breakable,
  colframe=blue!40!black, 
  colback=blue!5!white,      
  colbacktitle=blue!15!white,
  coltitle=black,
  boxrule=0.75pt, 
  arc=1mm,                   
  title={Sufficient Condition for Lower Bound},
  fonttitle=\small\sffamily\bfseries,
  boxed title style={
    sharp corners, 
    boxrule=0pt,
    top=1pt, bottom=1pt, left=2mm, right=2mm,
    borderline={0.5pt}{0pt}{blue!40!black}
  },
  attach boxed title to top left={xshift=4mm,yshift*=-1.2mm},
  boxsep=1.5mm, top=1mm, bottom=1mm, left=1mm, right=1mm,
  before skip=10pt, after skip=10pt,
]

\begin{theorem}
\label{thm:equivalent_formulation}
For any splitting $\tau = (V^*, S^-, S^+, t^*)$ and geometric configuration $\vw \in \mathcal{W}$, we define the \textbf{splitting function} $F_{\tau}(\vw, \rho)$ as:
\vspace{-5pt}
\[
    F_{\tau}(\vw, \rho) = \rho \cdot L_{t^*} + L_{S^+} - L_{S^-}.
    \vspace{-5pt}
\]
Let $\mathcal{F}=\{F_\tau:\tau \text{ is a splitting}\}$ denote the collection of all splitting functions. The lower bound $\rho_{\text{Steiner}} \geq \rho$ is established if:
\vspace{-5pt}
\begin{align}
    \label{eq:feasibility}
    \textstyle \max_{\vw \in \mathcal{W}} \min_{F \in \mathcal{F}} F(\vw, \rho) \le 0.
\end{align}
\par\vspace{-5pt}
We say that a value $\rho$ is \textit{feasible} if the condition \eqref{eq:feasibility} holds.
\end{theorem}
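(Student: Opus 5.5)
We argue by strong induction on $n=|V|$, following the reduction scheme described just before the statement. Assume \eqref{eq:feasibility} holds, and suppose $L_S(V')\ge \rho\, L_m(V')$ for every finite point set $V'$ with $|V'|\le n-1$. The base cases $n\le 2$ are immediate: there $L_S=L_m$, and we may take $\rho\le 1$, which is needed anyway since $L_S\le L_m$ always. Now fix $V$ with $|V|=n$ and let $T$ be a Steiner minimal tree for $V$.

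The first step is to reduce to configurations covered by the parameter space $\mathcal{W}$. We will use the standard structural facts from Appendix~\ref{app:background}: every Steiner point has degree three with $120^\circ$ angles, and $T$ decomposes into full components at regular points of degree at least two. If $T$ is not full, we split $V$ at a regular point of degree $\ge 2$ into $V_1\cup V_2$ sharing that point, with $|V_i|<n$. Then $L_S(V)=L_S(V_1)+L_S(V_2)$ and $L_m(V)\le L_m(V_1)+L_m(V_2)$, so the inductive hypothesis closes this case. If $T$ is full, its topology together with its edge lengths determines a vector $\vw\in\mathcal{W}$ (Definition~\ref{def:parameter_space}), and the geometry of $T$ is recovered from $\vw$.

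Next we apply the hypothesis at this $\vw$. Since $\min_{F\in\mathcal{F}}F(\vw,\rho)\le 0$, there is a splitting $\tau=(V^*,S^-,S^+,t^*)$ with $\rho L_{t^*}+L_{S^+}-L_{S^-}\le 0$. Set $V'=V\setminus V^*$, so $|V'|\le n-1$ because $V^*\neq\emptyset$.
\begin{itemize}
\item \textbf{Steiner side.} The graph $(T\setminus S^-)$ restricted to the part containing $V'$, together with $S^+$, is a connected network spanning $V'$. Hence $L_S(V')\le L_S(V)-L_{S^-}+L_{S^+}$.
\item \textbf{Spanning side.} Adding the edges $t^*$ to a minimum spanning tree of $V'$ connects $V^*$, by the definition of a splitting. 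Hence $L_m(V)\le L_m(V')+L_{t^*}$.
\end{itemize}
Combining these with the inductive hypothesis gives
\[
L_S(V)\ \ge\ L_S(V')+L_{S^-}-L_{S^+}\ \ge\ \rho L_m(V')+\rho L_{t^*}\ \ge\ \rho L_m(V).
\]
Taking the infimum over all $V$ yields $\rho_{\text{Steiner}}\ge\rho$.

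The main obstacle is making the two inequalities $\Delta L_S\ge L_{S^-}-L_{S^+}$ and $\Delta L_m\le L_{t^*}$ rigorous directly from the formal definition of a splitting in Appendix~\ref{app:induction}. One must check three things: that removing $S^-$ really detaches exactly $V^*$; that $S^+$ reconnects the components of $V'$ left after pruning; and that $t^*$ links every point of $V^*$ to $V'$. A second concern is whether every full SMT configuration is actually represented by some $\vw\in\mathcal{W}$ for which $\mathcal{F}$ is evaluated. Degenerate configurations, such as zero-length edges or non-full components, should be handled by the decomposition step above or by continuity of $F_\tau$ in $\vw$.
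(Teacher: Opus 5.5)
Your inductive step is the paper's own argument: it is Lemma~\ref{lm:split_function} (Du--Hwang's Lemma 3) written out. Your two inequalities $L_S(V')\le L_S(V)-L_{S^-}+L_{S^+}$ and $L_m(V)\le L_m(V')+L_{t^*}$ follow immediately from items 3 and 4 of Definition~\ref{def:splitting}, so the ``main obstacle'' you list is not really one. Splitting a non-full tree at a terminal of degree $\ge 2$ is also a fine substitute for the full-component decomposition the paper cites.

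The genuine gap is the base case, together with your claim that every full SMT ``determines a vector $\vw\in\mathcal W$''.

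\emph{Why the induction stalls at $n=3,4$.} The feasibility hypothesis only speaks about trees containing the local substructure of Figure~\ref{fig:subtree}: a deepest leaf $A$ with sibling terminal $B$, Steiner points $X,Y,Z$, the branch $D$, and attachment points $Q,R$. It only yields splittings of that substructure. A full SMT on $3$ or $4$ terminals has only one or two Steiner points and contains no such substructure. For these point sets the hypothesis gives you nothing, so an induction started at $n\le 2$ cannot get past $n=3,4$.

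\emph{Why your remedies do not close it.} A full tree has no terminal of degree $\ge 2$ to split at. Continuity of $F_\tau$ does not create the missing Steiner points. At best you would be evaluating $F_\tau$ at degenerate points of $\mathcal W$ where, e.g., $Q,R,Z$ coincide, and you would still have to show that splittings designed for the generic structure are legitimate splittings of the actual small tree.

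\emph{How the paper handles it.} It takes $|V|\le 4$ as the base case, using the known results for $n=3$ (Gilbert--Pollak) and $n=4$ (Pollak). This is valid whenever $\rho\le\sqrt3/2$, so your assumption $\rho\le 1$ must be strengthened accordingly. For $|V|\ge 5$ it explicitly locates the substructure:
\begin{enumerate}
\item Root the full tree and take a leaf $A$ of maximum depth. Then its sibling $B$ is a terminal, and the other branch $D$ at $Y$ is either a terminal or a Steiner point with two terminal children.
\item Rescale so that $|AX|=1$. This is harmless, since $F_\tau$ is positively homogeneous of degree one in the edge lengths.
\end{enumerate}
You need both pieces before the hypothesis can be applied. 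Note also that $\vw$ records only this local piece of $T$, not all of $T$ as you state. That suffices, because the splittings only modify that piece.
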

\end{tcolorbox}

In the following sections, we introduce an LLM-based mathematical framework designed to address this minimax problem.

\section{LLM Math System for the Conjecture} \label{sec:system}
To effectively leverage LLMs for the Gilbert–Pollak Conjecture, we must systematically address three fundamental challenges: {\color{myred} (1) Search Space.} Spanning graph theory and computational geometry, the conjecture entails a search space far larger than that of contest-level mathematics, leading to severe hallucinations when LLMs are applied naïvely. We therefore need to construct a compact search space that constrains LLMs to generate reasonable reasoning trajectories with high probability. {\color{myred} (2) Rigorous Verification.} We require the system to produce correct derivations and rigorously verified lower bounds, which serve as a quantitative reward for evaluating the LLM’s generations. {\color{myred} (3) Iterative Reflection.} As demonstrated by recent advances like DeepSeek-R1 \citep{guo2025deepseek}, reflection plays a critical role in solving complex problems. We therefore need to build a mechanism designed to identify bottlenecks in intermediate outputs, enabling the model to reflect and iteratively refine its reasoning to obtain improved results.

To address these challenges, we first introduce a class of \textit{verification functions} and theoretically show that, given a suitable collection, the corresponding lower bound can be computed efficiently. Consequently, obtaining improved bounds is equivalent to constructing richer and more informative valid verification functions, providing a tractable and systematic solution for Challenge (2).

Furthermore, the verification function provides significant structural advantages for leveraging LLMs. Specifically, we show that any verification function can be decomposed into two parameterized lemmas. Rather than tasking the model with solving the conjecture directly, we cast the generation objective as producing rule-constrained implementations of these lemmas, expressed as executable code, which significantly simplifies the generation space, thereby effectively addressing Challenge (1).

Finally, we develop a bottleneck region-based reflection mechanism. Leveraging the reward signal and the verification process, we introduce an algorithm that automatically identifies key bottleneck regions limiting further improvements of the lower bound and feeds this information back to the LLM in natural language, guiding subsequent generations to overcome these bottlenecks. Experiments show that this signal significantly improves the effectiveness of LLMs, thereby addressing Challenge (3).

\begin{tcolorbox}[
  enhanced, breakable,
  colframe=green!40!black,  
  colback=green!5!white,    
  colbacktitle=green!15!white, 
  coltitle=black,           
  boxrule=0.75pt,           
  arc=1mm,            
  title={\textbf{Overview \& Section Roadmap}},
  fonttitle=\small\sffamily\bfseries,
  boxed title style={
    sharp corners, 
    boxrule=0.75pt,         
    colframe=green!40!black,
    top=1pt, bottom=0.5pt, left=1.5mm, right=1.5mm,
  },
  attach boxed title to top left={xshift=4mm, yshift*=-\tcboxedtitleheight/2},
  boxsep=1.5mm, top=1mm, bottom=1mm, left=1mm, right=1mm,
  before skip=10pt, after skip=10pt
]
An overview of our system is illustrated in Figure~\ref{fig:structure}. In Section~\ref{sec:reward_model}, we begin by formally defining the \textit{verification function} and establishing a method to compute lower bounds based on this formulation. The lower bound serves as the reward model of the system. In Section~\ref{sec:llm_agent}, we demonstrate that the search for a verification function is theoretically equivalent to identifying specific classes of lemmas. Leveraging this insight, we design an LLM-driven agent to automatically generate and filter these lemmas. Finally, Section~\ref{iterative_process} details the iterative interaction between the agent and the reward model with a \textit{reflection mechanism}.
\end{tcolorbox}

\subsection{Reward Model}
\label{sec:reward_model}
The general objective of the reward in the LLM is to evaluate the generated output. In our context, the reward is to determine the tightest lower bound for the Steiner ratio from the ``generated proof''. Theorem~\ref{thm:equivalent_formulation} establishes that the statement ``the Steiner ratio is lower bounded by $\rho$'' is equivalent to verifying that $\max_{\vw\in\mathcal{W}}\min_{F\in\mathcal{F}}F(\vw,\rho)\leq 0$. However, directly evaluating this continuous-domain minimax problem is computationally intractable.

\begin{wrapfigure}{r}{0.2\textwidth} 
    \centering
    \includegraphics[width=1\linewidth]{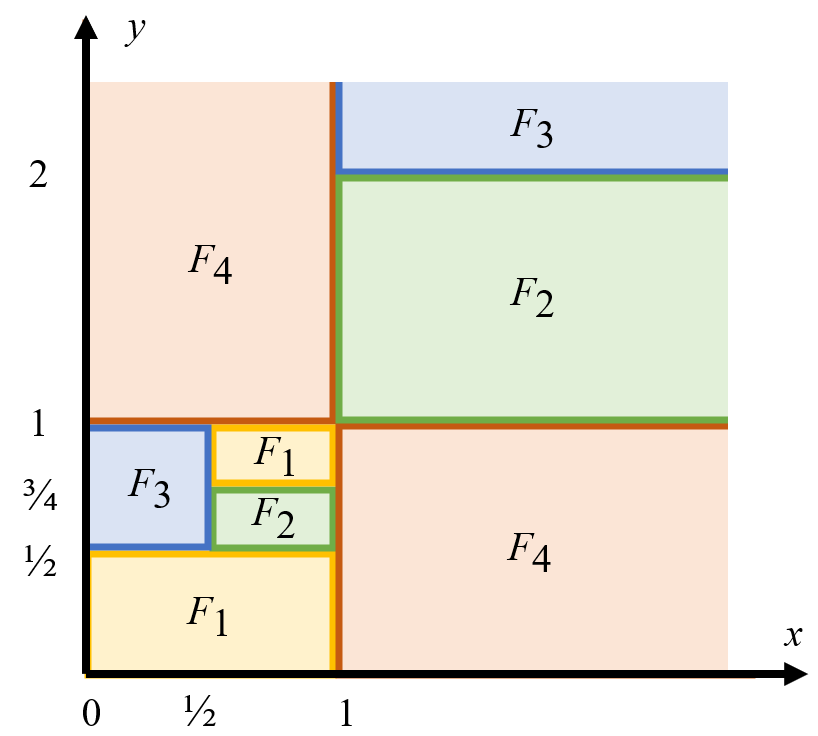} 
    \caption{\emph{Branch-and-Bound on 2D Plane}}
    \label{fig:partition}
    
\end{wrapfigure}

To make verification tractable, we employ a Branch-and-Bound strategy that recursively partitions the continuous parameter space $\mathcal{W}$ into smaller hyperrectangles, as illustrated in Figure~\ref{fig:partition} for the 2D case. For each hyperrectangle, our objective is to identify a single function $f\in\mathcal{F}$ that ensures $f(\vw) \le 0$ globally within the subdomain. However, direct verification is infeasible due to the continuous nature of the hyperrectangle. To address this challenge, we introduce verification functions, which serve as upper bounds for the underlying splitting functions while satisfying specific structural constraints. We formally define these functions as follows.

\begin{tcolorbox}[
  enhanced, breakable,
  colframe=orange!75!white,  
  boxrule=0.35pt, arc=1mm,
  title={\textbf{Verification Function}}, 
  coltitle=black, fonttitle=\small\sffamily\bfseries,
  colbacktitle=orange!15!white, 
  colback=orange!5!white,       
  boxed title style={
    sharp corners, boxrule=0pt,
    top=1pt, bottom=0.5pt, left=1.5mm, right=1.5mm,
    borderline={0.5pt}{0pt}{orange!75!white} 
  },
  attach boxed title to top left={xshift=4mm,yshift*=-1.2mm},
  boxsep=1.5mm, top=1mm, bottom=1mm, left=1mm, right=1mm,
  before skip=10pt, after skip=10pt
]
\begin{definition}
\label{def:verification_function}
    A function $f: \mathcal{W} \to \mathbb{R} \cup \{+\infty\}$ is called a \textbf{verification function} if it satisfies the following two conditions:
    \begin{enumerate}[topsep=0pt,leftmargin=10pt]\setlength{\itemsep}{0pt}
        \item \textbf{Shape Constraint:} The restriction of $f$ to any line parallel to a coordinate axis is unimodal (first non-increasing and then non-decreasing).
        \item \textbf{Bounding Constraint:} There exists a splitting function (defined in Theorem~\ref{thm:equivalent_formulation}) $g \in \mathcal{F}$ such that $g(\vw) \leq f(\vw)$, $\forall \vw \in \mathcal{W}$.
    \end{enumerate}
    We denote the set of all verification functions by $\mathcal{F}_{\text{ver}}$.
\end{definition}
\end{tcolorbox}

\textit{Remark:} The central challenge in verifying $f(\vw) \le 0$ throughout each hyperrectangle is that exhaustive point-wise checking is infeasible in a continuous domain. The following theorem addresses this challenge by establishing that the maximum value of our verification function is attained at the vertices. This property effectively reduces the verification problem to a finite set of vertex checks. We now state the theorem as follows:

\begin{tcolorbox}[
  enhanced, breakable,
  colframe=blue!40!black, 
  boxrule=0.75pt,         
  arc=1mm,                
  title={\textbf{Vertex Maximum Property}}, 
  coltitle=black, fonttitle=\small\sffamily\bfseries,
  colbacktitle=blue!15!white, 
  colback=blue!5!white,       
  boxed title style={
    sharp corners, boxrule=0pt,
   top=1pt, bottom=0.5pt, left=1.5mm, right=1.5mm,
    borderline={0.5pt}{0pt}{blue!40!black} 
  },
  attach boxed title to top left={xshift=4mm,yshift*=-1.2mm},
  boxsep=1.5mm, top=1mm, bottom=1mm, left=1mm, right=1mm,
  before skip=10pt, after skip=10pt
]


\begin{theorem}
\label{thm:vertex_max}
Let $H \subset \mathcal{W}$ be a bounded axis-aligned hyperrectangle and let $f$ be any function in $\mathcal{F}_{\text{ver}}$. A key property is that the global maximum of $f$ over $H$ is attained at one of its vertices. Consequently, if
\vspace{-5pt}
\begin{align}
\label{eq:vertices_box_condition}
\textstyle\max_{\vw \in \mathcal{V}(H)} f(\vw) \le 0,
\end{align}
\par\vspace{-5pt}
then $f(\vw) \le 0$ for all $\vw \in H$, which implies that the underlying splitting function $g(\vw) \le 0$. Note that $\mathcal{V}(H)$ is the set of vertices of $H$.
\end{theorem}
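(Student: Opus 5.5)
The argument is an induction on the number of coordinates, driven by the shape constraint of Definition~\ref{def:verification_function}. The key one-dimensional fact is this: if $\phi:[a,b]\to\mathbb{R}\cup\{+\infty\}$ is unimodal (first non-increasing, then non-decreasing), then $\sup_{[a,b]}\phi=\max\{\phi(a),\phi(b)\}$ and the supremum is attained at an endpoint. To see this, let $c$ be a switching point. For $t\le c$ we have $\phi(t)\le\phi(a)$ because $\phi$ is non-increasing on $[a,c]$. For $t\ge c$ we have $\phi(t)\le\phi(b)$ because $\phi$ is non-decreasing on $[c,b]$. This covers the values $+\infty$ too, with the order on the extended reals. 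No continuity is needed.

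Write $H=\prod_{i=1}^n[a_i,b_i]$. I will prove the following by induction on $k=0,1,\dots,n$. For every $\vw\in H$ there is a point $\vw^{(k)}\in H$ whose first $k$ coordinates lie in $\{a_i,b_i\}$, whose remaining coordinates agree with $\vw$, and which satisfies $f(\vw)\le f(\vw^{(k)})$.

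The case $k=0$ is trivial. For the step from $k$ to $k+1$, fix all coordinates of $\vw^{(k)}$ except the $(k+1)$-st. This gives the restriction of $f$ to a segment parallel to the $(k+1)$-st axis, which lies inside $H\subset\mathcal{W}$. By the shape constraint this restriction is unimodal, so the one-dimensional fact lets us move the $(k+1)$-st coordinate to $a_{k+1}$ or $b_{k+1}$ without decreasing $f$. The earlier coordinates stay at their endpoint values.

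At $k=n$ the point $\vw^{(n)}$ is a vertex of $H$. Hence $f(\vw)\le\max_{\vv\in\mathcal{V}(H)}f(\vv)$ for every $\vw\in H$. Since $\mathcal{V}(H)$ is finite and contained in $H$, the global maximum over $H$ exists and equals this vertex maximum.

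For the consequence, assume \eqref{eq:vertices_box_condition}. Then $f\le 0$ on all of $H$. The bounding constraint gives a splitting function $g\in\mathcal{F}$ with $g\le f$ on $\mathcal{W}$, so $g\le 0$ on $H$.

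The only delicate point is the definition of ``unimodal'' for a restriction that may be monotone or take the value $+\infty$. I would treat the degenerate cases explicitly: a switching point at the left end gives a non-decreasing function, and one at the right end gives a non-increasing function. In both cases the endpoint argument above still applies.

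I would also check that every axis-parallel segment used in the induction stays in $\mathcal{W}$. This holds because $H\subset\mathcal{W}$ and $H$ is a box, so each such segment lies in $H$.
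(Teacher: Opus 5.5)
Your proof is correct and is essentially the paper's argument. The paper fixes $M=\max_{\vv\in\mathcal{V}(H)}f(\vv)$ and notes that the sublevel set $\{\vw\in H: f(\vw)\le M\}$ is orthogonally convex. It then proves by induction on dimension that an orthogonally convex set containing all vertices of $H$ contains $H$, first filling in the two opposite faces and then the axis-parallel segments joining them; unrolled for a fixed point, this is exactly your coordinate-by-coordinate pushing to endpoints. Your one-dimensional lemma (a unimodal function on a segment is bounded by its endpoint values) is the fact the paper uses without comment, namely that unimodality along axis-parallel lines makes the sublevel sets orthogonally convex.
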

\end{tcolorbox}
Leveraging these theoretical properties of the verification function, we construct a reward model to compute the lower bound of the Steiner ratio. We present the core algorithm below, deferring the full pseudocode and detailed explanations to Appendix~\ref{sec:appendix_reward_model_alg_detail}.



\noindent\textbf{Algorithm for Reward Model.} To determine the maximal valid lower bound, the reward model employs a binary search, using a feasibility oracle grounded in Theorem~\ref{thm:vertex_max}. This oracle recursively partitions the continuous parameter space $\mathcal{W}$ into hyperrectangles and certifies each region via the following steps:

(1) \emph{Region Partitioning.} We recursively partition the infinite parameter space into hyperrectangle regions. (2)\emph{Vertex Verification.} For bounded regions, we select a verification function $f$ and invoke the Vertex Maximum Property. By verifying $f(\vw)\le 0$ at all vertices, we mathematically guarantee that the inequality holds for every point $\vw$ within the region. (3) \emph{Monotonicity Check for Unbounded Regions.} For unbounded regions, we first verify that $f$ is non-increasing along those unbounded dimensions, ensuring the maximum value occurs at the finite lower boundary (a ``slice'' of the region). We then apply vertex verification to this bounded slice. Details of the monotonicity check are provided in Appendix~\ref{sec:appendix_mono_check_detail}. (4) \emph{Refinement via Subdivision.} Unverified regions are subdivided along one dimension into two halves, and recursively attempt verification.

The procedure terminates when either all regions are certified within the region number threshold, in which case the candidate $\rho$ is declared \emph{feasible}, or when the budget is exhausted with remaining uncertified regions, in which case $\rho$ is reported as \emph{infeasible}.

\subsection{LLM Agent}
\label{sec:llm_agent}
Identifying effective verification functions that yield tight lower bounds remains challenging, as it requires deep geometric intuition, creative lemma construction, and careful analysis of function properties. Human mathematicians may struggle to systematically search and explore potentially useful geometric lemmas for this task. To address this challenge, we develop an LLM agent to automate and scale this discovery process.

We demonstrate that finding a verification function reduces to identifying two types of simpler geometric lemmas: (1) the Trapped Regular Point Lemma and (2) the Valid 4-Point Steiner Tree Lemma. (This equivalence is illustrated in Figure~\ref{fig:interact} and formally proven in Theorem~\ref{thm:lemma_verification_function}, Appendix~\ref{sec:appendix_proof_thms_sec_3}). Within our framework, the LLM generates candidate lemmas in the form of structured code snippets. This approach significantly simplifies the generation process and facilitates rigorous verification. In the subsequent sections, we detail how we leverage the LLM to discover these lemmas.

\subsubsection{Finding Trapped Regular Point Lemma}
\label{sec:find_regular_point_lemma}

In this section, we detail the methodology for discovering Trapped Regular Point Lemmas using LLMs. We define two distinct categories of rules: Type A rules encode necessary properties that any Steiner Minimal Tree (SMT) must satisfy, while Type B rules specify sufficient conditions that guarantee the existence of an implicit regular point. Given the vast combinatorial space of rule interactions—which renders manual exploration intractable (see Appendix~\ref{sec:appendix_prompt} for details)—we employ an LLM integrated with local tools (e.g., Mathematica). By iteratively validating hypotheses, the LLM discovers effective rule combinations and synthesizes conditions that imply regularity. These conditions are subsequently simplified by the model to derive novel geometric lemmas.

\noindent\textbf{Theoretical Foundation.} Consider traversing a path on the Steiner tree. 
At each intermediate Steiner point, we consistently select the branch corresponding to a $60^\circ$ clockwise turn (``turning right''). This traversal traces a maximal chain of vertices $A_0, A_1, \ldots, A_n$, where $A_n$ is a regular point and $A_1, \dots, A_{n-1}$ are Steiner points.

\begin{wrapfigure}{r}{0.2\textwidth} 
    \vspace{-15pt}
    \includegraphics[width=\linewidth]{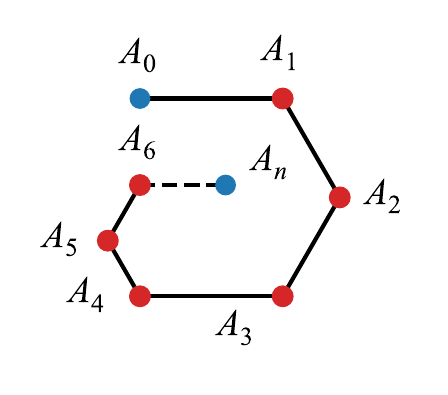} 
    \vspace{-25pt}
    \caption{\emph{The Steiner Spiral Chain structure}}
    \label{fig:A_B_condition}
    
\end{wrapfigure}
Due to the geometric constraints established in Lemma~\ref{lm:basic_property}, every internal angle $\angle A_{i-1} A_i A_{i+1}$ along this chain is exactly $120^\circ$. We abstract this configuration as the \textbf{Steiner Spiral Chain}, as illustrated in Figure~\ref{fig:A_B_condition}. For analysis, we denote the edge lengths as $a_i = |A_{i-1} A_{i}|$ for $i = 1, \ldots, n$, and normalize the system such that $a_1 = 1$. 

Crucially, this chain is non-intersecting and constitutes the Steiner Minimal Tree for the set of vertices $\{A_0, \ldots, A_n\}$. Our goal is to derive conditions under which $A_n$ is trapped inside a bounded polygon. We construct a discrete \textbf{Structured Reasoning Space} consisting of axiomatic templates. We formalize the constraints of the Steiner Spiral using two sets of predicates, denoted $\gC^{A}$ and $\gC^{B}$, which serve as the atomic actions for the LLM agent.




\begin{tcolorbox}[
  enhanced, breakable,
  colframe=magenta,             
  colback=magenta!5,            
  colbacktitle=magenta!15,      
  coltitle=black,               
  boxrule=0.75pt,               
  arc=1mm,               
  left=3mm, right=3mm,          
  top=1mm, bottom=1mm,          
  boxsep=1pt,                   
  title={\textbf{Type A/B Condition}}, 
  fonttitle=\small\sffamily\bfseries,
  attach boxed title to top left={xshift=4mm, yshift*=-\tcboxedtitleheight/2},
  boxed title style={
    sharp corners,
    boxrule=0.75pt,
    colframe=magenta,       
    colback=magenta!15,     
    top=1pt, bottom=0.5pt, left=1.5mm, right=1.5mm,
  },
  before skip=10pt, after skip=10pt
]
\begin{lemma}[Type A: Intrinsic Properties]
\label{lem:type_a}
For any $0 \le u < s \le v \le n$, the inequality $|A_u A_v| \ge a_s$ (denoted $C^{A}_{u,v,s}$) is always satisfied.
\end{lemma}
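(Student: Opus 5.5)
The plan is to use that the chain $A_0,\dots,A_n$ is a Steiner Minimal Tree for $\{A_0,\dots,A_n\}$, so every sub-path $A_u A_{u+1}\cdots A_v$ is a full Steiner tree for its vertex set. The argument is by contradiction, using the standard exchange argument behind minimality: if some chord $A_uA_v$ were strictly shorter than an edge $A_{s-1}A_s$ lying on the path between $A_u$ and $A_v$ (here $u\le s-1<s\le v$, which is exactly the range $u<s\le v$ in the statement), we could build a strictly shorter network spanning the same vertices.

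\textbf{Step 1 (removal).} Delete the edge $A_{s-1}A_s$ from the chain. Since the chain is a path, this splits it into two components: one containing $A_0,\dots,A_{s-1}$ (in particular $A_u$) and one containing $A_s,\dots,A_n$ (in particular $A_v$).

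\textbf{Step 2 (reconnection).} Add the straight segment $A_uA_v$. This reconnects the two components, so the result is a connected network spanning all of $A_0,\dots,A_n$. Its length is the original length minus $a_s$ plus $|A_uA_v|$.

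\textbf{Step 3 (contradiction).} If $|A_uA_v|<a_s$, the new network is strictly shorter than the chain. This contradicts the fact that the chain is the Steiner Minimal Tree for $\{A_0,\dots,A_n\}$. Hence $|A_uA_v|\ge a_s$ in all cases, which is $C^{A}_{u,v,s}$. The degenerate case $u=s-1$, $v=s$ gives equality $|A_{s-1}A_s|=a_s$ and is trivially consistent.

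The main obstacle is not the exchange itself but the justification of its premise: that the spiral chain is globally the Steiner Minimal Tree of its own vertex set, and not merely locally minimal with $120^\circ$ angles. I would derive this from the ambient setting. The chain is a subtree of the SMT of the full point set $V$, and a subtree of an SMT is an SMT for its own vertices (including the Steiner points treated as terminals). Otherwise, replacing it by a shorter network on those same vertices would shorten the whole tree while keeping it connected, contradicting minimality. The paper asserts this property before stating the lemma, so I would invoke it, noting that the exchange above is exactly this subtree-replacement argument specialized to a single edge swap.
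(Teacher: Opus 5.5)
Your proof is correct and follows essentially the same route as the paper: assume $|A_uA_v|<a_s$, delete the edge $A_{s-1}A_s$ (which lies on the path from $A_u$ to $A_v$), insert the segment $A_uA_v$, and contradict minimality. The paper does this swap directly in the full Steiner Minimal Tree instead of first showing that the chain is minimal for its own vertex set, but, as you note, the two are the same exchange argument.
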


\begin{lemma}[Type B: Sufficient Trapping Conditions]
\label{lem:type_b}
For $0 \le u < s \le v < \min(6, n)$, let $H$ be the orthogonal projection of $A_u$ onto line $A_v A_{v+1}$. We define the predicate $C^{B}_{u,v,s}$ as (1) $H$ lies on ray $A_v A_{v+1}$ and (2) $|A_u H| < a_s$. If $C^{B}_{u,v,s}$ holds, then $A_n$ is trapped inside the polygon $A_u \ldots A_v H A_u$.
\end{lemma}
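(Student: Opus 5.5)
The plan is to argue by a \emph{cut-and-reconnect} exchange that the remainder of the chain $A_{v+1},\dots,A_n$ can never meet the closing segment $HA_u$. Then a planarity argument shows it can never leave the polygon $P=A_u A_{u+1}\cdots A_v H A_u$.

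\textbf{Exchange step.} Suppose some point $X$ lies on an edge $A_wA_{w+1}$ with $w\ge v$ and satisfies $|A_uX|<a_s$. Delete the edge $A_{s-1}A_s$, which has length $a_s$. Because $u\le s-1$ and $s\le v\le w$, the chain splits into two components: $A_0,\dots,A_{s-1}$, which contains $A_u$, and $A_s,\dots,A_n$, which contains $X$. Now add the segment $A_uX$, inserting $X$ as a new Steiner point if needed. This reconnects all of $A_0,\dots,A_n$ with total length strictly smaller than before. But the paper records that the chain is a Steiner Minimal Tree for $\{A_0,\dots,A_n\}$, so this is a contradiction. Hence every point $X$ on the tail $A_vA_{v+1}\cdots A_n$ satisfies $|A_uX|\ge a_s$. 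This is the point-on-edge strengthening of Lemma~\ref{lem:type_a}, proved by the same argument.

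\textbf{Consequences.} Every point of the segment $A_uH$ is within distance $|A_uH|<a_s$ of $A_u$. By the exchange step, the tail never touches the segment $A_uH$.

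Next I locate $A_{v+1}$. Since $H$ lies on the ray $A_vA_{v+1}$, if $a_{v+1}\ge|A_vH|$ then $H$ itself would lie on the edge $A_vA_{v+1}$. That is excluded because $|A_uH|<a_s$. So $A_{v+1}$ lies strictly inside the segment $A_vH$, and $A_{v+1}$ sits on the boundary of $P$.

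At $A_{v+1}$ the chain turns $60^\circ$ clockwise. I will show that the clockwise side of the directed line $A_vA_{v+1}$ is the interior side of $P$. For this I use the orientation of $P$: its boundary is traversed as $A_u\to\cdots\to A_v\to H\to A_u$, with consistent right turns along the spiral, and $A_u$ lies on the clockwise side of that line because $H$ is the foot of the perpendicular from $A_u$. Granting this, the edge $A_{v+1}A_{v+2}$ immediately enters the interior of $P$.

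\textbf{Trapping.} The boundary of $P$ consists of three parts:
\begin{enumerate}
\item the chain edges $A_uA_{u+1},\dots,A_{v-1}A_v$, which the tail cannot cross because the chain is non-intersecting;
\item the segment $A_vH$, whose portion $A_vA_{v+1}$ is a chain edge, while the portion $A_{v+1}H$ would be hit only at points $X$ with $|A_uX|\le|A_uH|<a_s$ (since $H$ is the closest point of the line to $A_u$), which the exchange step forbids;
\item the segment $HA_u$, which is forbidden as shown above.
\end{enumerate}
So the connected polygonal path $A_{v+1}\cdots A_n$ starts into the interior of $P$ and never meets $\partial P$ again. Therefore $A_n$ lies inside $P$. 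The degenerate case $n=v+1$ is immediate, since then $A_n=A_{v+1}$ lies on the side $A_vH$ of $P$.

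\textbf{Main obstacle.} I expect the hardest step to be the orientation claim: that the clockwise turn at $A_{v+1}$ points into $P$ rather than out of it, and that $P$ is a simple polygon. I would first try to show that the partial spiral $A_u\cdots A_v$ together with the right-angle closure at $H$ has total turning $+360^\circ$ in the clockwise sense, using that each chain turn is exactly $60^\circ$ and $v-u<6$. If that becomes messy, the fallback is to reduce to $u,v<6$ and check the finitely many turning patterns directly.
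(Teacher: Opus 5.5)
Your exchange step is correct, and it does settle the closing segment $HA_u$ and the position of $A_{v+1}$ strictly between $A_v$ and $H$. The gap is in item~2 of your trapping argument.

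Since $H$ is the foot of the perpendicular from $A_u$ to the line $L=A_vA_{v+1}$, every point $X$ of $L$ satisfies $|A_uX|\ge|A_uH|$, not $\le$. In fact your own exchange step, applied to $X=A_{v+1}$, gives $|A_uA_{v+1}|\ge a_s$. So along $A_{v+1}H$ the distance to $A_u$ decreases from a value $\ge a_s$ down to $|A_uH|<a_s$. Minimality therefore protects only a sub-segment next to $H$. The piece adjacent to $A_{v+1}$ is protected neither by minimality nor by non-self-intersection, since it contains no chain edge. Nothing in your proposal stops the tail $A_{v+1}\cdots A_n$ from leaving $P$ through that piece, so the trapping conclusion does not follow.

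The missing idea is an argument based on the shape of the tail itself. The tail is a non-self-intersecting chain whose turns are all exactly $60^\circ$ clockwise, and you must show that it winds inward, so it never returns to $L$ on the $H$-side of $A_{v+1}$.

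On the first loop this is a direct computation. Write $b_j=a_{v+j+1}$ and measure signed distance along $L$ from $A_{v+1}$, positive toward $H$.
\begin{itemize}
\item The edges $A_{v+1}A_{v+2}$ and $A_{v+2}A_{v+3}$ move away from $L$.
\item The edge $A_{v+3}A_{v+4}$ is parallel to $L$.
\item The edge $A_{v+4}A_{v+5}$ can meet $L$ only at signed distance $-(b_2+b_3)$, i.e. on the $A_v$-side.
\item The edge $A_{v+5}A_{v+6}$ can meet $L$ only at signed distance $b_1-b_3-b_4$. When this is $\ge 0$, that edge has already crossed $A_{v+1}A_{v+2}$, at distance $b_1-b_3-b_4$ from $A_{v+1}$, which contradicts non-self-intersection.
\end{itemize}

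Later loops need an induction, and this is exactly what the paper's proof supplies. Lemma~\ref{lem:sublemma_for_type_b}, together with the enlarged right-turning hexagon $X_1\cdots X_6$ prepended to the tail, shows that each ray $X_{m-1}X_m$ meets $X_{m-6}X_{m-5}$ or $X_{m-5}X_{m-4}$. Hence every tail vertex stays in $\mathcal D_2$, which lies in the closed half-plane on the interior side of $L$, so the tail can never cross $A_{v+1}H$. Only the chain edges and $HA_u$ then remain to be excluded. Your arguments handle both, and your exchange step is in fact the right justification for $HA_u$, a point the paper passes over quickly.

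Compared with this missing inward-spiral step, the orientation issue you flag as the main obstacle is minor.
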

\end{tcolorbox}
These predicates correspond to algebraic inequalities on $\vw$ (e.g., $C^{B}_{0,2,1} \iff (a_2 \le 1) \land (\sqrt{3}/2 \cdot (1 + a_2) \le 1)$). Full derivations are in Appendix~\ref{sec:appendix_AB_condition}.

    
\begin{tcolorbox}[
  enhanced, breakable,
  colframe=blue!40!black, 
  boxrule=0.75pt,         
  arc=1mm,                
  title={\textbf{Trapped Regular Point Theorem}}, 
  coltitle=black, fonttitle=\small\sffamily\bfseries,
  colbacktitle=blue!15!white, 
  colback=blue!5!white,       
  boxed title style={
    sharp corners, boxrule=0pt,
    top=1pt, bottom=0.5pt, left=1.5mm, right=1.5mm,
    borderline={0.5pt}{0pt}{blue!40!black}
  },
  attach boxed title to top left={xshift=4mm,yshift*=-1.2mm},
  boxsep=1.5mm, top=1.5mm, bottom=1.5mm, left=1mm, right=1mm,
  before skip=10pt, after skip=10pt
]
\begin{theorem}
\label{thm:finding_regular_point}
    Let $\{C^{A}_{i}\}_{i=1}^{k} \subseteq \gC^{A}$ and $\{C^{B}_{j}\}_{j=1}^{l} \subseteq \gC^{B}$ be subsets of conditions. Suppose a linear constraint $C$ on $\vw$ satisfies the implication:
    $\textstyle C \land \left( \bigwedge_{i=1}^{k} C^{A}_{i} \right) \implies \bigvee_{j=1}^{l} C^{B}_{j}.$
    Then, if $\vw$ satisfies $C$, the regular point $A_n$ is trapped inside the corresponding polygonal region.
\end{theorem}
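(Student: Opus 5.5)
The statement is a direct logical composition of Lemma~\ref{lem:type_a} and Lemma~\ref{lem:type_b}. I will argue pointwise: fix any configuration $\vw$ of a Steiner Spiral Chain $A_0,\dots,A_n$ (normalized with $a_1=1$) that satisfies the linear constraint $C$, and show that $A_n$ is trapped.

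\textbf{Step 1: the Type A conditions hold automatically.} By Lemma~\ref{lem:type_a}, every predicate $C^{A}_{u,v,s}\in\gC^{A}$ is satisfied by every genuine Steiner Spiral Chain. This is because the chain is non-crossing and is itself the SMT of $\{A_0,\dots,A_n\}$, so the chord $|A_uA_v|$ cannot be shorter than any edge $a_s$ lying between $A_u$ and $A_v$. Hence, at our $\vw$, the conjunction $\bigwedge_{i=1}^{k}C^{A}_i$ holds. Together with $C(\vw)$, the assumed implication gives $\bigvee_{j=1}^{l}C^{B}_j(\vw)$.

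\textbf{Step 2: a Type B condition yields the trapping.} Choose some index $j$ with $C^{B}_j=C^{B}_{u,v,s}$ true at $\vw$. Lemma~\ref{lem:type_b} then says that $A_n$ lies inside the polygon $A_u\ldots A_vHA_u$. The ``corresponding polygonal region'' in the theorem is therefore the polygon attached to whichever $C^{B}_j$ fires. If one wants a single region valid for all $\vw$ satisfying $C$, one can take the union of the $l$ polygons.

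\textbf{Main obstacle.} The logic is immediate; the delicate point is interpretation. The implication must be read over the algebraic encodings of the predicates as inequalities in $\vw$ (Appendix~\ref{sec:appendix_AB_condition}), and those encodings must be exactly equivalent to the geometric predicates, or at least be sound in the right directions. Specifically, the Type A encoding must be implied by the geometry, and the Type B encoding must imply the geometric hypothesis of Lemma~\ref{lem:type_b}.

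I would check two things. First, the implication is stated for all real $\vw$ in the parameter space (including $a_1=1$ and the index range $v<\min(6,n)$), so it does apply to the specific chain at hand. Second, the predicates used only refer to indices that exist in the chain, i.e.\ at most $n$. Once these are confirmed, the chain of implications $C\Rightarrow\bigvee_j C^{B}_j\Rightarrow$ trapped completes the proof.
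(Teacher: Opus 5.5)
Your proposal is correct and follows the paper's proof, which simply says the theorem follows directly from Lemma~\ref{lem:type_a} (every selected Type A predicate holds on a genuine Steiner Spiral Chain) and Lemma~\ref{lem:type_b} (whichever Type B predicate the implication forces yields the trapping). Your extra caveats are sensible refinements that the paper's one-line proof leaves implicit: reading the implication pointwise over the tail lengths, checking that the referenced indices exist in the chain, and checking that the algebraic encodings are sound in the right directions.
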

\end{tcolorbox}
\noindent\textbf{LLM-Guided Lemma Generation.} This module operationalizes Theorem~\ref{thm:finding_regular_point} through the synthesis of valid constraints via an LLM agent. Here, we delineate the process of guiding the LLM to identify trapped regular point lemmas. Comprehensive details are available in Appendix~\ref{sec:appendix_prompt}.

$\bullet$ \emph{Preparation: Geometric Definition.} We begin by providing the LLM with the geometric definition of the coordinate system and the structure of the polygonal chain. 

$\bullet$ \emph{Constraint Selection.} Since reasoning over the complete sets $\mathcal{C}^{A}$ and $\mathcal{C}^{B}$ is computationally intractable, we supply the LLM with the full sets of Type A and Type B conditions, and task the LLM with selecting a concise subset of relevant conditions (typically 1--3 items). The goal is to identify constraints that collectively enforce the trapping condition.

$\bullet$ \emph{Derivation and Simplification.} The agent first employs Mathematica's \texttt{Reduce} function to compute the feasible region of $\vw$ that satisfies the implication $\bigwedge C^{A}_{i} \implies \bigvee C^{B}_{j}$. However, the raw region derived by Mathematica is often complex and non-linear, rendering it incompatible with our reward model. Consequently, we task the LLM with simplifying this region into explicit \textit{linear constraints}, followed by a verification step using Mathematica.

\includegraphics[width=\linewidth, trim={0cm 3cm 0cm 3cm}, clip]{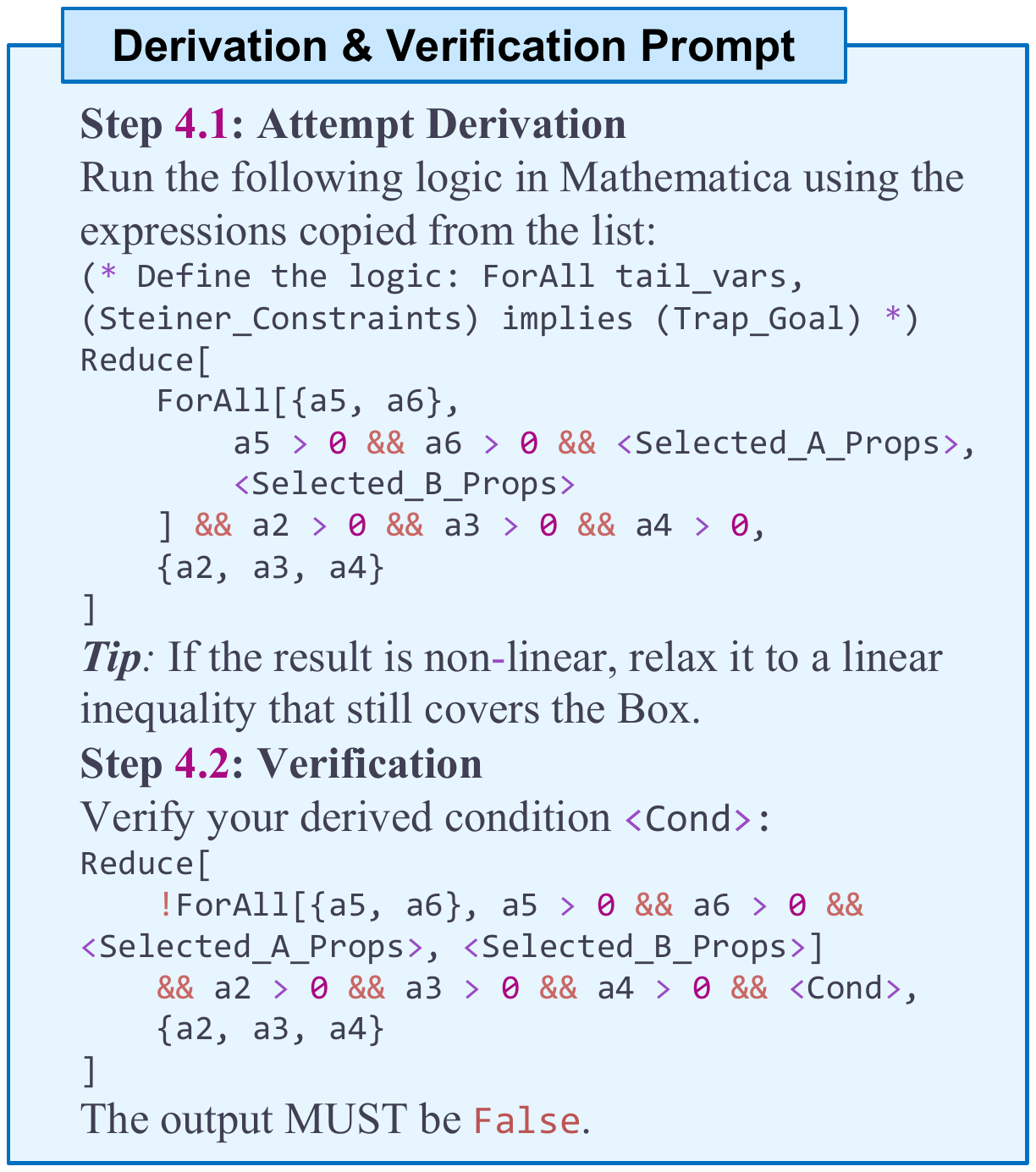}

$\bullet$ \emph{Coding and Output.} Finally, we instruct LLM to encapsulate the derived constraints into two code snippets: boolean function \texttt{X\_cond} checks if $\vw$ satisfies the derived linear constraints, while floating-point function \texttt{AX\_upper\_bound} computes the distance upper bound from $A_0$ to trapped $A_n$. Then they are transformed into verification functions and added to $\mathcal{F}_{\text{ver}}$.
We present an example of LLM-generated Trapped Regular Point Lemma, accompanied by its corresponding structured code snippet, below.

\includegraphics[width=\linewidth]{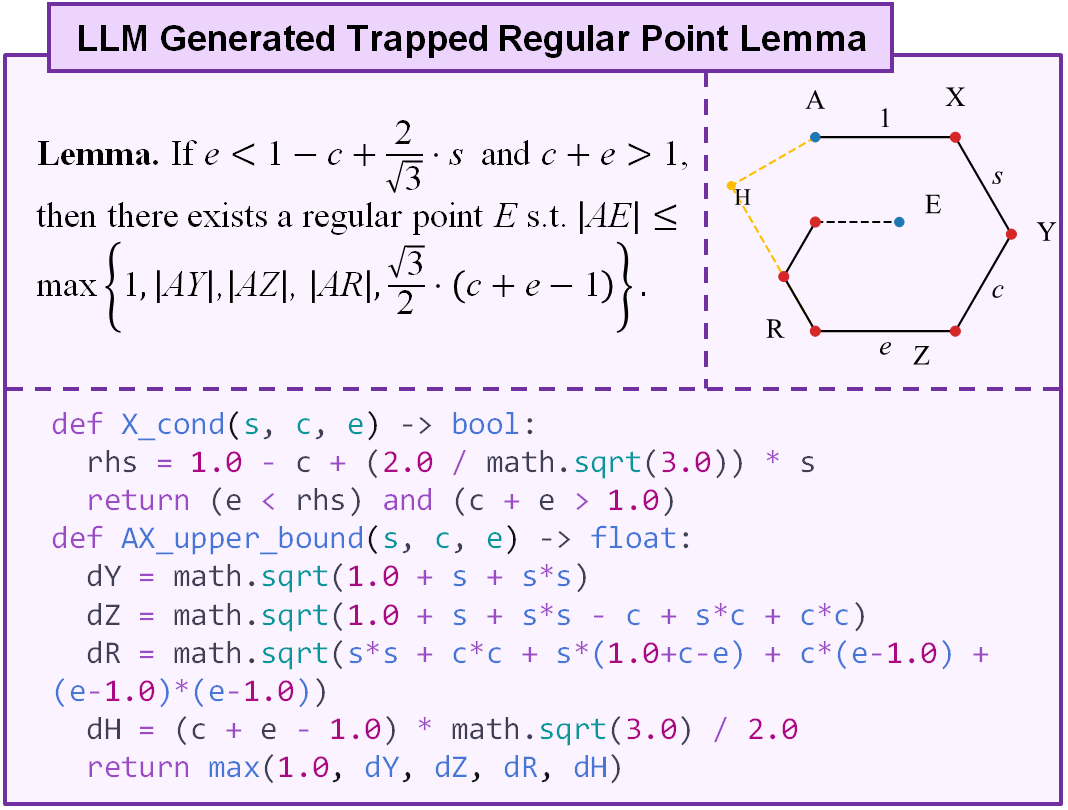}
\subsubsection{Finding Valid 4-Point Steiner Tree Lemma}

In this section, we focus on discovering valid 4-point Steiner tree lemmas. Our goal is to identify regions in the parameter space where a specific topology is guaranteed to exist and be computable. Analogous to Section~\ref{sec:find_regular_point_lemma}, our core strategy involves constructing a structured reasoning space, within which the LLM performs reasoning and deduction.

\noindent\textbf{Theoretical Foundation.} Consider four points $A$, $B$, $C$, and $D$ forming a convex quadrilateral in clockwise order. We focus on the $(AB)$-$(CD)$ topology, defined as follows:

\begin{tcolorbox}[
  enhanced, breakable,
  colframe=orange!75!white,  
  boxrule=0.35pt, arc=1mm,
  title={\textbf{Steiner Tree Type}}, 
  coltitle=black, fonttitle=\small\sffamily\bfseries,
  colbacktitle=orange!15!white, 
  colback=orange!5!white,       
  boxed title style={
    sharp corners, boxrule=0pt,
    top=1pt, bottom=0.5pt, left=1.5mm, right=1.5mm,
    borderline={0.5pt}{0pt}{orange!75!white} 
  },
  attach boxed title to top left={xshift=4mm,yshift*=-1.2mm},
  boxsep=1.5mm, top=1mm, bottom=1mm, left=1mm, right=1mm,
  before skip=10pt, after skip=10pt
]
\begin{definition}[Steiner Tree Type]
    \label{def:steiner_tree_type}
  The $(AB)$-$(CD)$ type Steiner tree consists of two Steiner points $S$ and $T$, and five edges: $(A,S), (B,S), (S,T), (T,C), (T,D)$. The edges incident to each Steiner point meet at $120^\circ$ angles, as illustrated in Figure~\ref{fig:steiner_type}.
\end{definition}

\end{tcolorbox}

\begin{wrapfigure}{r}{0.2\textwidth} 
    \centering
    \vspace{-10pt}
    \includegraphics[width=0.8\linewidth]{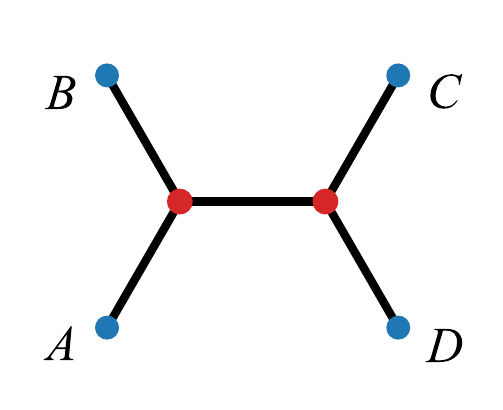} 
    \vspace{-10pt}
    \caption{\emph{$(AB)$-$(CD)$ type Steiner tree.}}
    \label{fig:steiner_type}
    
\end{wrapfigure}

We adopt the Valid 4-Point Steiner Tree Theorem from \citet{du1987steiner} to determine the existence of the $(AB)$-$(CD)$ topology. The theorem's details are provided in Appendix~\ref{app:4-point} as Theorem~\ref{thm:4point_validity}. The LLM agent's task is to translate the geometric conditions of the theorem into algebraic constraints on the parameter space $\vw$, simplifying them to satisfy the orthogonally convex property required by our reward model for subsequent conversion into verification functions.

\noindent\textbf{LLM-Guided Lemma Generation.} 
The task of the LLM agent is to transform the existence conditions of the Steiner tree involving four specific points into a form that can be used for verification functions. The process proceeds as follows: 

 $\bullet$ \emph{Translation:} Given a specific set of four terminals, the LLM translates the geometric conditions of Theorem~\ref{thm:4point_validity} into algebraic constraints on the parameters $\vw$.

$\bullet$ \emph{Simplification:} Following the translation step, the resulting raw algebraic constraints are often complex and non-convex. To address this, the LLM performs a heuristic simplification to identify a tractable constraint set that implies the original complex conditions.

$\bullet$ \emph{Verification:} Finally, the LLM must verify that the resulting condition is \textit{orthogonally convex} and implies the validity conditions. Upon validation, new verification functions with $S^+$ being the specific topology will be added to $\gF_{\text{ver}}$.

$\bullet$ \emph{Coding and Output.} The LLM is instructed to encapsulate the derived constraints into two code snippets: boolean function \texttt{steiner\_cond} checks if $\vw$ satisfies the derived constraints, while floating-point function \texttt{steiner\_length} computes the 4-point Steiner tree length. Then they are transformed into verification functions and added to $\mathcal{F}_{\text{ver}}$.
We present an example of an LLM-generated 4-Point Steiner Tree Lemma, accompanied by its corresponding structured code snippet, below.

\includegraphics[width=\linewidth]{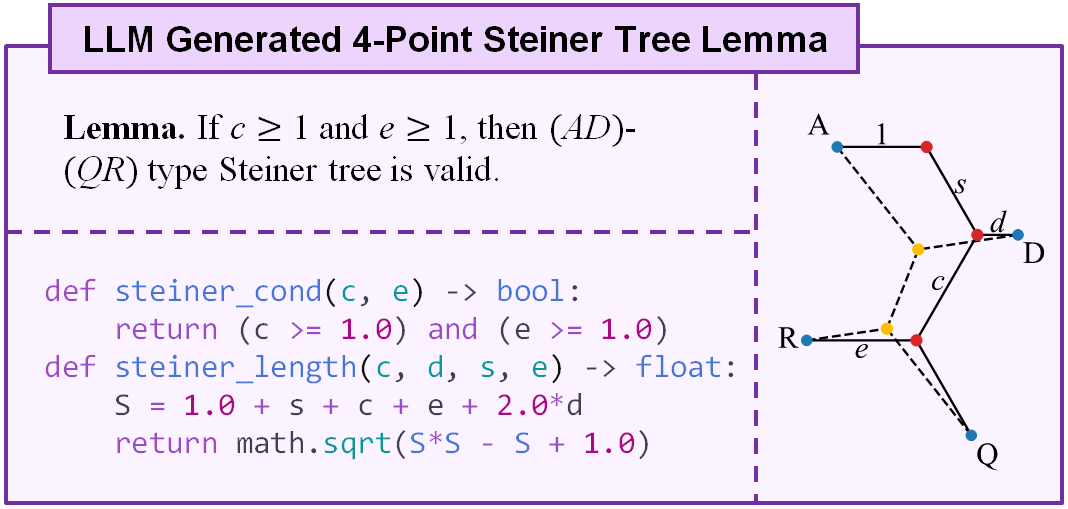}

\subsection{Iterative Process}
\label{iterative_process}
In this subsection, we describe the iterative closed-loop interaction between the LLM agent and the reward model. While self-reflection mechanisms have proven pivotal in solving contest-level math problems \citep{jaech2024openai,singh2025openai,lai2025mini}---enabling models to review generation history, detect errors, and iteratively refine accuracy---they are not directly applicable here. In our system, the objective is not to solve the Gilbert-Pollak conjecture directly, but to synthesize intermediate lemmas. Consequently, generic reflection strategies are insufficient. Instead, we design a targeted supervisory signal that explicitly identifies areas for improvement, guiding the LLM to overcome specific bottlenecks during the refinement process.

Once the LLM generates lemmas, they are converted into verification functions to compute the current lower bound. To iteratively tighten this bound, the reward model attempts to certify an incremented candidate $\rho+\delta$. If the verification process fails, leaving a set of uncertified subregions $\{B_1, \dots, B_k\}$, the model performs a geometric abstraction by computing the minimal axis-aligned hyperrectangle $B^\star$ that encloses them. We refer to $B^\star$ as the bottleneck region. This region is converted into a structured prompt by explicitly providing the LLM with the coordinate intervals of each dimension in $B^\star$. Consequently, the LLM is instructed to generate lemmas specifically targeting this region, thereby enabling the construction of new verification functions to close the gap. We now formally describe the entire iterative process incorporating this bottleneck reflection mechanism.

$\bullet$  \textbf{Evaluate.} The reward model evaluates the feasible lower bound using the current set of verification functions.

$\bullet$ \textbf{Reflection.} A bottleneck region is extracted from the verification process and fed back to the LLM agent.

$\bullet$ \textbf{Propose.} Guided by the current bottleneck region, the LLM agent proposes new lemmas targeting the limiting part of the parameter space to overcome the current optimization bottleneck.

$\bullet$ \textbf{Translate.} After the proposed lemmas are manually validated for correctness, they are translated into verification functions that can be consumed by the reward model, resulting in an augmented set of verification functions.

\begin{figure}[h]
    \centering
    \includegraphics[width=\linewidth]{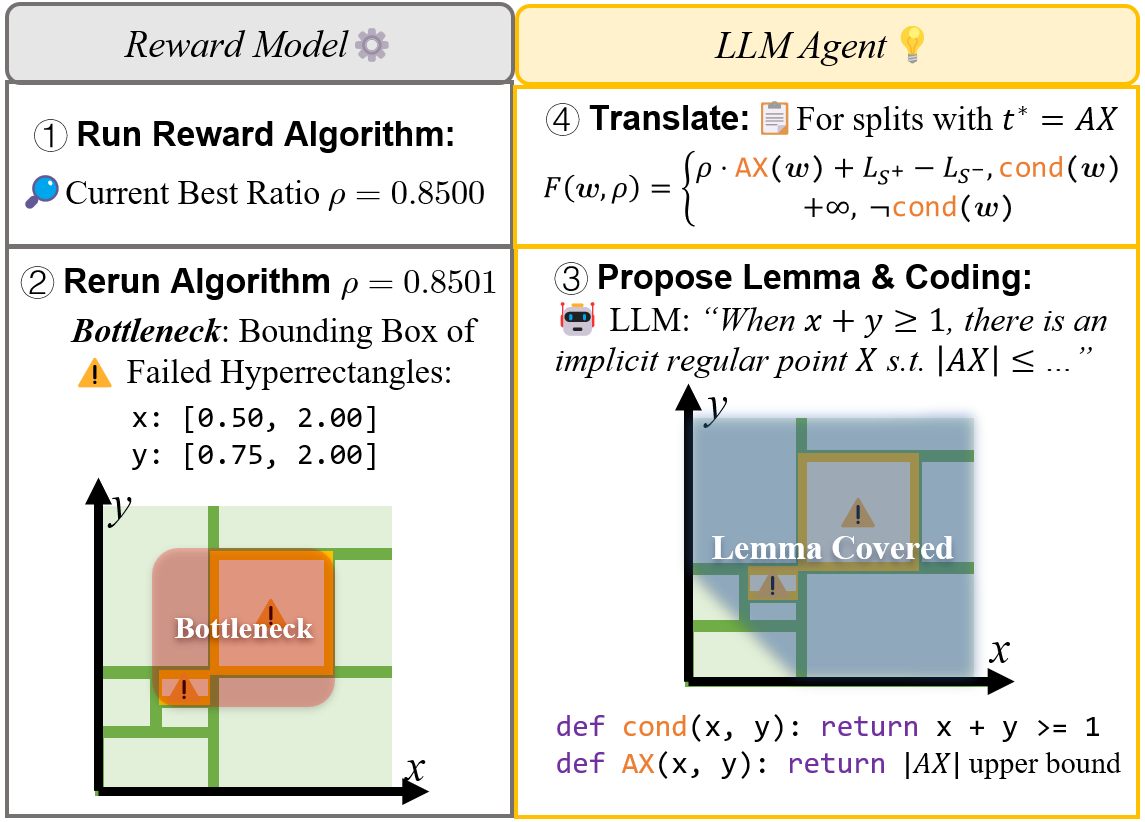}
    \caption{Interaction Between Reward Model \& LLM Agent}
    \label{fig:interact}
\end{figure}

\noindent\textbf{Verification of the Proof Correctness.} We ensure the reliability of our results through a dual-verification strategy. First, the LLM acts solely as a reasoning engine, offloading all validation tasks to Mathematica to perform exact computations. This significantly reduces hallucinations in the generated lemmas. Second, we performed independent manual verification of the generated proofs to confirm that the overall logical argument is sound. Consequently, the final lower bound is a standard mathematical proof that stands independent of the generative AI pipeline.

\section{Experiments and Results}

We evaluate our system along several key dimensions: lemma quality, robustness across different LLM backbones, the role of reflection in iterative reasoning, and the final improvement of the Steiner ratio lower bound.

\textbf{Lemma Quality.} Across approximately $100$ rounds of experiments, more than $80\%$ of LLM-proposed lemmas are effective (both mathematically correct and helpful in improving the lower bound for the current step) when instantiated as verification functions. This high success rate highlights the advantages of operating within \textit{narrow reasoning spaces} and using a \textit{localized, lemma-driven approach}, rather than attempting end-to-end optimization over the full parameter space.

\textbf{Robustness Across LLM Backbones.} We conducted experiments using GPT-5 and Gemini 3 Pro. In all cases, the LLMs successfully proposed valid lemmas, and after roughly a dozen iterative rounds, the system consistently converged to a Steiner ratio lower bound of approximately $0.8559$. These results demonstrate that the framework is robust and largely insensitive to the choice of LLM backbone.

\textbf{Reflection Ablation.} To evaluate the importance of the reflection mechanism, we performed ablation experiments where the reflection bottleneck was removed. In this setting, the system failed to improve the lower bound over 10 iterative rounds, confirming that reflection is essential for guiding lemma discovery and enabling improvement.
\begin{figure}[htbp]
    \centering
    \includegraphics[width=1.0\linewidth]{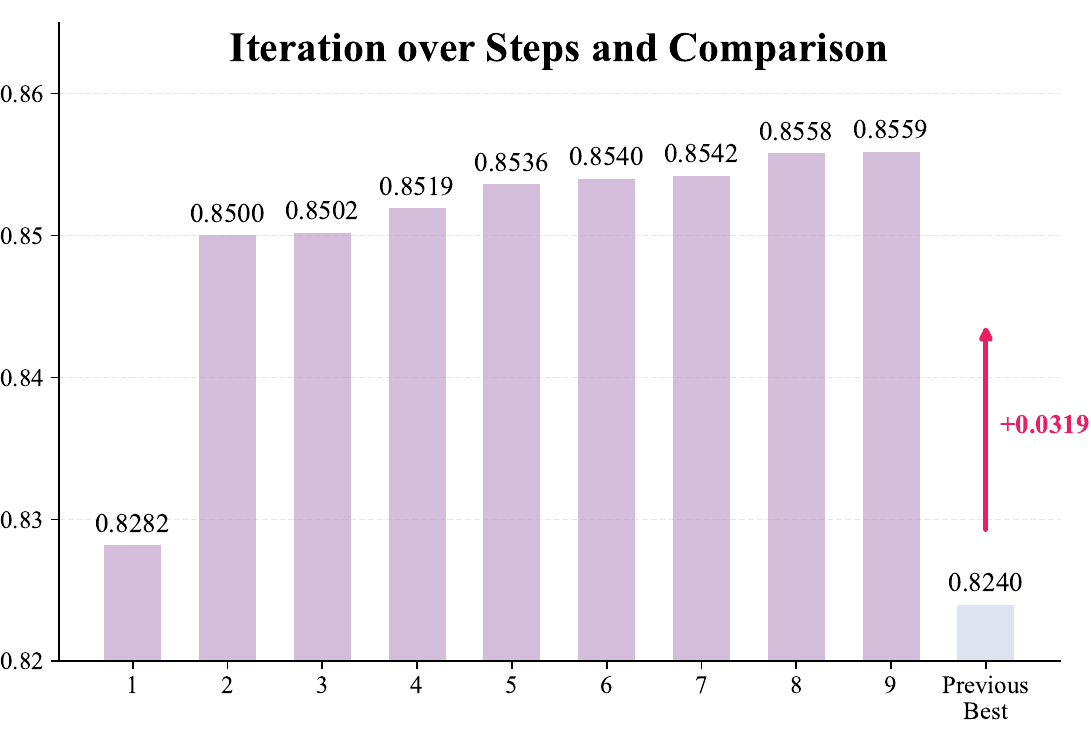}
    \vspace{-15pt}
    \caption{\emph{Illustration of the ratio change throughout the iteration over steps and comparison with best results of previous works.}}
    \label{fig:sota_comparison}
\end{figure}

\textbf{Typical Trajectory and Final Result.} Figure~\ref{fig:sota_comparison} summarize a representative trajectory of the system. Through iterative interaction between the LLM agent and the reward model, the lower bound progressively improves and stabilizes at $0.8559$. Importantly, this bound has been rigorously verified across the entire parameter space, and all lemmas proposed along this trajectory have been manually checked for correctness, ensuring the validity of the overall proof independent of the specific generative AI pipeline. Across this trajectory, the LLM consumed approximately $0.5$M tokens ($\sim 35$K per round) and $4.6$ hours of reasoning, while computations of the reward model required roughly $11.7$ hours. The detailed content of each step of improvement can be found in the Appendix~\ref{app:full_proof}.

\section{Conclusion}

We present an LLM-based system designed to advance the Gilbert-Pollak Conjecture. By guiding the LLM to generate structured geometric lemmas and instantiating them as verification functions, the system iteratively refines its reasoning in a localized search space, producing a significant improvement over the previous lower bound for the Steiner ratio. Our experiments demonstrate that this approach is robust across different LLM backbones and relies on only a modest number of model calls, highlighting its potential as a scalable paradigm for leveraging LLMs in math research.

Two design principles in our system, we believe, transfer to other mathematical-discovery problems.

\textbf{Construct a structured reasoning space.} Rather than tasking the LLM with end-to-end proofs, we decompose the search into a compositional space of parameterized lemmas. The LLM proposes and simplifies candidates; symbolic tools verify each one. This division of labor plays to the LLM's strength in creative search while delegating formal correctness to tools that cannot hallucinate.

\textbf{Reflect on localized verifier failures.} We summarize each verifier failure into a structured reflection signal---a minimal sub-region of the search space where the current set of lemmas is insufficient---and feed it back to the LLM. This search--verify--reflect loop should generalize to any setting that admits an exact verifier and whose failures can be localized to a bounded sub-problem.


\section{Limitations and Future Work}
We highlight two limitations of the current framework, each pointing to a concrete direction for future work.

\textbf{Symbolic verification does not scale gracefully.} Our verifier relies on Cylindrical Algebraic Decomposition (CAD), whose cost grows rapidly with the number of variables and the algebraic degree of the constraints. As a consequence, we adopt a conservative policy in which high-complexity 4-point lemmas are excluded from regimes with unbounded parameters, since CAD over variable parameters becomes intractable in those cases. Extending the inductive structure to more edges and points---or moving to 5-point or larger Steiner-tree topologies---would amplify this bottleneck. Lighter-weight certificates (e.g., interval arithmetic, sum-of-squares relaxations, or learned proof sketches that invoke CAD only on hard sub-regions) are a natural next step.

\textbf{Human verification remains in the loop.} Although the final lower bound is mathematically certified by symbolic tools, every LLM-proposed lemma is manually checked for correctness before being instantiated as a verification function. The framework is therefore not fully autonomous. Reducing this dependence---via formal proof assistants such as Lean, or via stronger automated lemma checkers---is required for end-to-end autonomous mathematical discovery.

\section*{Acknowledgements}
LW is supported by National Science Foundation of China (NSFC92470123, NSFC62276005) and the State Key Laboratory of General Artificial Intelligence.

\section*{Impact Statement}
This paper advances the use of machine learning for open mathematical research. The cost of our experiments is modest---on the order of a few hundred dollars per discovery---suggesting that LLM-assisted theoretical research is accessible to researchers without large compute budgets. We note two broader considerations. First, as autonomous reasoning agents are scaled to many problems in parallel, their aggregate compute and energy footprint becomes non-trivial and deserves explicit accounting. Second, the paradigm of LLMs proposing structured artifacts that are then checked by exact verifiers is dual-use across any domain with a strong verifier (formal theorem proving, software verification, program synthesis); we encourage future work to remain transparent about which steps are mechanically verified and which still rely on human judgment.


\bibliography{example_paper}
\bibliographystyle{icml2026}

\newpage

\appendix
\onecolumn

\renewcommand{\thepart}{}
\renewcommand{\partname}{}

\part{Appendix}

\setcounter{parttocdepth}{3}
\mtcsettitle{parttoc}{Contents} 
\mtcsetrules{parttoc}{off} 

\parttoc





\newpage
\section{Related Works}

\noindent\textbf{The Gilbert-Pollak Conjecture.} Originally proposed by \citet{gilbert1968steiner}, this conjecture posits that the Steiner ratio is $\rho = \sqrt{3}/2$. Beyond its theoretical importance in combinatorial optimization, the conjecture holds significant implications for \textbf{Very Large Scale Integration (VLSI)} design and \textbf{network optimization} \citep{kahng1994optimal,du2008steiner}. While Gilbert and Pollak verified the conjecture for $n=3$, subsequent work confirmed it for specific cases: $n=4$ by \citet{pollak1978some}, $n=5$ by \citet{du1985steiner}, and $n=6$ by \citet{rubenstein1991steiner}. regarding the general lower bound for any $n$, significant efforts have pushed the value from an initial $0.5$ to $0.57$ \citep{graham1976remarks}, $0.74$ \citep{chung1978lower}, $0.8$ \citep{du1983new}, and finally to $0.824$ \citep{chung1985new}. However, improving this bound further has remained an open challenge since 1985 due to the prohibitive complexity of the search space.


\noindent\textbf{Solving Math Problems with LLMs.} Recent advancements in LLMs have demonstrated remarkable potential in mathematical reasoning, extending from standard benchmarks to competition-level challenges. In specialized domains, \citet{trinh2024solving} introduced AlphaGeometry, a neuro-symbolic system capable of solving IMO geometry problems without human demonstrations. To tackle broader formal reasoning, HyperTree Proof Search \citep{lample2022hypertree} pioneered the integration of Transformer-based policies with Monte-Carlo Tree Search (MCTS). Building on this direction, recent works leverage Reinforcement Learning (RL) within formal environments: AlphaProof \citep{deepmind2024alphaproof} achieves silver-medal standards in Lean, while DeepSeek-Prover \citep{xin2024deepseek,xin2024qiushi} employs RL to establish state-of-the-art performance in open-ended theorem proving. To enhance reasoning reliability, InternLM-Math \citep{ying2024internlm} leverages process supervision to verify intermediate steps. On the infrastructure front, \citet{yang2023leandojo} developed LeanDojo, an open environment enabling large-scale training of these systems. Most recently, Goedel-Prover \citep{lin2025goedel,lin2025goedel2} was introduced as a frontier model, achieving top performance among open-weight models on the MiniF2F benchmark.

\noindent\textbf{AI for Scientific Discovery.} Artificial Intelligence has catalyzed paradigm shifts across various scientific disciplines. In biology, AlphaFold \citep{jumper2021highly,jumper2020alphafold,krokidis2025alphafold3} revolutionized structural biology by predicting protein structures with atomic accuracy. In materials science, \citet{merchant2023scaling} leveraged graph neural networks to discover 2.2 million new crystals (GNoME), dramatically expanding the scale of stable materials known to humanity. Similarly, GraphCast \citep{lam2023learning} demonstrated that AI can model complex physical dynamics to outperform traditional weather forecasting. Moving towards algorithmic discovery, AlphaTensor \citep{fawzi2022discovering} utilized reinforcement learning to discover matrix multiplication algorithms surpassing human-designed heuristics. More recently, the focus has shifted towards fully automated research frameworks; for instance, \citet{lu2024ai,yamada2025ai} proposed ``The AI Scientist'', where LLM-based agents independently generate ideas, execute experiments, and write manuscripts. Following this trajectory, we extend the exploratory capabilities of AI to the domain of Theoretical Mathematics, specifically targeting combinatorial optimization problems.

\newpage

\section{Mathematical Preliminaries}
\subsection{Background on the Gilbert-Pollak Conjecture}
\label{app:background}

\subsubsection{Basic Properties of Steiner Minimal Tree}
\label{app:properties}
Let $V$ be a finite set of points in the Euclidean plane. We denote the lengths of the Steiner Minimal Tree (SMT) and the Minimum Spanning Tree (MST) on $V$ as $L_S(V)$ and $L_m(V)$, respectively. For an arbitrary geometric graph $G$, let $L_G$ denote the sum of the Euclidean lengths of its edges. 
We first recall fundamental geometric properties of SMTs.
\begin{tcolorbox}[
  enhanced, breakable,
  colframe=magenta,             
  colback=magenta!5,            
  colbacktitle=magenta!15,      
  coltitle=black,               
  boxrule=0.75pt,               
  arc=1mm,               
  left=3mm, right=3mm,          
  top=1mm, bottom=1mm,          
  boxsep=1pt,                   
  title={\textbf{Properties of Steiner Minimal Trees}}, 
  fonttitle=\small\sffamily\bfseries,
  attach boxed title to top left={xshift=4mm, yshift*=-\tcboxedtitleheight/2},
  boxed title style={
    sharp corners,
    boxrule=0.75pt,
    colframe=magenta,       
    colback=magenta!15,     
    top=1pt, bottom=0.5pt, left=4mm, right=4mm,
  },
  before skip=10pt, after skip=10pt
]
\begin{lemma}[\citet{gilbert1968steiner}]
\label{lm:basic_property}
Let $S$ be a Steiner Minimal Tree for a set $V$. The following properties hold:
    \begin{enumerate}[topsep=0pt,leftmargin=10pt]\setlength{\itemsep}{0pt}
        \item Every Steiner point in $S$ has a degree of exactly $3$, and the three incident edges meet at $120^\circ$ angles.
        \item If $|V| = n$, then $S$ contains at most $n - 2$ Steiner points. A tree with exactly $n - 2$ Steiner points is called a \textit{Full Steiner Tree} (FST).
        \item Any SMT can be decomposed into a union of edge-disjoint Full Steiner Trees. Consequently, it suffices to consider only Full Steiner Trees when proving lower bounds on the Steiner ratio.
    \end{enumerate}
\end{lemma}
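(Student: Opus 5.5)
We prove the three items of Lemma~\ref{lm:basic_property} in order. Each uses a local perturbation argument: a minimal tree cannot be shortened by any small modification.

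\emph{Item 1 (degree and angles).} Fix a Steiner point $P$ of an SMT $S$.
\begin{enumerate}[topsep=0pt,leftmargin=15pt]\setlength{\itemsep}{0pt}
\item If $\deg P=1$, deleting $P$ and its edge shortens $S$, a contradiction.
\item If $\deg P=2$, replacing the path $Q P R$ by the segment $QR$ does not increase length, by the triangle inequality. The inequality is strict unless $P$ lies on $QR$, in which case $P$ is not a genuine Steiner point. So we may assume $\deg P\ge 3$.
\item Next we show that any two edges $PQ,PR$ at $P$ make an angle of at least $120^\circ$. Suppose $\angle QPR<120^\circ$. Replace the edges $PQ,PR$ by a short segment $PX$ together with $XQ,XR$, where $X$ lies near $P$ along the bisector. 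The first-order change in length is $\epsilon\,(1-\cos\alpha-\cos\beta)$, with $\alpha,\beta$ the angles between $PX$ and $PQ$, $PR$. This is negative when $\alpha+\beta<120^\circ$ and $\alpha=\beta$. (Equivalently, the Fermat point of the triangle $PQR$ is interior.) The new tree is shorter, a contradiction.
\item At most three angles of size at least $120^\circ$ fit around $P$, so $\deg P=3$.
\item Three such angles sum to $360^\circ$, so each is exactly $120^\circ$.
\end{enumerate}

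\emph{Item 2 (counting).} Let $S$ have $n$ regular points and $s$ Steiner points. Then $S$ has $n+s-1$ edges, so the degrees sum to $2(n+s-1)$. Every Steiner point has degree $3$ and every regular point has degree at least $1$. Hence $3s+n\le 2(n+s-1)$, which gives $s\le n-2$. Equality holds exactly when every regular point is a leaf; this is the full Steiner tree case.

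\emph{Item 3 (decomposition).} Split $S$ at every regular point of degree at least $2$. The resulting pieces are edge-disjoint subtrees. In each piece every regular point is a leaf, so each piece is a full Steiner tree on its own terminals. Each piece is also an SMT for its terminals, since otherwise replacing it would shorten $S$.

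For the reduction to FSTs, suppose $S=\bigcup_i S_i$ with $S_i$ an FST on terminal set $V_i$, and suppose $L_{S_i}\ge\rho\,L_m(V_i)$ for each $i$. The MSTs of the $V_i$ glue along shared terminals into a connected spanning graph on $V$, because the $V_i$ form a tree-like cover of $V$. Therefore
\[
L_m(V)\le\sum_i L_m(V_i)\le \rho^{-1}\sum_i L_{S_i}=\rho^{-1}L_S(V).
\]

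\emph{Main obstacle.} The delicate step is the angle argument in Item 1. It needs an explicit first-order computation, or an appeal to the Fermat--Torricelli point, showing that an angle below $120^\circ$ admits a strictly shortening perturbation. Degenerate cases such as collinear points, or $X$ coinciding with an existing vertex, must be handled separately. The rest is counting and gluing.
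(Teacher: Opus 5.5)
Your proof is correct and uses the standard argument. It has three parts: the first-order perturbation along the bisector for the $120^\circ$ condition, the degree count $3s+n\le 2(n+s-1)$ giving $s\le n-2$, and splitting at regular points of degree at least $2$ together with $L_m(V)\le\sum_i L_m(V_i)$. The paper gives no proof of its own here and simply cites Gilbert and Pollak (1968), whose classical argument you reproduce; since each full component has strictly fewer terminals than $V$ when $S$ is not full, your reduction also fits the paper's induction on $|V|$.
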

\end{tcolorbox}
Based on Lemma~\ref{lm:basic_property}, we restrict our analysis to Full Steiner Trees without loss of generality. A defining topological feature of a Full Steiner Tree is that all regular points (terminals) are leaves with a degree of $1$. We use \textbf{induction} to establish the lower bound. The base case where $|V|\le 4$ is shown by \citet{pollak1978some}. Thus, we may assume $|V|\ge 5$.

Let $V$ be the set of regular points, and let $S$ be $\mathrm{SMT}(V)$. To complete the induction step, we focus on a terminal substructure of $S$. Consider a leaf $A$ of maximum depth (choosing an arbitrary point as root) in $S$. Let $X$ be the parent of $A$. Since $A$ is at maximum depth, the sibling of $A$ must also be a terminal, denoted as $B$. Let $Y$ be the parent of $X$. The sibling component $D$ incident to $Y$ cannot have a height exceeding that of $X$; consequently, $D$ is necessarily either a single terminal or a Steiner point connected to two terminals.

This results in the two cases illustrated in Figure~\ref{fig:subtree}: in the left case, $D$ is a single regular point; in the right case, $D$ is a Steiner point connecting two regular points, $U$ and $V$. The notations $(R)$ and $(Q)$ denote the residual subtrees connected to the main structure at nodes $R$ and $Q$. The labels on the edges denote their Euclidean lengths. Because the Steiner ratio is scale-invariant, we normalize the scale of the tree such that the edge $AX$ has length $1$ ($|AX|=1$).

\begin{figure}[h] 

    \centering

    \vspace{-10pt}

    \includegraphics[width=0.6\linewidth]{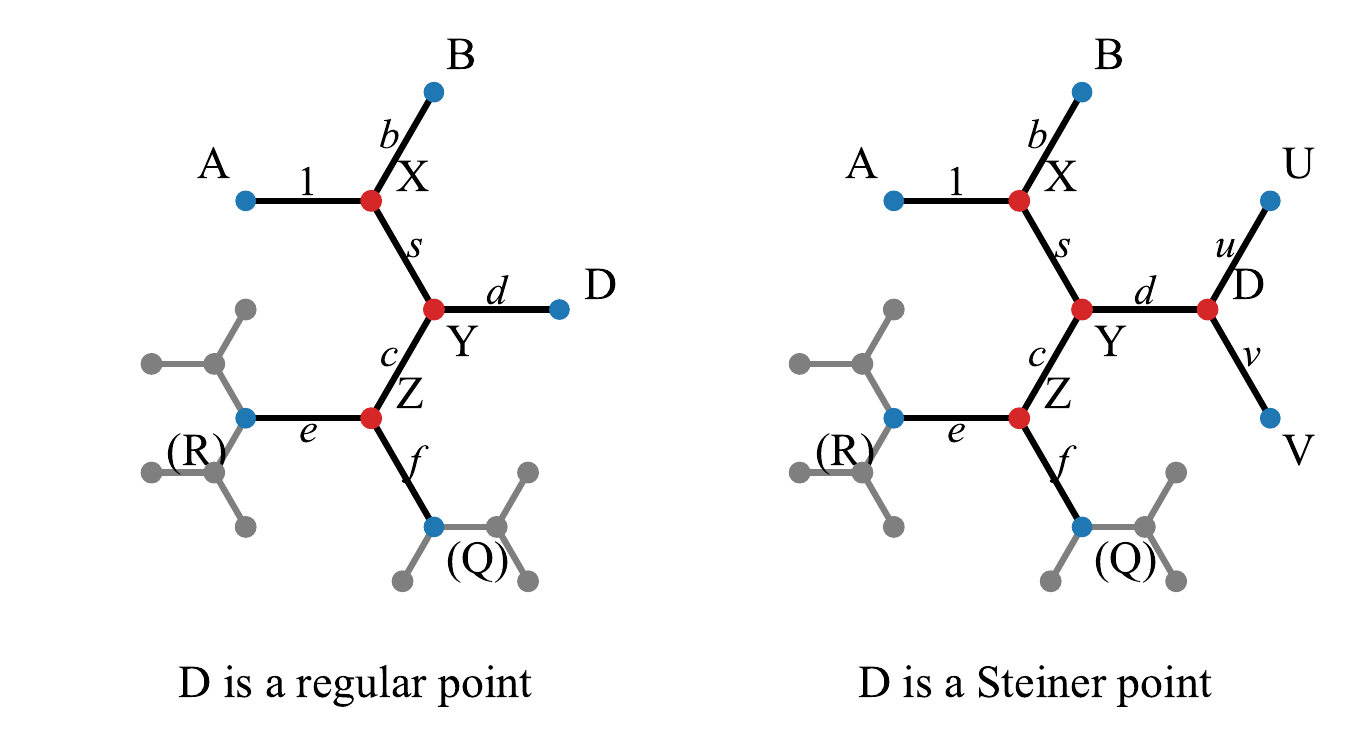}

    \caption{Local substructures of a Steiner Minimal Tree.}

    \label{fig:subtree}

\end{figure}

\begin{definition}\label{def:parameter_space}
    Given a local structure in Figure~\ref{fig:subtree}, let $n$ be the number of edges excluding the normalized edge $AX$. The \textbf{parameter space} $\mathcal W$ is defined as $[0, +\infty)^n$, where the $i$-th entry represents the length of the $i$-th edge. (Therefore $\vw \in \mathcal W$ is something like $(b, c, d, s, \cdots)$.)
\end{definition}

Since we only consider full Steiner trees, for any $\vw\in\mathcal{W}$, the positions of all points in the local structure are uniquely determined and depend affinely on $\vw$.

\subsubsection{Core Lemmas} 
\label{app:induction}

In this section, we provide a detailed formulation of the induction argument underlying Theorem~\ref{thm:equivalent_formulation}. We first formalize the notion of a \emph{splitting}, which was introduced informally in Section~\ref{sec:framework}. We then state the reduction argument of \citet{du1983new} in terms of this notion.

\begin{definition}
    \label{def:splitting}
    Given a Steiner tree $S$ for a set of points $V$, a tuple $\tau=(V^*,S^-, S^+, t^*)$ is called a \textbf{splitting} if: \begin{enumerate}[topsep=0pt]
    \setlength{\itemsep}{0pt}
        \item $V^*\subsetneq V$ is a non-empty subset of $V$.
        \item $S^-$ is a subset of the edges of $S$. Let $S_{rem}=S\setminus S^-$. No point in $V^*$ is incident to $S_{rem}$.
        \item $S^+$ is a graph (may contain auxiliary vertices not in $S$) such that $S^+\cup S_{rem}$ forms a valid Steiner tree for $V\setminus V^*$.
        \item $t^*$ is a forest interconnecting $V^*$ and $V\setminus V^*$ using only vertices in $V$. It must satisfy: (a) For every $x \in V^*$, there exists a path in $t^*$ connecting $x$ to some $y \in V \setminus V^*$. (b) There are no paths in $t^*$ connecting any two distinct points $y_1, y_2 \in V \setminus V^*$.
    \end{enumerate}
\end{definition}

\begin{tcolorbox}[
  enhanced, breakable,
  colframe=magenta,             
  colback=magenta!5,            
  colbacktitle=magenta!15,      
  coltitle=black,               
  boxrule=0.75pt,               
  arc=1mm,               
  left=3mm, right=3mm,          
  top=1mm, bottom=1mm,          
  boxsep=1pt,                   
  title={\textbf{\citet{du1983new} Lemma 3}}, 
  fonttitle=\small\sffamily\bfseries,
  attach boxed title to top left={xshift=4mm, yshift*=-\tcboxedtitleheight/2},
  boxed title style={
    sharp corners,
    boxrule=0.75pt,
    colframe=magenta,       
    colback=magenta!15,     
    top=1pt, bottom=0.5pt, left=4mm, right=4mm,
  },
  before skip=10pt, after skip=10pt
]
\begin{lemma}[\citet{du1983new} Lemma 3]
\label{lm:split_function}
  Let $S = \mathrm{SMT}(V)$ be the Steiner Minimal Tree for a set $V$ of $n$ points. Suppose that for all point sets with fewer than $n$ points, the Steiner ratio is at least $\rho'$. If there exists a splitting $\tau = (V^*, S^-, S^+, t^*)$ satisfying $(L_{S^-} - L_{S^+})/L_{t^*} \ge \rho'$, then the Steiner ratio for $V$ also satisfies $L_S(V) / L_m(V) \ge \rho'$.
\end{lemma}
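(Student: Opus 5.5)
We prove Lemma~\ref{lm:split_function} by exhibiting an explicit Steiner tree for the smaller set $V' = V\setminus V^*$ and a spanning tree for $V$ built from an MST of $V'$. Throughout, $V^*\neq\emptyset$ and $V^*\subsetneq V$, so $1\le |V'| \le n-1$ and the inductive hypothesis applies to $V'$. This gives $L_S(V') \ge \rho' L_m(V')$.

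\emph{Step 1 (Steiner side).} By condition 3 of Definition~\ref{def:splitting}, $S_{rem}\cup S^+$ is a Steiner tree interconnecting $V'$. Its length is at most $L_{S_{rem}} + L_{S^+} = L_S(V) - L_{S^-} + L_{S^+}$, since $S^-$ is a subset of the edges of $S$. Because the SMT of $V'$ is no longer than any Steiner tree for $V'$, we get
\[
L_S(V') \le L_S(V) - \left(L_{S^-} - L_{S^+}\right).
\]

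\emph{Step 2 (spanning side).} Take an MST $T'$ of $V'$ and form $T = T'\cup t^*$. We claim $T$ is a connected spanning graph on $V$, which gives $L_m(V) \le L_{T'} + L_{t^*} = L_m(V') + L_{t^*}$.
\begin{itemize}
\item Connectivity follows from condition 4(a): every $x\in V^*$ is joined in $t^*$ to some point of $V'$, and $T'$ connects all of $V'$.
\item Every vertex of $t^*$ lies in $V$, so no new vertices are introduced.
\item Condition 4(b), that no $t^*$-path joins two points of $V'$, together with $t^*$ being a forest, makes $T$ in fact a tree. This is not strictly needed, since any connected spanning graph on $V$ contains a spanning tree of no greater length.
\end{itemize}

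\emph{Step 3 (combine).} Using the hypothesis $L_{S^-}-L_{S^+} \ge \rho' L_{t^*}$ and then Steps 1 and 2,
\[
L_S(V) \ge L_S(V') + \rho' L_{t^*} \ge \rho'\left(L_m(V') + L_{t^*}\right) \ge \rho' L_m(V).
\]
Dividing by $L_m(V)>0$ gives the claim.

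One degenerate case needs attention. If $L_{t^*}=0$, the ratio in the hypothesis is undefined. We then read the hypothesis as $L_{S^-}-L_{S^+}\ge \rho' L_{t^*}$, and the chain above goes through unchanged.

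The only step needing care is Step 2, namely checking that conditions 4(a) and 4(b) really make $T'\cup t^*$ a spanning tree of $V$ of length $L_m(V')+L_{t^*}$. This is a routine graph argument, so I expect no real obstacle.
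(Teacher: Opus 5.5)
Your proof is correct and follows the same route the paper sketches in Section~\ref{sec:framework}; the paper itself defers the formal proof to \citet{du1983new}. You bound $L_S(V')\le L_S(V)-(L_{S^-}-L_{S^+})$ via $S_{rem}\cup S^+$, bound $L_m(V)\le L_m(V')+L_{t^*}$ via an MST of $V'$ together with $t^*$, and combine these with the inductive hypothesis. Two small remarks: the last inequality implicitly uses $\rho'\ge 0$, which is harmless since otherwise the claim is trivial; and your degenerate case $L_{t^*}=0$ cannot actually occur, because condition 4(a) forces $t^*$ to contain at least one edge between distinct points.
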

\end{tcolorbox}

Based on Lemma~\ref{lm:split_function}, we reformulate the verification of a candidate lower bound $\rho$ for the Steiner ratio as a minimax condition.

\begin{tcolorbox}[
  enhanced, breakable,
  colframe=blue!40!black, boxrule=0.75pt, arc=1mm,
  title={%
    \parbox{6cm}{%
      \centering \textbf{Sufficient Condition for the Steiner Ratio Lower Bound}%
    }%
  },
  coltitle=black, fonttitle=\small\sffamily\bfseries,
  colbacktitle=blue!15!white,
  colback=blue!5!white,
  boxed title style={
    sharp corners, boxrule=0pt,
    top=1pt, bottom=1pt, left=2mm, right=2mm,
    borderline={0.5pt}{0pt}{blue!40!black}
  },
  attach boxed title to top left={xshift=4mm,yshift*=-1.2mm},
  boxsep=1.5mm, top=1.5mm, bottom=1.5mm, left=4mm, right=4mm,
  before skip=10pt, after skip=10pt
]

\begin{corollary}[Theorem \ref{thm:equivalent_formulation} restated]
\label{cor:equivalent_formulation}
For any splitting $\tau = (V^*, S^-, S^+, t^*)$ and geometric configuration $\vw \in \mathcal{W}$, we define the \textit{splitting function} $F_{\tau}(\vw, \rho)$ as:
\vspace{-5pt}
\[
    F_{\tau}(\vw, \rho) = \rho \cdot L_{t^*} + L_{S^+} - L_{S^-}.
    \vspace{-5pt}
\]
Let $\mathcal{F}=\{F_\tau:\tau \text{ is a splitting}\}$ denote the collection of all splitting functions. The lower bound $\rho_{\text{Steiner}} \geq \rho$ is established if:
\vspace{-5pt}
\begin{align}
    \label{eq:splitting_function_feasibility}
    \textstyle \max_{\vw \in \mathcal{W}} \min_{F \in \mathcal{F}} F(\vw, \rho) \le 0.
\end{align}
\par\vspace{-5pt}
We say that a value $\rho$ is \textit{feasible} if the condition \eqref{eq:splitting_function_feasibility} holds.
\end{corollary}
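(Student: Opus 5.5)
We argue by strong induction on $n=|V|$, with the induction step supplied by Lemma~\ref{lm:split_function}. The claim is that condition~\eqref{eq:splitting_function_feasibility} makes the hypothesis of that lemma available for every point set.

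\textbf{Base case and reductions.} For $|V|\le 4$ the ratio is at least $\sqrt3/2\ge\rho$ by \citet{pollak1978some}. We may assume $\rho\le\sqrt3/2$; otherwise the feasibility condition should fail, or the statement is only meaningful in that range. Now let $|V|=n\ge5$ and assume every set of fewer than $n$ points has ratio at least $\rho$. By Lemma~\ref{lm:basic_property}(3), the SMT decomposes into edge-disjoint full Steiner trees on subsets of $V$. Each of these subsets is smaller than $V$ unless the SMT is itself full. If it is not full, the induction hypothesis applies to each full component $V_i$, and we get
\[
L_S(V)=\sum_i L_S(V_i)\ge\rho\sum_i L_m(V_i)\ge\rho L_m(V).
\]
The last inequality holds because the union of spanning trees of the $V_i$ connects $V$. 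So we may assume $S=\mathrm{SMT}(V)$ is a full Steiner tree.

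\textbf{Induction step.} Root $S$ and take a deepest leaf, which produces one of the local structures of Figure~\ref{fig:subtree}. After normalizing $|AX|=1$, the edge lengths of this structure give a point $\vw\in\mathcal W$ (Definition~\ref{def:parameter_space}), and all local point positions are affine functions of $\vw$. Feasibility gives $\min_{F\in\mathcal F}F(\vw,\rho)\le 0$, so some splitting $\tau$ satisfies $\rho L_{t^*}+L_{S^+}-L_{S^-}\le0$. Two cases follow.

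\begin{itemize}
\item If $L_{t^*}>0$, this rearranges to $(L_{S^-}-L_{S^+})/L_{t^*}\ge\rho$. Lemma~\ref{lm:split_function} then yields $L_S(V)/L_m(V)\ge\rho$.
\item If $L_{t^*}=0$, then points of $V^*$ coincide with points of $V\setminus V^*$. This can be excluded by the distinctness of the terminals, or absorbed as a degenerate case.
\end{itemize}

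The minimum should be read as an infimum. It is attained because $\mathcal F$ restricted to the local structure is effectively finite, as the splittings considered are combinatorial choices on a bounded local configuration. Strictly speaking we only need the existence of some $F$ with $F\le0$ at $\vw$, which is exactly what the displayed condition asserts.

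\textbf{Main obstacle.} The subtle step is justifying that the splitting functions are evaluated on the actual configuration. The local parameters $\vw$ must capture everything that $L_{S^-}$, $L_{S^+}$ and $L_{t^*}$ depend on. Also, $S^+\cup S_{rem}$ is only a Steiner tree for $V\setminus V^*$, not necessarily a minimal one, so we need
\[
L_S(V\setminus V^*)\le L_S(V)-L_{S^-}+L_{S^+}.
\]
Lemma~\ref{lm:split_function} handles this internally: it combines this inequality with $L_m(V)\le L_m(V\setminus V^*)+L_{t^*}$, which holds because of properties (a) and (b) of $t^*$, and with the induction hypothesis on $V\setminus V^*$. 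Since the lemma is assumed, the remaining work is bookkeeping, which is to check that every configuration arising in the induction corresponds to some $\vw\in\mathcal W$.
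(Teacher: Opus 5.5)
Your proposal is correct and takes the same route as the paper. You use strong induction with the cases $|V|\le 4$ as the base, reduce to full Steiner trees via Lemma~\ref{lm:basic_property}, read off the local parameter $\vw$ at a deepest leaf, and apply Lemma~\ref{lm:split_function} to the splitting that feasibility supplies. Your handling of the non-full case and of $L_{t^*}>0$ spells out what the paper leaves implicit.

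One point to tidy is the remark about $\rho>\sqrt3/2$. Rather than asserting that feasibility ``should fail'' there, which you have not shown, state that the argument (like the paper's) needs $\rho\le\sqrt3/2$ for the base case. That restriction covers the intended use $\rho=0.8559$.
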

\end{tcolorbox}

Intuitively, \eqref{eq:splitting_function_feasibility} states that for \emph{any} possible geometric configuration of the tree ($\max_{\vw}$), there must exist \emph{at least one} valid strategy for splitting the tree ($\min_{F}$) such that the splitting function is non-positive.

\subsection{Geometric Properties of 4-Point Steiner Trees}
\label{app:4-point}

Let $U$ be the vertex of an equilateral triangle $\triangle ABU$ constructed such that $A, B, U$ appear in counterclockwise order.
Similarly, let $V$ be the vertex of an equilateral triangle $\triangle CDV$ constructed such that $C, D, V$ appear in counterclockwise order.
The existence of the $(AB)$-$(CD)$ topology is governed by the following conditions:

\begin{tcolorbox}[
  enhanced, breakable,
  colframe=blue!40!black, boxrule=0.75pt, arc=1mm,
  title={\textbf{Valid 4-Point Steiner Tree Theorem}}, 
  coltitle=black, fonttitle=\small\sffamily\bfseries,
  colbacktitle=blue!15!white, 
  colback=blue!5!white,        
  boxed title style={
    sharp corners, boxrule=0pt,
    top=1pt, bottom=0.5pt, left=4mm, right=4mm,
    borderline={0.5pt}{0pt}{blue!40!black}
  },
  attach boxed title to top left={xshift=4mm,yshift*=-1.2mm},
  boxsep=1.5mm, top=1.5mm, bottom=1.5mm, left=1mm, right=1mm,
  before skip=10pt, after skip=10pt
]
\begin{theorem}[\citet{du1987steiner} Theorem 6] \label{thm:4point_validity}
  We say an $(AB)$-$(CD)$ type Steiner tree is valid if (1) $A, B, C$ and $D$ form a convex quadrilateral. (2) $\max(\angle UCD, \angle UDC, \angle VAB, \angle VBA) \le 120^\circ.$ (3) The intersection angle of the diagonals, $\angle AOB$, is at most $120^\circ$, where $O$ is the intersection of $AC$ and $BD$.
\end{theorem}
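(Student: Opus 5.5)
We plan to prove sufficiency: under (1)--(3), the $(AB)$-$(CD)$ full Steiner topology exists, i.e.\ there are points $S,T$ with $120^\circ$ angles giving a tree with the stated five edges. The approach is the classical Melzak construction. Replace the pair $A,B$ by the equilateral apex $U$ and the pair $C,D$ by $V$, and take the segment $UV$. If a tree of this topology exists, the Steiner edge $ST$ lies on line $UV$, and its total length equals $|UV|$. So the construction is: draw $UV$; let $S$ be the second intersection of line $UV$ with the circumcircle of $\triangle ABU$, and $T$ the analogous point on the circumcircle of $\triangle CDV$. We must show these points produce a genuine tree, meaning all edges are nondegenerate and correctly ordered.

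The key steps, in order, are as follows.

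\textbf{Step 1: orientation.} Using convexity (1) and the counterclockwise convention for $U,V$, we show that $U$ lies on the side of $AB$ facing $C,D$, and $V$ on the side of $CD$ facing $A,B$. Consequently segment $UV$ crosses chord $AB$ of the first circle and chord $CD$ of the second.

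\textbf{Step 2: Steiner points on the arcs.} By the inscribed angle theorem, a point $S$ on arc $AB$ (not containing $U$) satisfies $\angle ASB=120^\circ$. Then $\angle ASU=\angle BSU=60^\circ$, so the extension of $US$ beyond $S$ bisects the $120^\circ$ angles. We need $S$ to lie strictly inside that arc, i.e.\ line $UV$ enters the disk through segment $AB$ between $A$ and $B$. Condition (2) for $\angle VAB,\angle VBA\le 120^\circ$ is exactly what guarantees this: ray $AV$ and ray $BV$ stay within the angular range where $UV$ meets arc $AB$. The point is that $V$ must lie in the region bounded by the tangent directions at $A$ and $B$, which make $120^\circ$ with $AB$. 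The same argument places $T$ on arc $CD$ using $\angle UCD,\angle UDC\le 120^\circ$.

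\textbf{Step 3: ordering.} We need $S$ and $T$ to appear in the order $U,S,T,V$ along the line, so that $|ST|\ge 0$ and the edge $ST$ is real. This is where (3) enters. The direction of $UV$ relative to the diagonals controls whether the two arcs are hit in the correct order. We would show that $S$ precedes $T$ iff the diagonals' angle $\angle AOB\le 120^\circ$. One route is to compute signed positions: express $|US|$ and $|VT|$ via the law of sines in triangles $ABU$ and $CDV$, then compare $|US|+|VT|$ with $|UV|$.

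We expect Step 3 to be the main obstacle, since it relates a global ordering along $UV$ to the angle between the diagonals. First we would try Ptolemy-type identities: $|US|=|AS|+|BS|$, and similarly for $T$. Then $|UV|-|US|-|VT|=|ST|$. Finally, show that the sign of this quantity is governed by $\angle AOB$ via a rotation-by-$60^\circ$ argument mapping $AC$ and $BD$ onto $UV$-related segments.
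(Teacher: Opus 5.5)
The paper does not prove this statement; it is quoted from Du et al. The only related argument in the paper is the Ptolemy computation showing that such a tree, \emph{once it exists}, has length $|UV|$. Your Melzak-type construction is the right vehicle for existence, but two steps fail as written.

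First, Step~1 has the orientation reversed. With $A,B,C,D$ clockwise, the quadrilateral lies to the right of $\overrightarrow{AB}$, while ``$A,B,U$ counterclockwise'' puts $U$ to its left. So $U$ is the \emph{outer} apex, on the side of $AB$ away from $C,D$, and likewise $V$ is on the side of $CD$ away from $A,B$. This matters: $S$ is taken on the arc of circle $ABU$ not containing $U$, and it lands on the $C,D$ side only because $U$ is outside. With your placement, $S$ would fall outside the quadrilateral, and the claimed crossing of both chords would not follow.

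What Step~2 actually needs is the position of $V$ relative to line $AB$. Suppose $V$ is on the $C,D$ side of that line. Since $\angle UAB=60^\circ$, $\angle VAB\le120^\circ$ holds iff $V$ is on $B$'s side of line $UA$, and symmetrically at $B$. So (2) says exactly that $V$ lies in the wedge at $U$ bounded by the extensions of $UA$ and $UB$, i.e.\ that ray $UV$ meets segment $AB$. These extensions are the lines making $120^\circ$ with $AB$; the tangents at $A,B$ make $60^\circ$ with $AB$ on the inner side and are the wrong boundary.

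That $V$ lies on the inner side of line $AB$ is not automatic from (1) and must be argued. Otherwise $V$ lies in the angle at $P=AB\cap CD$ vertically opposite the one containing the quadrilateral. Since $V$ projects to the midpoint of $CD$ at height $\tfrac{\sqrt3}{2}|CD|$, ray $PV$ makes less than $60^\circ$ with line $CD$. If, say, $B$ is the vertex of $AB$ nearer $P$, then $\angle BPV>120^\circ$, so $\angle VBA=180^\circ-\angle VBP>120^\circ$, contradicting (2). If $AB\parallel CD$, this case cannot occur.

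The more serious gap is Step~3, which is the heart of the theorem but is only announced. Ptolemy cannot decide the order: the identity $|UV|-|US|-|VT|=|ST|$ presupposes it, and read as a signed quantity it says nothing visible about the diagonals. The ``rotation by $60^\circ$'' is never specified.

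The missing observation is that once Step~2 places $S$ on arc $AB$ and $T$ on arc $CD$, all edge directions are fixed regardless of the order. Let $\mathbf e$ be the unit vector along $\overrightarrow{UV}$ and $\lambda=|UV|-|US|-|VT|$, so that $\overrightarrow{ST}=\lambda\,\mathbf e$. We have $\angle ASU=\angle BSU=\angle CTV=\angle DTV=60^\circ$, and the orientation puts $A,D$ on one side of line $UV$ and $B,C$ on the other. Hence $\overrightarrow{AS}$ and $\overrightarrow{TC}$ are nonnegative multiples of one unit vector $\mathbf u_+$ at $60^\circ$ to $\mathbf e$, and $\overrightarrow{BS}$, $\overrightarrow{TD}$ are nonnegative multiples of its mirror image $\mathbf u_-$. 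Therefore
\[
\overrightarrow{AC}=(|AS|+|TC|)\,\mathbf u_+ +\lambda\,\mathbf e,\qquad
\overrightarrow{BD}=(|BS|+|TD|)\,\mathbf u_- +\lambda\,\mathbf e .
\]
If $\lambda\ge0$, the two diagonals lie within $60^\circ$ of $\mathbf e$ on opposite sides, so the angle between them is at most $120^\circ$. If $\lambda<0$, each is more than $60^\circ$ from $\mathbf e$. Because the diagonals of a convex quadrilateral cross, the turn from $\overrightarrow{AC}$ to $\overrightarrow{BD}$ is less than $180^\circ$ and goes in the sense fixed by the orientation, namely the sense carrying $\mathbf u_+$ to $\mathbf u_-$ through $\mathbf e$. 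So in this case the angle lies in $(120^\circ,180^\circ)$.

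Since $O$ is interior to both diagonals, $\angle AOB$ equals the angle between $\overrightarrow{AC}$ and $\overrightarrow{BD}$. Thus (3) is equivalent to $\lambda\ge0$, i.e.\ to the order $U,S,T,V$ along the line. With Steps~1--2 corrected and this argument in place of your sketch, the construction yields the $(AB)$-$(CD)$ tree, and the paper's Ptolemy lemma gives its length $|UV|$.
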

\end{tcolorbox}

When these conditions are met, the length of the tree follows a concise formula:

\begin{tcolorbox}[
  enhanced, breakable,
  colframe=magenta,             
  colback=magenta!5,            
  colbacktitle=magenta!15,      
  coltitle=black,               
  boxrule=0.75pt,               
  arc=1mm,               
  left=3mm, right=3mm,          
  top=1mm, bottom=1mm,          
  boxsep=1pt,                   
  title={\textbf{Length of the $(AB)$-$(CD)$ Type Steiner Tree}}, 
  fonttitle=\small\sffamily\bfseries,
  attach boxed title to top left={xshift=4mm, yshift*=-\tcboxedtitleheight/2},
  boxed title style={
    sharp corners,
    boxrule=0.75pt,
    colframe=magenta,       
    colback=magenta!15,     
    top=1pt, bottom=0.5pt, left=4mm, right=4mm,
  },
  before skip=10pt, after skip=10pt
]
\begin{lemma}
\label{lem: length_four_pt_ST}
The length of the $(AB)$-$(CD)$ type Steiner tree is precisely given by $|UV|$, where $U$ denote the position of point $B$ after segment $AB$ is rotated counterclockwise $60^\circ$ about point $A$ and $V$ denote the position of point $D$ after segment $CD$ is rotated counterclockwise $60^\circ$ about point $C$.
\end{lemma}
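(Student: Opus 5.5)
The plan is to prove the formula with the classical Torricelli--Simpson rotation argument. The tree consists of two Steiner points, each with three edges at $120^\circ$. At each Steiner point I will collapse its two terminal edges into a single segment to an equilateral-triangle apex, and then show that the whole tree straightens into the segment $UV$. Throughout I assume the configuration is valid in the sense of Theorem~\ref{thm:4point_validity}, so that the $(AB)$-$(CD)$ tree of Definition~\ref{def:steiner_tree_type} exists with Steiner points $S$ (adjacent to $A,B$) and $T$ (adjacent to $C,D$).

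\textbf{Step 1 (concyclicity).} Let $U$ be the image of $B$ under the counterclockwise $60^\circ$ rotation about $A$, so $\triangle ABU$ is equilateral and $\angle AUB=60^\circ$. First I must check that $U$ lies on the opposite side of line $AB$ from $S$. This should follow from the clockwise ordering $A,B,C,D$ of the convex quadrilateral: $S$, and the rest of the tree, lie on the side of $AB$ facing $C,D$, while the counterclockwise rotation sends $B$ to the outer side. Given this, $\angle ASB+\angle AUB=120^\circ+60^\circ=180^\circ$, so $A,S,B,U$ lie on a common circle.

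\textbf{Step 2 (Ptolemy and collinearity).} Ptolemy's theorem for the cyclic quadrilateral $ASBU$ gives
\[
|SU|\,|AB| = |AS|\,|BU| + |BS|\,|AU|.
\]
Since $|AB|=|BU|=|AU|$, this reduces to $|SU| = |AS|+|BS|$.

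Next, inscribed angles on chord $AU$ give $\angle ASU=\angle ABU=60^\circ$, and similarly $\angle BSU=60^\circ$. So ray $SU$ bisects the $120^\circ$ angle $\angle ASB$. Because the three edges at $S$ are at $120^\circ$, ray $ST$ is the extension of that bisector beyond $S$, in the opposite direction. The angle between $SU$ and $ST$ is therefore $60^\circ+120^\circ=180^\circ$, so $U$, $S$, $T$ are collinear with $S$ between $U$ and $T$.

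\textbf{Step 3 (the other side).} The same argument at $T$, with $V$ the image of $D$ under the counterclockwise $60^\circ$ rotation about $C$, gives $|TV|=|CT|+|DT|$. It also shows that $V$ lies on the extension of $ST$ beyond $T$.

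\textbf{Step 4 (conclusion).} The points $U,S,T,V$ lie on one line in that order. Hence the tree length is
\[
|AS|+|BS|+|ST|+|CT|+|DT| = |US|+|ST|+|TV| = |UV|.
\]

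The main obstacle is the orientation bookkeeping in Step 1. I need to confirm that ``counterclockwise about $A$'' and ``counterclockwise about $C$'', combined with the clockwise labeling $A,B,C,D$, really place $U$ and $V$ on the far sides of $AB$ and $CD$. If this fails, the concyclicity argument breaks, because $S$ and the apex would sit on the same arc.

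I would settle this with explicit coordinates, for example $A=(0,0)$ and $B=(1,0)$ with $C,D$ in the lower half-plane, as forced by the clockwise order. Then check the sign of the rotated point. Conditions (1)--(3) of Theorem~\ref{thm:4point_validity} guarantee that $S$ and $T$ exist inside the quadrilateral with the stated angles, so the remaining steps are purely synthetic.
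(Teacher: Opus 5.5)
Your proposal is correct and follows essentially the same route as the paper: concyclicity of $A,S,B,U$ from $\angle ASB+\angle AUB=180^\circ$, Ptolemy giving $|US|=|AS|+|BS|$, inscribed angles giving the collinearity of $U,S,T,V$, and the same argument at $T$. Your orientation check, that $U$ and $V$ lie on the far sides of $AB$ and $CD$, and your bisector argument for collinearity both make explicit points the paper takes for granted. The orientation check does go through: with $A=(0,0)$, $B=(1,0)$ and $C,D$ in the lower half-plane, you get $U=(1/2,\sqrt{3}/2)$.
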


\end{tcolorbox}
\begin{proof}
    By construction, points $A$, $U$, $B$ are arranged in clockwise order and form an equilateral triangle, as are points $C$, $V$, $D$. Consequently, $\triangle ABU$ and $\triangle CVD$ are equilateral triangles.

    Recall Definition~\ref{def:steiner_tree_type}. A key property of the geometry is that the angle between the segments connecting to the non-bridge endpoints at a Steiner point is $120^\circ$. Therefore, we note that $\angle ASB = 2\pi/3$. Since $\triangle ABU$ is equilateral, $\angle AUB = \pi/3$. The sum of these opposite angles in the quadrilateral $AUBS$ is $\angle ASB + \angle AUB = 2\pi/3 + \pi/3 = \pi$. This implies that the points $A, U, B, S$ are concyclic.

    Since the points are concyclic, the angle $\angle ASU$ subtends the same arc $AU$ as the angle $\angle ABU$. Thus, $\angle ASU = \angle ABU = \pi/3$. From the definition of the Steiner point $S$, the segments $SA$, $SB$, and $ST$ form $120^\circ$ angles. Specifically, $\angle AST = 2\pi/3$. It follows that $\angle ASU + \angle AST = \pi/3 + 2\pi/3 = \pi$, which proves that the points $U, S, T$ are collinear. By a similar argument for the points $C, V, D, T$, we can show that $V, T, S$ are also collinear. Combining these results, we conclude that the four points $U, S, T, V$ lie on a single straight line in that order.

    Applying Ptolemy's theorem to the cyclic quadrilateral $AUBS$, we have:
    \[
    |AB| \cdot |US| = |AU| \cdot |BS| + |BU| \cdot |AS|.
    \]
    Since $\triangle ABU$ is equilateral, $|AU| = |BU| = |AB|$. Substituting these into the equation and dividing by $|AB|$ yields:
    \[
    |US| = |AS| + |BS|.
    \]
    Similarly, applying Ptolemy's theorem to the cyclic quadrilateral $CVDT$, we obtain:
    \[
    |VT| = |CT| + |DT|.
    \]
    As the points $U, S, T, V$ are collinear, the length of the segment $|UV|$ is the sum of the lengths of the constituent segments:
    \[
    |UV| = |US| + |ST| + |TV| = (|AS| + |BS|) + |ST| + (|CT| + |DT|).
    \]
    This sum is precisely the total length of the edges in the $(AB)$-$(CD)$ type Steiner tree.
\end{proof}

\newpage
\section{Our Theoretical Contributions}

This appendix consolidates the original theoretical contributions of our work. Our primary objective is to elaborate on how we transform the computationally intractable minimax problem $\max_w \min_F F(w, \rho) \le 0$ (from Theorem 4 of the main text) into a mathematically rigorous and tractable verification pipeline.

\subsection{The Type A and B Predicate Framework}
\label{sec:appendix_AB_condition}
\begin{tcolorbox}[
  enhanced, breakable,
  colframe=magenta,             
  colback=magenta!5,            
  colbacktitle=magenta!15,      
  coltitle=black,               
  boxrule=0.75pt,               
  arc=1mm,               
  left=3mm, right=3mm,          
  top=1mm, bottom=1mm,          
  boxsep=1pt,                   
  title={\textbf{Type A and B Predicate Framework}}, 
  fonttitle=\small\sffamily\bfseries,
  attach boxed title to top left={xshift=4mm, yshift*=-\tcboxedtitleheight/2},
  boxed title style={
    sharp corners,
    boxrule=0.75pt,
    colframe=magenta,       
    colback=magenta!15,     
    top=1pt, bottom=0.5pt, left=4mm, right=4mm,
  },
  before skip=10pt, after skip=10pt
]
\textbf{Lemma~\ref{lem:type_a}} (restated). \textit{For any $0 \le u < s \le v \le n$, $|A_u A_v| \ge a_s$.}
\end{tcolorbox}
\begin{proof}
We prove the lemma by contradiction. Assume for the sake of contradiction that $|A_u A_v| < a_s$. We can then construct a new Steiner tree by removing the edge $A_sA_{s-1}$ from the original tree and adding the edge $A_uA_v$. This operation results in a new, valid Steiner tree whose total length is less than that of the original, since an edge of length $a_s$ was replaced by one of length $|A_u A_v|$. This contradicts the assumption that the original tree is a minimal Steiner tree.
\end{proof}

\begin{tcolorbox}[
  enhanced, breakable,
  colframe=magenta,             
  colback=magenta!5,            
  colbacktitle=magenta!15,      
  coltitle=black,               
  boxrule=0.75pt,               
  arc=1mm,               
  left=3mm, right=3mm,          
  top=1mm, bottom=1mm,          
  boxsep=1pt,                   
  title={\textbf{Type A and B Predicate Framework}}, 
  fonttitle=\small\sffamily\bfseries,
  attach boxed title to top left={xshift=4mm, yshift*=-\tcboxedtitleheight/2},
  boxed title style={
    sharp corners,
    boxrule=0.75pt,
    colframe=magenta,       
    colback=magenta!15,     
    top=1pt, bottom=0.5pt, left=4mm, right=4mm,
  },
  before skip=10pt, after skip=10pt
]
\textbf{Lemma~\ref{lem:type_b}} (restated).
\textit{For any $0 \le u < s \le v < \min(6, n)$, let $H$ be the foot of the perpendicular from $A_u$ to line $A_vA_{v+1}$. If $H$ lies on the \textit{ray} $A_vA_{v+1}$ and $|A_uH| < a_s$, then $A_n$ will be trapped inside polygon $A_uA_{u+1}\ldots A_vHA_u$.}
\end{tcolorbox}

\begin{proof}
First, we need to prove a more specific form of this lemma. To this end, we introduce the following lemma:
\begin{lemma}
\label{lem:sublemma_for_type_b}
Consider a simple polygonal chain $X_1 X_2 \dots X_n$ in the plane, formed by $n$ segments connected end-to-end, satisfying the following conditions:
\begin{enumerate}[label=\arabic*., topsep=2pt, itemsep=2pt]
    \item The angle between any two consecutive segments $X_i X_{i+1}$ and $X_{i+1} X_{i+2}$ is $120^\circ$.
    \item When traversing the chain in the order $X_1, X_2, \dots, X_n$, every turn at each vertex $X_i$ is a right turn.
\end{enumerate}
If $n \ge 7$ and the ray starting from $X_6$ through $X_7$ intersects one of the segments $X_1 X_2$, $X_2 X_3$, $\dots$, $X_4 X_5$, then all subsequent vertices $X_7, X_8, \dots, X_n$ lie within the closed region $\Omega$ enclosed by the segments $X_1 X_2, X_2 X_3, X_3 X_4, X_4 X_5, X_5 X_6$ and the line passing through $X_6$ and $X_1$.
\end{lemma}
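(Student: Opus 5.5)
The plan is to exploit the rigid geometry of the chain. Since every turn is a right turn and each interior angle is $120^\circ$, the direction of segment $X_iX_{i+1}$ is obtained from that of $X_1X_2$ by a clockwise rotation of $60^\circ(i-1)$. Two consequences follow.

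\begin{enumerate}
\item Directions repeat with period $6$: $X_{i+6}X_{i+7}\parallel X_iX_{i+1}$, with the same orientation.
\item The five segments $X_1X_2,\dots,X_5X_6$ have total turning $240^\circ<360^\circ$, all clockwise. Closing them with the segment $X_6X_1$ therefore produces a convex hexagon. At $X_6$ and $X_1$ the two remaining turns are clockwise and sum to $120^\circ$, so they are each less than $180^\circ$.
\end{enumerate}

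Thus $\Omega$ is the convex polygon $X_1X_2\cdots X_6$, and the chain $X_1\cdots X_6$ lies on its boundary, traversed clockwise.

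\emph{First step: $X_7\in\Omega$.} The direction of $X_6X_7$ is the direction of $X_5X_6$ turned $60^\circ$ clockwise, so the ray from $X_6$ points into the interior of $\Omega$. By hypothesis this ray meets some $X_jX_{j+1}$ with $j\le 4$, at a point $P$. Since the chain is simple, the open segment $X_6X_7$ cannot cross $X_jX_{j+1}$, so $X_7$ lies on the segment $X_6P$. By convexity, $X_6P\subset\Omega$, hence $X_7\in\Omega$.

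\emph{Inductive step.} I would then argue with a nested family of regions. Let $\Omega_7$ be the sub-polygon of $\Omega$ bounded by $X_6X_7$, $X_7P$, and the chain $X_{j+1}\cdots X_6$; this is the part of $\Omega$ lying to the right of the ray $X_6X_7$. Because the chain turns right at $X_7$, the segment $X_7X_8$ enters $\Omega_7$. The claim to prove by induction on $k$ is that the ray $X_{k-1}X_k$ always hits the already-drawn chain, and that all of $X_k,X_{k+1},\dots$ stay in the corresponding region $\Omega_k\subseteq\Omega_{k-1}\subseteq\dots\subseteq\Omega$. Each $\Omega_k$ is bounded only by earlier chain segments, which the chain cannot cross by simplicity, together with one ``cut'' segment $X_kP_k$ lying on the ray $X_{k-1}X_k$.

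\emph{Main obstacle: the chain never leaves through a cut segment $X_kP_k$.} To cross it, the later chain would have to cross the line $X_{k-1}X_k$ from its right side to its left side. A right-turning polygonal chain with $60^\circ$ steps can do that only after its direction has rotated clockwise by more than $180^\circ$ relative to $X_{k-1}X_k$. That takes at least four more segments, and by the periodicity above those segments are parallel to, and lie inside, the corresponding segments of the previous turn of the spiral. I would try to make this precise with a supporting-line argument: show that each new turn of the spiral stays strictly inside the previous turn, so a crossing would force an intersection with an earlier segment, contradicting simplicity. The base region $\Omega_7$ follows from the $X_7$ step above.
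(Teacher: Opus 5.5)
Your strategy of trapping each new vertex with a first-hit argument and simplicity is a legitimate alternative to the paper's route. The paper instead uses complex coordinates and tracks which segments of the previous turn the next ray must meet, e.g.\ that the ray $X_7X_8$ meets $X_2X_3$ or $X_3X_4$. As written, however, your proposal has two genuine gaps.

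First, $\Omega$ need not be convex. The direction of the closing side $X_6X_1$ depends on the lengths, and the turns at $X_6$ and $X_1$ add up to $120^\circ$ clockwise without each being clockwise. Take $|X_1X_2|,\dots,|X_5X_6| = 10,1,1,1,1$ and $X_1=(0,0)$. Then $X_6=(17/2,-\sqrt{3}/2)$, and the ray from $X_6$ at $60^\circ$ meets $X_1X_2$ at $(9,0)$, so the hypothesis holds. Yet $X_5\to X_6\to X_1$ is a left turn of about $54^\circ$, so $X_6$ is a reflex vertex. Hence ``by convexity, $X_6P\subset\Omega$'' is unjustified. This is repairable:
\begin{itemize}
\item take $P$ to be the \emph{first} point where the ray meets $X_1X_2\cup\dots\cup X_5X_6$;
\item note that the ray meets the closing side only at $X_6$;
\item check that the ray initially enters $\Omega$. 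The hypothesis forces $X_1$ to lie on or to the left of the ray; the right turn alone does not give this.
\end{itemize}

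Second, and more seriously, the inductive step is the whole content of the lemma, and you only announce it. Showing that ``each new turn stays strictly inside the previous turn'' is essentially the lemma itself. The claim that later segments are parallel to and inside the previous turn is not proved. The obstacle you single out (the chain leaving through a cut) does not actually need to be confronted. The missing idea is to discard old cuts.

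Suppose $\Omega_k\subseteq\Omega$ is bounded by a simple closed curve made of a stretch of the chain ending with $X_{k-1}X_k$, with $\Omega_k$ on its right, plus one cut $X_kQ_k$ on the extension of $X_{k-1}X_k$.
\begin{itemize}
\item The ray from $X_k$ along $X_kX_{k+1}$ enters $\Omega_k$ because of the right turn.
\item It meets the line $X_{k-1}X_k$ only at $X_k$, so it can never cross the cut.
\item Since $\Omega_k$ is bounded, the ray first leaves $\Omega_k$ at a point $Q_{k+1}$ of some $X_iX_{i+1}$ with $i\le k-2$.
\item Simplicity then forces $X_{k+1}\in[X_k,Q_{k+1})$.
\end{itemize}
Now let $\Omega_{k+1}$ be the piece of $\Omega_k$ to the right of the chord $X_kQ_{k+1}$. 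Its boundary is the chain from $Q_{k+1}$ to $X_{k+1}$ plus the single new cut $X_{k+1}Q_{k+1}$. The old cut $X_kQ_k$ lies in the discarded piece. So $\Omega_{k+1}\subseteq\Omega_k$ has the same form.

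Starting from your $\Omega_7$, with $P$ the first hit, this gives $X_m\in\Omega_7\subseteq\Omega$ for every $m\ge 7$. You never need to track how the chain might later approach an earlier cut. Nor do you need the paper's previous-turn bookkeeping to get the containment.
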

\begin{proof}
We establish a complex coordinate system by setting the segment $X_1X_2$ on the real axis.
Let each point $X_i$ be represented by its corresponding complex number, also denoted by $X_i$.

The vector from $X_i$ to $X_{i+1}$ can be expressed as $X_{i+1} - X_i$. Due to the right-turn condition,
each subsequent vector is rotated. This gives us the relation:
\[
X_{i+1} - X_i = \lambda_i \cdot \varepsilon^{i-1},
\]
where $\varepsilon = e^{-\pi i/3}$ represents a $120^\circ$ right turn, and $\lambda_i > 0$ is the length of the
segment $|X_iX_{i+1}|$.

From this, the position of any vertex $X_i$ can be written as a sum:
\[
\textstyle X_i = X_1 + \sum_{k=1}^{i-1} \lambda_k \varepsilon^{k-1}.
\]
Based on this formula,  in conjunction with the condition that the ray $X_6X_7$
intersects one of the segments in the initial chain $\{X_1X_2, \dots, X_4X_5\}$,
we can deduce the imaginary parts of the first few vertices:
\[
\imC X_4 = \imC X_5 < \imC X_6 < \imC X_7 < \imC X_1 = \imC X_2 = 0.
\]
Then for the real parts, since $X_1 \dots X_n$ does not intersect itself ,we have:
\[
\reC X_7 < \reC X_3
\]

These inequalities imply that the ray $X_7X_8$ must intersect either segment
$X_2X_3$ or $X_3X_4$. This geometrically places $X_8$ inside the region $\Omega$, so $X_8 \in \Omega$.

We can repeat this process. The position of $X_9$ relative to $X_8$ follows the same rotation.
The conditions will imply that the ray $X_8X_9$ must also intersect one of the boundary segments,
which in turn implies $X_9 \in \Omega$. For example, for $X_{10}$, the ray $X_9X_{10}$ will intersect
$X_3X_4$ or $X_4X_5$, leading to $X_{10} \in \Omega$.

By induction, we conclude that all subsequent vertices $X_k$ for $k \ge 7$ lie within $\Omega$.
\end{proof}

Returning to the proof of the main lemma, in the following argument, we will use the techniques from Lemma~\ref{lem:sublemma_for_type_b}.

Without loss of generality, we can assume $u=0$. Let $\Omega$ be the disk centered at $A_0$ with radius $a_s$.
According to the lemma's hypothesis, the point $H$ lies on the ray $A_vA_{v+1}$ and $|A_0H| < a_s$, so the point $H$ is inside the disk $\Omega$. If
the point $H$ lies on the line segment $A_vA_{v+1}$, then similar to the proof of Lemma~\ref{lem:type_a}, this would contradict the definition of a Minimal Steiner Tree.
Therefore, the point $H$ must lie on the extension of the ray $A_vA_{v+1}$. Let $\mathcal{D}_1$ be the closed region enclosed by the polygon $A_uA_{u+1}\ldots A_vHA_u$, and let
$\Gamma_1 = \partial \mathcal{D}_1$. For some $r > s > 0$ to be determined, we establish a complex coordinate system with the origin at $A_{v+1}$ and the positive real axis along the ray $A_{v+1}A_v$.
Let
\[
X_k =
\begin{cases}
  s + \sum_{i = 1}^{6-k} r \omega^i,      & 1 \le k \le6  \\
  A_{v+k-6},  & 7 \le k \le n + 6 - v
\end{cases}
\]

where $\omega = e^{\pi i / 3}$, and any undefined summation part is taken as $0$. Then $\imC X_7 = \imC X_6 = \imC X_1 = 0$.
Let $\mathcal{D}_2$ be the closed region enclosed by the polygon $X_1X_2 \dots X_7$, and let $\Gamma_2 = \partial \mathcal{D}_2$. Note that since $v < 6$,
$\mathcal{D}_1$ is contained in the upper half-plane $\mathbb{H}$. Therefore, we can choose $r-s, s$
sufficiently large such that $\mathcal{D}_1 \subseteq \mathcal{D}_2$.

We prove by induction on $m$ that $A_{v+m-6} = X_m \in \mathcal{D}_1$ (for $ 7 \le m \le n + 6 - v$), and the ray $X_{m-1} X_m$ intersects one of the closed segments $X_{m-6}X_{m-5}$ or $X_{m-5}X_{m-4}$.

For the base case $m = 7$, it is clear that $A_{v+1} = X_7 \in \mathcal{D}_1$. Since $\imC X_7 = \imC X_6 = \imC X_1 = 0$,
the ray $X_6X_7$ intersects the closed segment $X_1X_2$.

Assume the conclusion holds for $m = k < n + 6 - v$. Consider the case $m = k+1$:

By the induction hypothesis, $A_{v+k-6} = X_k \in \mathcal{D}_1$. If the segment $X_kX_{k+1} = A_{v+k-6}A_{v+k-5}$ intersects the polygonal chain $X_1 \dots X_k$, then because
$X_k \in \mathcal{D}_1 \subseteq \mathcal{D}_2$, the segment $X_kX_{k+1}$ must intersect $\Gamma_1$. This contradicts the fact that $A_0 \dots A_n$ is non-self-intersecting.
Therefore, the segment $X_kX_{k+1}$ does not intersect the polygonal chain $X_1 \dots X_k$. Thus, similar to the proof of lemma~\ref{lem:sublemma_for_type_b}, by the induction hypothesis,
since the ray $X_{k-1}X_k$ intersects the closed segment $X_{k-6}X_{k-5}$ or $X_{k-5}X_{k-4}$,
the ray $X_kX_{k+1}$ must intersect the closed segment $X_{k-5}X_{k-4}$ or $X_{k-4}X_{k-3}$. Hence, $X_{k+1} \in \mathcal{D}_2$. Since $\Gamma_1$ divides
$\mathcal{D}_2$ into two connected components, $\mathcal{D}_1$ and $\overline{(\mathcal{D}_2 - \mathcal{D}_1)}$,
if $X_{k+1} \in (\mathcal{D}_2 - \mathcal{D}_1)$, then the segment $X_kX_{k+1}$ would intersect $\Gamma_1$, which contradicts the non-self-intersecting property of $A_0 \dots A_n$.
Therefore, $X_{k+1} \in \mathcal{D}_1$. By the principle of induction, we conclude that $A_n = X_{n+6-v} \in \mathcal{D}_1$.

\end{proof}

Full list of type A\&B Conditions when $n=6$ can be found in 
the following tables.

\begin{tcolorbox}[
    enhanced,
    breakable, 
    title={\hspace{0.5cm} B-type Conditions},
    colback=teal!5,
    colframe=teal,      
    coltext=black,
    coltitle=white,
    fonttitle=\bfseries,
    arc=1mm,
    boxrule=1pt,
    boxsep=1pt,                
    left=6pt, right=6pt,      
    top=2pt, bottom=2pt,      
    toptitle=3pt, bottomtitle=3pt,
    center
  ]
  
    \renewcommand{\arraystretch}{1.2}
    \setlength{\tabcolsep}{8pt}
    
    \begin{xltabular}{\linewidth}{c|X} 
      
      \textbf{Condition} & \textbf{Description} \\
      \hline
      \endhead 
      
      $C^B_{0,1,s}$ & False \\
      $C^B_{0,2,s}$ & $(1 - a_2)/2 \ge 0 \wedge (\sqrt{3}/2 \cdot (1 + a_2) \le a_s)$ \\
      $C^B_{0,3,s}$ & $(2 + a_2 - a_3)/2 \ge 0 \wedge (\sqrt{3}/2 \cdot (a_2 + a_3) \le a_s)$ \\
      $C^B_{0,4,s}$ & $(1 + 2a_2 + a_3 - a_4)/2 \ge 0 \wedge (\sqrt{3}/2 \cdot (-1 + a_3 + a_4) \le a_s)$ \\
      $C^B_{0,5,s}$ & $(-1 + a_2 + 2a_3 + a_4 - a_5)/2 \ge 0 \wedge (\sqrt{3}/2 \cdot (-1 - a_2 + a_4 + a_5) \le a_s)$ \\
      $C^B_{1,2,s}$ & False \\
      $C^B_{1,3,s}$ & $(a_2 - a_3)/2 \ge 0 \wedge (\sqrt{3}/2 \cdot (a_2 + a_3) \le a_s)$ \\
      $C^B_{1,4,s}$ & $(2a_2 + a_3 - a_4)/2 \ge 0 \wedge (\sqrt{3}/2 \cdot (a_3 + a_4) \le a_s)$ \\
      $C^B_{1,5,s}$ & $(a_2 + 2a_3 + a_4 - a_5)/2 \ge 0 \wedge (\sqrt{3}/2 \cdot (-a_2 + a_4 + a_5) \le a_s)$ \\
      $C^B_{2,3,s}$ & False \\
      $C^B_{2,4,s}$ & $(a_3 - a_4)/2 \ge 0 \wedge (\sqrt{3}/2 \cdot (a_3 + a_4) \le a_s)$ \\
      $C^B_{2,5,s}$ & $(2a_3 + a_4 - a_5)/2 \ge 0 \wedge (\sqrt{3}/2 \cdot (a_4 + a_5) \le a_s)$ \\
      $C^B_{3,4,s}$ & False \\
      $C^B_{3,5,s}$ & $(a_4 - a_5)/2 \ge 0 \wedge (\sqrt{3}/2 \cdot (a_4 + a_5) \le a_s)$ \\
      $C^B_{4,5,s}$ & False \\
    \end{xltabular}
    
\end{tcolorbox}

\begin{table}[h]
  \centering
  \begin{tcolorbox}[
    enhanced,
    title={\hspace{0.5cm} A-type Conditions},
    colback=teal!5,
    colframe=teal,      
    coltext=black,
    coltitle=white,
    fonttitle=\bfseries,
    arc=1mm,
    boxrule=1pt,
    boxsep=1pt,                
    left=6pt, right=6pt,      
    top=2pt, bottom=2pt,      
    toptitle=3pt, bottomtitle=3pt,
    center
  ]
  
    \renewcommand{\arraystretch}{1.2}
    \setlength{\tabcolsep}{8pt}
    
    \begin{tabularx}{\linewidth}{c|X} 
      \textbf{Condition} & \textbf{Description} \\
      \hline
      $C^A_{0,1,s}$ & $1 \ge a_s$ \\
      $C^A_{0,2,s}$ & $1 + a_2 + a_2^2 \ge a_s^2$ \\
      $C^A_{0,3,s}$ & $1 + a_2 + a_2^2 - a_3 + a_2a_3 + a_3^2 \ge a_s^2$ \\
      $C^A_{0,4,s}$ & $a_2^2 + a_3^2 + a_2(1 + a_3 - a_4) + a_3(-1 + a_4) + (-1 + a_4)^2 \ge a_s^2$ \\
      $C^A_{0,5,s}$ & $1 + a_2^2 + a_3^2 - 2a_4 + a_4^2 + a_2(1 + a_3 - a_4 - 2a_5) + a_3(-1 + a_4 - a_5) - a_5 + a_4a_5 + a_5^2 \ge a_s^2$ \\
      $C^A_{0,6,s}$ & $\begin{aligned} &1 + a_2^2 + a_3^2 - 2a_4 + a_4^2 - a_5 + a_4a_5 + a_5^2 + a_3(-1 + a_4 - a_5 - 2a_6) + a_2(1 + a_3 - a_4 - 2a_5 - a_6)\\& + a_6 - a_4a_6 + a_5a_6 + a_6^2 \ge a_s^2\end{aligned}$ \\
      $C^A_{1,2,s}$ & $a_2 \ge a_s$ \\
      $C^A_{1,3,s}$ & $a_2^2 + a_2a_3 + a_3^2 \ge a_s^2$ \\
      $C^A_{1,4,s}$ & $a_2^2 + a_3^2 + a_2(a_3 - a_4) + a_3a_4 + a_4^2 \ge a_s^2$ \\
      $C^A_{1,5,s}$ & $a_2^2 + a_3^2 + a_4^2 + a_2(a_3 - a_4 - 2a_5) + a_3(a_4 - a_5) + a_4a_5 + a_5^2 \ge a_s^2$ \\
      $C^A_{1,6,s}$ & $a_2^2 + a_3^2 + a_4^2 + a_4a_5 + a_5^2 + a_3(a_4 - a_5 - 2a_6) + a_2(a_3 - a_4 - 2a_5 - a_6) - a_4a_6 + a_5a_6 + a_6^2 \ge a_s^2$ \\
      $C^A_{2,3,s}$ & $a_3 \ge a_s$ \\
      $C^A_{2,4,s}$ & $a_3^2 + a_3a_4 + a_4^2 \ge a_s^2$ \\
      $C^A_{2,5,s}$ & $a_3^2 + a_4^2 + a_3(a_4 - a_5) + a_4a_5 + a_5^2 \ge a_s^2$ \\
      $C^A_{2,6,s}$ & $a_3^2 + a_4^2 + a_5^2 + a_3(a_4 - a_5 - 2a_6) + a_4(a_5 - a_6) + a_5a_6 + a_6^2 \ge a_s^2$ \\
      $C^A_{3,4,s}$ & $a_4 \ge a_s$ \\
      $C^A_{3,5,s}$ & $a_4^2 + a_4a_5 + a_5^2 \ge a_s^2$ \\
      $C^A_{3,6,s}$ & $a_4^2 + a_5^2 + a_4(a_5 - a_6) + a_5a_6 + a_6^2 \ge a_s^2$ \\
      $C^A_{4,5,s}$ & $a_5 \ge a_s$ \\
      $C^A_{4,6,s}$ & $a_5^2 + a_5a_6 + a_6^2 \ge a_s^2$ \\
      $C^A_{5,6,s}$ & $a_6 \ge a_s$ \\
    \end{tabularx}

  \end{tcolorbox}
  \label{table:A_conditions}
\end{table}

The proofs of Lemma~\ref{lem:type_a} (necessary properties) and Lemma~\ref{lem:type_b} (sufficient conditions) in this section establish the formal geometric foundation for discovering Trapped Regular Point Lemmas. These predicates constitute one of the two primary paths for systematically constructing new verification functions in our framework.

\subsection{Proof of Theorems in Section~\ref{sec:system}}
\label{sec:appendix_proof_thms_sec_3}
\begin{tcolorbox}[colback=blue!5, colframe=blue, boxrule=1pt, arc=2mm]
\textbf{Theorem~\ref{thm:vertex_max}} (restated).
Let $H \subset \mathcal{W}$ be a bounded axis-aligned hyperrectangle and let $f$ be any function in $\mathcal{F}_{\text{ver}}$. A key property is that the global maximum of $f$ over $H$ is attained at one of the vertices. Consequently, if
\begin{align}
\textstyle\max_{\vw \in \mathcal{V}(H)} f(\vw) \le 0,
\end{align}
then $f(\vw) \le 0$ for all $\vw \in H$, which implies that the underlying splitting function $g(\vw) \le 0$.
\end{tcolorbox}
\begin{proof}
Since the set of vertices $\mathcal{V}(H)$ is finite, the value $M = \max_{\vv \in \mathcal{V}(H)} f(\vv)$ is well-defined.
Consider the sublevel set of $f$ at level $M$:
\[
\mathcal{L}_M = \{ \vw \in H \mid f(\vw) \le M \}.
\]
By the definition of $\mathcal{F}_{\text{ver}}$, $\mathcal{L}_M$ is an orthogonally convex set. By our definition of $M$, it is clear that all vertices of the hyperrectangle are contained in this set, i.e., $\mathcal{V}(H) \subseteq \mathcal{L}_M$.

To prove that $f(\vw_0) \le M$ for any $\vw_0 \in H$, it suffices to show that $H \subseteq \mathcal{L}_M$. We rely on the following geometric lemma:
\begin{tcolorbox}[colback=magenta!5, colframe=magenta, boxrule=1pt, arc=2mm]
\begin{lemma}
Let $S \subseteq \mathbb{R}^n$ be an orthogonally convex set and $H \subset \mathbb{R}^n$ be a bounded axis-aligned hyperrectangle. If $\mathcal{V}(H) \subseteq S$, then $H \subseteq S$.
\end{lemma}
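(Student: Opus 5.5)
We argue by induction on the dimension of the faces of $H$, using only the defining property of orthogonal convexity: the intersection of $S$ with any line parallel to a coordinate axis is connected, i.e.\ it is an interval (possibly empty). Here orthogonal convexity of $\mathcal{L}_M$ is exactly what the Shape Constraint in Definition~\ref{def:verification_function} provides. On each axis-parallel line, $f$ is non-increasing then non-decreasing, so the sublevel set $\{f\le M\}$ restricted to that line is an interval.

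Write $H=\prod_{i=1}^n[\ell_i,u_i]$. For $k=0,1,\dots,n$, let $\mathcal{E}_k$ denote the set of $k$-dimensional faces of $H$. Each such face is obtained by fixing $n-k$ coordinates at one of their endpoint values and letting the remaining $k$ coordinates range over their full intervals. The claim to prove by induction on $k$ is that every face in $\mathcal{E}_k$ is contained in $S$. The base case $k=0$ is the hypothesis $\mathcal{V}(H)\subseteq S$.

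For the inductive step, take a face $G\in\mathcal{E}_{k}$ with free coordinate set $I$, $|I|=k$, and a point $\vw\in G$. Pick any $j\in I$. Let $\vw^-$ and $\vw^+$ agree with $\vw$ except that the $j$-th coordinate is set to $\ell_j$ and $u_j$, respectively. Then $\vw^\pm$ lie in $(k-1)$-dimensional faces of $H$ (with free set $I\setminus\{j\}$), so by the induction hypothesis $\vw^\pm\in S$. The segment $[\vw^-,\vw^+]$ lies on a line parallel to the $e_j$ axis, and both its endpoints lie in $S$. By orthogonal convexity the whole segment lies in $S$, and in particular $\vw\in S$. Hence $G\subseteq S$. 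Taking $k=n$, the unique $n$-face is $H$ itself, so $H\subseteq S$.

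Degenerate boxes, where some $\ell_i=u_i$, cause no trouble: the segment collapses to a point that is already covered. The only real subtlety is matching the paper's notion of orthogonal convexity with the "line intersection is an interval" formulation. Once that is fixed, the argument is a routine coordinate-by-coordinate filling of the box. Applying the lemma with $S=\mathcal{L}_M$ then yields $f\le M\le 0$ on $H$, and the Bounding Constraint gives $g\le f\le 0$.
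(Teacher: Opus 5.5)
Your proof is correct and follows essentially the same route as the paper: both fill the box one coordinate at a time by induction, using the fact that an axis-parallel segment whose endpoints lie in $S$ is contained in $S$. The difference is only bookkeeping: you induct on the dimension of the faces of $H$ inside the fixed ambient space, whereas the paper inducts on the ambient dimension $n$ by slicing $S$ at $x_1=a$ and $x_1=b$, which also requires noting that slices of an orthogonally convex set are orthogonally convex — a step your version avoids.
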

\end{tcolorbox}
\begin{proof}[Proof of Lemma]
We proceed by induction on the dimension $n$.
\begin{itemize}
    \item \textbf{Base Case ($n=1$):} $H$ is an interval $[a, b]$. Since $S$ is orthogonally convex, its intersection with the line containing $H$ is a connected interval. Since the endpoints (vertices) $a, b \in S$, the entire interval $[a, b]$ must be contained in $S$.
    \item \textbf{Inductive Step:} Assume the statement holds for dimension $n-1$. Let $H = [a, b] \times H^\prime$, where $H^\prime$ is an $(n-1)$-dimensional hyperrectangle. The vertices of $H$ are the union of $\{a\} \times \mathcal{V}(H^\prime)$ and $\{b\} \times \mathcal{V}(H^\prime)$.
    
    Consider the slice of $S$ at $x_1 = a$, defined as $S_a = \{ \vy \in \mathbb{R}^{n-1} \mid (a, \vy) \in S \}$. Since $S$ is orthogonally convex in $\mathbb{R}^n$, the slice $S_a$ is orthogonally convex in $\mathbb{R}^{n-1}$.
    
    Since $\{a\} \times \mathcal{V}(H^\prime) \subseteq \mathcal{V}(H) \subseteq S$, it follows that $\mathcal{V}(H^\prime) \subseteq S_a$. By the inductive hypothesis, $H^\prime \subseteq S_a$. This implies that the entire face $\{a\} \times H^\prime$ is contained in $S$. By the same logic, the face $\{b\} \times H^\prime$ is also contained in $S$.
    
    Now, take any point $\vw = (x_1, \vw^\prime) \in H$. Consider the line segment connecting $(a, \vw^\prime)$ and $(b, \vw^\prime)$. This segment is parallel to the first coordinate axis. Since the endpoints lie in the faces identified above, they are both in $S$. By the orthogonal convexity of $S$, the entire segment is in $S$, implying $\vw \in S$.
\end{itemize}
\end{proof}
Applying the lemma to $\mathcal{L}_M$, we conclude that $H \subseteq \mathcal{L}_M$. Therefore, for every $\vw_0 \in H$, $f(\vw_0) \le M$. If $M \le 0$, it immediately follows that $f(\vw_0) \le 0$ everywhere in $H$.
\end{proof}
\begin{tcolorbox}[colback=blue!5, colframe=blue, boxrule=1pt, arc=2mm]
\begin{theorem}\label{thm:lemma_verification_function}
    A series of $F \in \mathcal{F}_{\text{ver}}$ can be derived by establishing one of the following two types of lemmas:
    \begin{enumerate}[topsep=0pt,leftmargin=10pt]\setlength{\itemsep}{3pt}
  \item \textbf{Trapped Regular Point Lemma:} A proposition asserting that when the parameters $\vw$ satisfy certain \textbf{linear} constraints, an implicit regular point is guaranteed to lie within a specific bounded polygonal region.

  Formally, let $v_0$ denote the trapped regular point in the lemma. Then for each splitting $\tau = (V^*, S^-, S^+, t^*)$, we can derive a verification function $\tilde{F}_\tau$ from the splitting function $F_\tau$, as long as $t^*$ contains $v_0$ as an endpoint of an edge.
  
  \item \textbf{Valid 4-Point Steiner Tree Lemma:} A proposition asserting that when $\vw$ satisfies certain \textbf{orthogonally convex} conditions, a specific 4-point Steiner tree structure is valid.

  Formally, let $X_1, X_2, X_3, X_4$ denote the 4 points  in the lemma. Then for each splitting $\tau = (V^*, S^-, S^+, t^*)$, we can derive a verification function $\tilde{F}_\tau$ from the splitting function $F_\tau$, as long as $S^+ = \text{SMT}(X_1, X_2, X_3, X_4)$.
\end{enumerate}
\end{theorem}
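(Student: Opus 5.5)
The plan is to build each $\tilde F_\tau$ in the form ``splitting function with the uncontrolled quantity replaced by a monotone proxy, set to $+\infty$ off the lemma's domain.'' Concretely, if the lemma holds on a region $D\subseteq\mathcal W$, I will set $\tilde F_\tau(\vw)=G_\tau(\vw)$ for $\vw\in D$ and $\tilde F_\tau(\vw)=+\infty$ otherwise. Here $G_\tau$ majorizes $F_\tau$ on $D$, and $G_\tau$ is unimodal along every axis-parallel line. Two conditions of Definition~\ref{def:verification_function} then have to be checked.

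The bounding constraint $F_\tau\le\tilde F_\tau$ is trivial off $D$. For the shape constraint, note that a unimodal function restricted to an interval is still unimodal, and extending it by $+\infty$ outside that interval keeps it non-increasing then non-decreasing. So it suffices that $D\cap\ell$ is an interval for every axis-parallel line $\ell$, i.e. that $D$ is orthogonally convex. In case 2 this is exactly the hypothesis. In case 1, $D$ is given by linear constraints, hence is convex and in particular orthogonally convex.

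\textbf{Case 1 (trapped point).} Write $F_\tau=\rho L_{t^*}+L_{S^+}-L_{S^-}$. All vertices except $v_0$ are affine in $\vw$ (Definition~\ref{def:parameter_space} and the remark after it), so every edge of $S^+$, $S^-$, and every edge of $t^*$ not touching $v_0$, is a norm of an affine map of $\vw$. Each such term is therefore convex. For an edge $(v_0,p)$ of $t^*$, the lemma gives $v_0\in P(\vw)$, a polygon whose vertices are affine in $\vw$ (for example $A_u\dots A_vH$, where $H$ is also affine because the chain directions are fixed). Then
\[
|v_0p|\le \max_{q\ \text{vertex of}\ P(\vw)}|qp|,
\]
since distance to $p$ is convex and is maximized on a polygon at a vertex. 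The right-hand side, which is what \texttt{AX\_upper\_bound} computes, is a maximum of convex functions and hence convex.

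Doing this substitution for every edge at $v_0$ gives $G_\tau\ge F_\tau$ on $D$. In $G_\tau$, the terms $\rho L_{t^*}$ and $L_{S^+}$ are convex with nonnegative coefficients. The remaining part $-L_{S^-}$ is where the difficulty lies, discussed below.

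\textbf{Case 2 (four-point lemma).} On $D$, the $(AB)$-$(CD)$ tree is valid, so by Lemma~\ref{lem: length_four_pt_ST} its length is $|UV|$. Here $U$ and $V$ are $60^\circ$ rotations of affine points, so $|UV|$ is the norm of an affine function of $\vw$ and is convex. Any valid Steiner tree on those four points is an admissible choice of $S^+$, so taking $L_{S^+}=|UV|$ gives equality with the splitting function on $D$. The other terms are treated as in Case 1.

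\textbf{Main obstacle.} It is the term $-L_{S^-}$. Since the edges of $S^-$ are edges of the local structure, $L_{S^-}$ is a sum of coordinates of $\vw$ (plus the normalized constant $1$), so it is linear. A convex function plus a linear function is convex. A convex function is unimodal on every line, giving the shape constraint.

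I would make this linearity explicit, namely that $S^-$ consists only of parametrized edges and not of Steiner-reconstructed distances. If $S^-$ ever involved a non-coordinate length, I would instead upper-bound $-L_{S^-}$ by its linear part, using that the removed edges are among the parametrized ones. The remaining check is that the $+\infty$ extension interacts correctly with unboundedness. This is handled because unimodality is a one-dimensional, line-by-line property, so the argument above applies on each axis-parallel line regardless of whether $D$ is bounded.
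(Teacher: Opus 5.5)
Your route is the paper's: restrict to the orthogonally convex region with value $+\infty$ elsewhere, majorize the uncontrolled length by a convex function, and use convexity to get unimodality on axis-parallel lines. In case 1 you are actually more explicit than the paper, which only says to \emph{examine whether} $\tilde F_\tau$ qualifies. Your bound $\max_q|qp|$ is a maximum of norms of affine maps, and that supplies the missing convexity.

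The one step that does not hold up is the claim that every edge of $S^+$ is a norm of an affine map of $\vw$. In general $S^+$ contains Steiner points, and they do not move affinely with $\vw$. For at most three terminals the conclusion that $L_{S^+}$ is convex survives, but for a different reason: $L_{S^+}(\vw)=\inf_{\vq}\sum_i\|p_i(\vw)-\vq\|$ is a partial minimization of a jointly convex function (Appendix~\ref{sec:appendix_convex}).

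Case 1 is asserted for every splitting with $v_0$ in $t^*$, so it also covers $S^+$ with four or more terminals. There the SMT length is a minimum over topologies and need not be convex. For the corners of a $1\times t$ rectangle it equals $\min(1+\sqrt3\,t,\ t+\sqrt3)$ near $t=1$, which has a concave kink. Adding a linear term with slope in $(-\sqrt3,-1)$, which $-L_{S^-}$ can supply, produces an interior local maximum. So your $G_\tau$ can violate the shape constraint.

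The repair is to majorize $L_{S^+}$ as well. Fix one full topology on the terminals of $S^+$ and replace $L_{S^+}$ by $\Lambda(\vw)=\inf_{\vq}\sum_{\text{edges}}\|\cdot\|$, with the infimum taken over the Steiner positions $\vq$. This is again a partial minimization of a jointly convex function, hence convex. Also $\Lambda\ge L_{S^+}$, because every choice of $\vq$ gives a connected network on the same terminals. Even a fixed spanning tree on those terminals would work, at some cost in tightness. With this change your argument proves case 1 for every splitting.

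A smaller point concerns case 2. The statement requires $S^+=\mathrm{SMT}(X_1,\dots,X_4)$, and validity of the $(AB)$-$(CD)$ topology then gives only $L_{S^+}\le|UV|$, not equality, since the other topology may be shorter. That inequality is exactly what the bounding constraint needs. It is also cleaner than redefining $S^+$ as the specific tree: that tree is only guaranteed to exist on $D$, so it does not define a splitting function on all of $\mathcal W$.
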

\end{tcolorbox}
\begin{proof}
Our objective is to explain how a series of verification functions $F \in \mathcal{F}_{\text{ver}}$ can be derived from these two types of lemmas.
Notice that both linear constraints and orthogonally convex conditions restrict the parameter vector $\vw$ to an orthogonally convex set $\Omega$.

Consequently, for the first type of lemma, let the trapped regular point be denoted by $v_0$. We can consider the set $t^*$ that contains $v_0$ as an endpoint of an edge and its corresponding splitting function:
$$F_{\tau}(\vw, \rho) = \rho \cdot L_{t^*} + L_{S^+} - L_{S^-}.$$
Since $v_0$ is confined within a certain polygonal region, we can derive an upper bound $\widetilde{L}_{t^*}$ for $L_{t^*}$ by estimating the upper bounds of the edges associated with $v_0$.
Next, we examine whether the function
$$\widetilde{F_{\tau}}(\vw, \rho) = \rho \cdot \widetilde{L}_{t^*} + L_{S^+} - L_{S^-}$$
satisfies the definition of a verification function, where the underlying splitting function is $F_{\tau}(\vw, \rho)$. If it does, we have successfully obtained a new verification function.

For the second type of lemma, let the vertices of the 4-point Steiner tree be $X_1, X_2, X_3, X_4$. We can consider the set $S^+$ that includes this Steiner tree and its corresponding splitting function:
$$F_{\tau}(\vw, \rho) = \rho \cdot L_{t^*} + L_{S^+} - L_{S^-}.$$
Analogous to the treatment of the first type of lemma, when $\vw \in \Omega$, we can derive a new verification function by estimating an upper bound $\widetilde{L}_{S^+}$ for $L_{S^+}$ and then checking if
$$\widetilde{F_{\tau}}(\vw, \rho) = \rho \cdot L_{t^*} + \widetilde{L}_{S^+} - L_{S^-}$$
satisfies the definition of a verification function. By Lemma~\ref{lem: length_four_pt_ST} and the fact that the coordinates of $U$ and $V$ in the statement of the lemma are all affine functions of $\vw$, we conclude that $\widetilde{L}_{S^+}$ is convex. Restricting the convex function to the orthogonally convex set $\Omega$ yields a verification function.
\end{proof}
\begin{tcolorbox}[colback=blue!5, colframe=blue, boxrule=1pt, arc=2mm]
\textbf{Theorem ~\ref{thm:finding_regular_point}} (restated).
    Let $\{C^{A}_{i}\}_{i=1}^{k} \subseteq \gC^{A}$ and $\{C^{B}_{j}\}_{j=1}^{l} \subseteq \gC^{B}$ be subsets of conditions. Suppose a linear constraint $C$ on $\vw$ satisfies the implication:
    $\textstyle C \land \left( \bigwedge_{i=1}^{k} C^{A}_{i} \right) \implies \bigvee_{j=1}^{l} C^{B}_{j}.$
    Then, if $\vw$ satisfies $C$, the regular point $A_n$ is trapped inside the corresponding polygonal region.
\end{tcolorbox}
\begin{proof}
This follows directly from the definitions of Lemma ~\ref{lem:type_a} and Lemma ~\ref{lem:type_b}.    
\end{proof}

This section proves two cornerstone components of our theoretical framework:
\begin{enumerate}
    \item \textbf{Theorem~\ref{thm:vertex_max} (Vertex-Maximization Property)} is the foundation of our entire verification strategy. It fundamentally addresses the challenge of computational tractability by reducing the verification problem over a continuous parameter domain to a check on its discrete vertices\footnote{The algorithm implementing this vertex-checking strategy is provided in Appendix~\ref{sec:appendix_reward_model_alg_detail}.}.
    \item \textbf{Pathways for Constructing New Verification Functions:} The efficacy of Theorem~\ref{thm:vertex_max} relies on a sufficiently rich set of verification functions, $\mathcal{F}_{\text{ver}}$. Our framework systematically augments this set by establishing two types of lemmas:
    \begin{itemize}
        \item \textbf{Trapped Regular Point Lemma:} Theorem~\ref{thm:finding_regular_point} formally establishes this pathway. It leverages the A/B predicates from Section~\ref{sec:appendix_AB_condition} to transform the search for such lemmas into a formal logical deduction task.
        \item \textbf{Valid 4-Point Steiner Tree Lemma:} This second pathway is designed to handle more complex 4-point splits. Based on Theorem~\ref{thm:4point_validity}, it operates by identifying additional geometric constraints that ensure the corresponding complex splitting function satisfies the shape constraints required by Theorem~\ref{thm:vertex_max}.
    \end{itemize}
\end{enumerate}

\subsection{Proof of Convexity for Splitting Functions} \label{sec:appendix_convex}

In this section, we provide the proof that for any split where the component $S^+$ connects a set of terminals of size at most 3 (i.e., $|S^+| \le 3$), the resulting splitting function $F(\vw, \rho)$ is a convex function of the geometric parameters $\vw$.

Recall the definition of the splitting function:
\[
    F(\vw, \rho) = \rho \cdot L_{t^*}(\vw) + L_{S^+}(\vw) - L_{S^-}(\vw).
\]
We analyze the convexity of each term individually. Since we directly use edge lengths as parameters, the coordinates of points in the local substructure are affine functions of the parameter vector $\vw$.

\subsubsection{Linearity of $L_{S^-}$}
The term $L_{S^-}$ represents the sum of the lengths of the edges removed from the original Steiner tree. In our local parameterization, the edge lengths of the initial tree $S$ are directly used as the parameters (or linear combinations thereof). Consequently, $L_{S^-}(\vw)$ is an affine function of $\vw$ and therefore $-L_{S^-}(\vw)$ preserves convexity.

\subsubsection{Convexity of $L_{t^*}$} \label{sec:appendix_convex_Lt}
The term $L_{t^*}$ represents the length of the connector graph $t^*$, which consists of edges connecting specific pairs of points from $V$. Let $\vu(\vw)$ and $\vv(\vw)$ be the coordinates of two such points. The Euclidean distance between them is:
\[
    d(\vu, \vv) = \| \vu(\vw) - \vv(\vw) \|_2.
\]
The Euclidean norm $\|\cdot\|_2$ is a convex function and  $\vu(\vw) - \vv(\vw)$ is affine in $\vw$. The composition of a convex function with an affine mapping is convex. Thus, the distance between any pair of points is convex in $\vw$. Since $L_{t^*}$ is a sum of such distances (with $\rho \ge 0$), $\rho \cdot L_{t^*}(\vw)$ is convex.

\subsubsection{Convexity of $L_{S^+}$ for $|S^+| \le 3$}
The term $L_{S^+}$ denotes the length of the Steiner Minimal Tree for a subset of boundary points $V^\prime \subset \mathbb{R}^2$. We consider the cases where $|V^\prime| \le 3$.

\textbf{Case 1: $|V^\prime| \le 2$.} 
If $|V^\prime| = 2$, $L_{S^+}$ is simply the Euclidean distance between the two points. If $|V^\prime| < 2$, the length is 0. In both cases, the convexity follows the same logic as Section~\ref{sec:appendix_convex_Lt}.

\textbf{Case 2: $|V^\prime| = 3$.}
Let $V^\prime = \{A, B, C\}$. The length of the Steiner Minimal Tree is obtained by introducing at most one Steiner point $q$. The length function is defined by the minimization problem:
\[
    L_{S^+}(\vw) = \inf_{\vq \in \mathbb{R}^2} \left( \|A(\vw) - \vq\|_2 + \|B(\vw) - \vq\|_2 + \|C(\vw) - \vq\|_2 \right).
\]
Let $g(\vw, \vq) = \|A(\vw) - \vq\|_2 + \|B(\vw) - \vq\|_2 + \|C(\vw) - \vq\|_2$.
Note that the term $\|A(\vw) - \vq\|_2$ is the Euclidean norm of an affine function of the joint vector $(\vw, \vq)$. Therefore, it is jointly convex in $(\vw, \vq)$. Since the sum of convex functions is convex, $g(\vw, \vq)$ is jointly convex in $\vw$ and $\vq$.

We invoke the standard result from convex analysis: \textit{if $g(x, y)$ is jointly convex in $(x, y)$ and $C$ is a convex set, then the function $h(x) = \inf_{y \in C} g(x, y)$ is convex in $x$.}
Applying this to our case, $L_{S^+}(\vw) = \inf_{\vq} g(\vw, \vq)$ is a convex function of $\vw$.

\subsubsection{Conclusion}
The splitting function $F(\vw, \rho)$ is the sum of a convex term ($\rho L_{t^*}$), a convex term ($L_{S^+}$), and an affine term ($-L_{S^-}$). Since the sum of convex functions is convex, $F(\vw, \rho)$ is convex with respect to $\vw$.

This section proves the convexity of a class of basic splitting functions. As these functions inherently satisfy the constraints required by Theorem~\ref{thm:vertex_max}, they serve as the initial members of our verification set $\mathcal{F}_{\text{ver}}$. This set is then subsequently augmented by new functions discovered via the two pathways outlined in the previous subsection.

\subsection{Theoretical Foundation for the Monotonicity Check} \label{sec:appendix_monotonicity}

The vertex-checking strategy of Theorem~\ref{thm:vertex_max} is not directly applicable to unbounded parameter regions. To address this case, we must first verify the function's monotonicity along any unbounded dimensions, thereby reducing the problem to one on a bounded domain. The lemmas presented in this section (Lemmas~\ref{lemma:path_deriv}, \ref{lemma:smt_deriv}, and \ref{lemma:smt_deriv_f}) provide the necessary theoretical guarantees for this critical monotonicity check.

We first describe the verification procedure designed for the splits where $|S^+|\le 3$ that do not involve implicit regular points.

\subsubsection{Geometric Foundations}
Let $x$ be the parameter representing the length of a specific edge $e$ in the tree topology. We first establish upper bounds on the partial derivatives of the length components with respect to $x$.
\begin{lemma}[Derivative of Path Length]
\label{lemma:path_deriv}
Let $t^*$ be a graph connecting points in $V$, and let $L_{t^*}$ be its total length. Then:
\[
    \textstyle\frac{\partial L_{t^*}}{\partial x} \le \sum_{(u, v) \in t^*} \mathbb{I}(e \in \text{path}(u, v)),
\]
where $\mathbb{I}(\cdot)$ is the indicator function, and $\text{path}(u, v)$ denotes the unique path between $u$ and $v$ in the full tree structure.
\end{lemma}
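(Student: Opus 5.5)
The statement is a one-sided bound on a directional derivative. The plan is to reduce it to a single edge $(u,v)\in t^*$ and sum. Since $L_{t^*}=\sum_{(u,v)\in t^*}\|\vu(\vw)-\vv(\vw)\|_2$ and differentiation is linear, it suffices to prove
\[
\frac{\partial}{\partial x}\|\vu(\vw)-\vv(\vw)\|_2\le \mathbb{I}(e\in\text{path}(u,v))
\]
for each pair. Where the norm is not differentiable, namely when $\vu=\vv$, we interpret the derivative one-sidedly, i.e.\ as a directional derivative. The bound still holds there, since a norm's directional derivative is at most the norm of the direction vector.

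Next I express the displacement through the tree. In the full Steiner tree all edge directions are fixed by the $120^\circ$ rule once the configuration topology is fixed, as noted after Definition~\ref{def:parameter_space}: positions depend affinely on $\vw$, and each edge contributes its length times a fixed unit vector. Writing the unique tree path from $u$ to $v$ as a sequence of edges $e_1,\dots,e_k$ with fixed unit directions $\vec d_j$, we get
\[
\vv(\vw)-\vu(\vw)=\sum_{j}\ell_{e_j}\,\vec d_j ,
\]
where $\ell_{e_j}$ is the length of $e_j$. Here one must check that varying $x$ only changes the length $\ell_e$ and does not rotate other edges. This holds because the angles are fixed at $120^\circ$ and the normalized edge $AX$ anchors the orientation. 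Consequently $\partial(\vv-\vu)/\partial x$ is $\vec d_e$ if $e$ lies on the path and $0$ otherwise.

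Finally, apply the chain rule to the norm: at $\vv\ne\vu$,
\[
\frac{\partial}{\partial x}\|\vv-\vu\|=\Big\langle \frac{\vv-\vu}{\|\vv-\vu\|},\,\frac{\partial(\vv-\vu)}{\partial x}\Big\rangle .
\]
By Cauchy--Schwarz this is at most $\|\vec d_e\|=1$ when $e$ is on the path and equals $0$ otherwise. Summing over the edges of $t^*$ gives the lemma.

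The main obstacle is justifying that the positions depend on $x$ only through the single term $x\,\vec d_e$, i.e.\ that the edge directions are fixed. This relies on the full-Steiner-tree structure and on fixing the topology of the local configuration. I would state this explicitly, citing Lemma~\ref{lm:basic_property}. Everything after that step is routine.
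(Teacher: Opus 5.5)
Your proposal is correct and follows essentially the same argument as the paper: reduce to a single edge $(u,v)$ of $t^*$, observe that changing $x$ moves $v$ relative to $u$ by the unit direction of $e$ exactly when $e$ lies on the tree path (and not at all otherwise), then bound the resulting $\cos\theta$ by $1$ and sum over the edges. Your explicit path-sum representation of $\vv-\vu$ and your one-sided treatment of the degenerate case $\vu=\vv$ make rigorous what the paper phrases as a rigid-component argument.
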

\begin{proof}
Let $(u, v)$ be an edge in $t^*$. Its length is the Euclidean distance $\|u(\vw) - v(\vw)\|$.
If the edge $e$ (associated with parameter $x$) does not lie on the path between $u$ and $v$ in the underlying tree, then $u$ and $v$ belong to the same rigid component with respect to $e$. Changing $x$ translates both points identically, so $\|u - v\|$ is constant, and the derivative is 0.

If $e$ lies on the path between $u$ and $v$, increasing $x$ translates one endpoint (say $v$) relative to the other ($u$) by a vector $\mathbf{d}$ with $\|\mathbf{d}\| = 1$ (the unit direction of edge $e$). The derivative is:
\[
    \textstyle\frac{\partial \|u - v\|}{\partial x} = \left\langle \nabla_v \|u - v\|, \mathbf{d} \right\rangle = \left\langle \frac{v - u}{\|v - u\|}, \mathbf{d} \right\rangle = \cos \theta,
\]
where $\theta$ is the angle between the chord $uv$ and the edge $e$. Since $\cos \theta \le 1$, the derivative is bounded by 1. Summing over all edges in $t^*$ yields the result.
\end{proof}
\begin{tcolorbox}[
    colback=magenta!5, 
    colframe=magenta, 
    boxrule=1pt, 
    arc=2mm, 
    breakable   
]
\begin{lemma}[Derivative of SMT Length]
\label{lemma:smt_deriv}
Let $S^+$ be the Steiner Minimal Tree for a set of terminals $V^\prime$, with $|V^\prime| \le 3$. Then:
\[
    \textstyle\frac{\partial L_{S^+}}{\partial x} \le 1.
\]
\end{lemma}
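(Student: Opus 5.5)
We prove the bound by cases on $|V'|$, mirroring the convexity analysis of Section~\ref{sec:appendix_convex}, and in each case reduce to the single-distance estimate of Lemma~\ref{lemma:path_deriv}. Throughout, $x$ is the length of edge $e$. Increasing $x$ splits the points of the local structure into two rigid groups: the points on the far side of $e$ translate by a unit vector $\mathbf{d}$, and the remaining points stay fixed. We interpret $\partial L_{S^+}/\partial x$ as a one-sided (upper Dini) derivative wherever $L_{S^+}$ fails to be differentiable. This is enough for the monotonicity check.

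\textbf{Case $|V'|\le 2$.} If $|V'|<2$, then $L_{S^+}\equiv 0$. If $|V'|=2$, then $L_{S^+}$ is a single Euclidean distance $\|u-v\|$. By the computation in Lemma~\ref{lemma:path_deriv}, its derivative in $x$ is either $0$ or $\cos\theta\le 1$.

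\textbf{Case $|V'|=3$, $V'=\{A,B,C\}$.} Use the variational formula
\[
L_{S^+}(\vw)=\min_{\vq}\, g(\vw,\vq), \qquad g(\vw,\vq)=\sum_{P\in V'}\|P(\vw)-\vq\|_2 .
\]
Fix $\vw$ and let $\vq^\star$ be the minimizer, which is the Fermat point or a terminal. For small $h>0$, the points are determined by the parameter vector $\vw+h\mathbf{e}_x$. The moving terminals (those separated from the rest by $e$) all translate by $h\mathbf{d}$. Since $e$ separates $V'$ into two groups, at most two of the three terminals move, or all but one are fixed.

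The key trick is to choose the comparison point $\vq=\vq^\star$ or $\vq=\vq^\star+h\mathbf{d}$, depending on which side of $e$ holds the majority of $V'$. Then only the minority terminals change their distance to $\vq$.

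Suppose first that only one terminal $P$ lies on the other side of $e$ from the rest, where ``the rest'' may be fixed or moving. In this case we keep $\vq$ rigidly attached to the majority group. Then
\[
L_{S^+}(\vw+h\mathbf{e}_x)\le g(\cdot,\vq)\le L_{S^+}(\vw)+\bigl|\,\|P+h\mathbf{d}-\vq^\star\|-\|P-\vq^\star\|\,\bigr|\le L_{S^+}(\vw)+h
\]
by the triangle inequality. Dividing by $h$ and letting $h\to0^+$ gives the bound $1$. The same argument with $h<0$ gives the symmetric statement, so the bound holds for whichever side derivative is used.

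Since $|V'|=3$, any bipartition of $V'$ by $e$ has a side of size at most one. Hence this majority choice always exists, and no further subcases arise. If no terminal of $V'$ is separated by $e$, then $L_{S^+}$ is constant in $x$ and the derivative is $0$.

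\textbf{Main obstacle.} The main obstacle is rigor at nondifferentiable points, where the Fermat point degenerates to a terminal. The comparison argument above sidesteps this because it never differentiates $\vq^\star$. It only uses $L_{S^+}(\vw')\le g(\vw',\vq)$ for a well-chosen $\vq$, together with the $1$-Lipschitz property of the norm. So the core step is to state this envelope-type inequality cleanly and to check the bipartition claim.
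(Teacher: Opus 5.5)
Your proof is correct. It uses the same decomposition as the paper: split on $|V'|$, and for $|V'|=3$ reduce to the single terminal $P$ on the minority side of $e$. The key step, however, is different.

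The paper uses the Envelope Theorem. It differentiates $\|A-q\|+\|B-q\|+\|C-q\|$ at the optimal $q$ with $q$ held fixed, reads off $\cos\theta\le 1$, and relies on translation invariance to treat the separated terminal as the one that moves. You instead use only the comparison inequality $L_{S^+}(\vw')\le g(\vw',\vq)$ at a suboptimal point $\vq$ frozen relative to the majority side, together with the triangle inequality. That is the ``easy half'' of the envelope argument, and it is all an upper bound needs.

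Your route is rigorous in the degenerate configurations the paper's differentiation does not cover. The paper tacitly assumes a genuine Steiner point $q$ with $\|A-q\|$ differentiable. This fails, for instance, when the minimizer coincides with the separated terminal itself (an angle of at least $120^\circ$ there). Your argument also yields the finite-difference bound $L_{S^+}(x+h)-L_{S^+}(x)\le h$ for $h>0$ directly. That is what the monotonicity check ultimately needs: non-increase of $F_\tau$ along an unbounded direction. It is the same comparison idea the paper itself uses for Lemma~\ref{lemma:smt_deriv_f}. Where the paper's computation is valid, it gives the exact derivative $\cos\theta$, which is more information than the lemma needs.

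One small correction. With $h<0$, your displayed computation bounds the difference quotient from below by $-1$, not from above by $1$. The upper bound for the left-hand quotient comes from applying your $h>0$ inequality at the base point $\vw+h\mathbf{e}_x$. Since the forward bound holds at every base point, $L_{S^+}$ is $1$-Lipschitz in $x$, and all Dini derivatives lie in $[-1,1]$.

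The same triangle-inequality argument also handles the $|V'|=2$ case when the two points coincide. There the gradient formula borrowed from Lemma~\ref{lemma:path_deriv} is undefined.
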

\end{tcolorbox}
\begin{proof}
If $|V^\prime| \le 2$, $L_{S^+}$ is a single Euclidean distance (or 0), and the result follows from Lemma~\ref{lemma:path_deriv}.

Consider the case $|V^\prime| = 3$, say $V^\prime=\{A, B, C\}$. The SMT consists of a Steiner point $q$ connecting to $A, B, C$. The length is $L = \|A-q\| + \|B-q\| + \|C-q\|$.
By the Envelope Theorem, the derivative of the optimal value function with respect to a parameter is equal to the partial derivative of the objective function evaluated at the optimum. Thus, we do not need to consider the change in the optimal position of $q$.
If edge $e$ separates the terminals (e.g., separating $A$ from $\{B, C\}$), increasing $x$ translates $A$ relative to the rest of the system by a unit vector $\mathbf{d}$. The derivative is:
\[
    \textstyle\frac{\partial L}{\partial x} = \frac{\partial \|A - q\|}{\partial x} = \left\langle \frac{A - q}{\|A - q\|}, \mathbf{d} \right\rangle = \cos \theta \le 1.
\]
If $e$ does not separate the terminals, they move as a rigid body, and the derivative is 0. In all cases, the upper bound is 1.
\end{proof}

\subsubsection{Special Analysis on the Monotonicity of $f$}\label{app:mono_check_f}

We extend Lemma~\ref{lemma:smt_deriv} to $S^+$ connecting a set of terminals whose size is larger than $3$.

\begin{lemma}[Derivative of SMT Length with respect to $f$]\label{lemma:smt_deriv_f}
    Let $S^+$ be the Steiner Minimal Tree for a set of terminals $V^\prime$. Then:
\[
    \textstyle\frac{\partial L_{S^+}}{\partial f} \le 1.
\]
\end{lemma}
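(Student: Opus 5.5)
The plan is to extend the argument of Lemma~\ref{lemma:smt_deriv} from three terminals to arbitrarily many. We never differentiate the optimal topology of $S^+$. Instead we bound the change of $L_{S^+}$ by the change of a fixed competitor tree. Concretely, let $f$ be the length of a fixed edge $e$ of the local structure. Increasing $f$ to $f+h$ (with $h>0$) translates every point on one side of $e$ by $h\mathbf{d}$, where $\mathbf{d}$ is the unit direction of $e$. The points on the other side stay fixed. This splits the terminal set $V'$ into a moving part $V'_1$ and a fixed part $V'_2$. If one of these is empty, $V'$ moves rigidly, $L_{S^+}$ is constant, and the derivative is $0$.

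For the main step, let $T$ be an SMT for $V'$ at parameter $f$. We build a Steiner tree for the perturbed terminals at parameter $f+h$. Take $T$, translate every vertex belonging to the component carrying $V'_1$ by $h\mathbf{d}$, and reconnect that component to the rest with one extra segment of length $h$. The cleanest version is even simpler: attach to each terminal of $V'_1$ the segment from its old position to its new one. This is wasteful, so we instead translate the entire subtree of $T$ spanned by $V'_1$ rigidly. That forces us to choose a single edge of $T$ to stretch. The natural choice is this: if $T$ has an edge separating $V'_1$ from $V'_2$, we cut it, translate the $V'_1$-side by $h\mathbf{d}$, and re-join the cut endpoints. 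The new length of the cut edge is at most its old length plus $h$, by the triangle inequality. This gives
\[
L_{S^+}(f+h)\le L_T + h = L_{S^+}(f)+h .
\]
The same argument with $h<0$ gives the symmetric bound, so $L_{S^+}$ is $1$-Lipschitz in $f$. Wherever it is differentiable, $\partial L_{S^+}/\partial f\le 1$. Otherwise we interpret the claim as a bound on the upper Dini derivative, which is all the monotonicity check needs.

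I expect the main obstacle to be justifying that $T$ contains an edge, or at least a point, separating $V'_1$ from $V'_2$. In a general SMT the two terminal groups need not be separable by cutting a single edge. I would handle this without needing separation: for the translated configuration, take the union of $T$ (with the $V'_2$ terminals fixed) and the translated copy $T+h\mathbf{d}$ restricted appropriately. Better, cut $T$ at a single arbitrary Steiner point or terminal $p$ and use the segment $[p,p+h\mathbf{d}]$. Formally, the new tree is $T' = \{\text{edges of } T \text{ whose endpoints lie in one class}\}\cup\ldots$. If this gets messy, I would fall back on the specific geometry of $f$ in Figure~\ref{fig:subtree}. There $f$ is an edge attaching a residual subtree at a single node, so $V'_1$ consists of terminals reached through that one node. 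The splitting's $S^+$ then genuinely has them connected through a single junction point, and translating everything behind that junction while adding one segment of length $h$ works directly. The bound $L_{S^+}(f+h)\le L_{S^+}(f)+h$ then follows as above.
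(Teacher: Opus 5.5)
Your main step has a genuine gap. It assumes the SMT $T$ contains an edge separating the moving terminals $V'_1$ from the fixed ones $V'_2$. Your proposed repairs do not close this: cutting at an arbitrary point does not work, and neither does the claim that $S^+$ routes the moving terminals through a single junction.

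No repair of this kind can exist, because for a general moving set the inequality is false. Take the rectangle $(0,0),(w,0),(0,L),(w,L)$ with $L\gg w$, and let the two right-hand vertices be $V'_1$. The SMT pairs the two short sides and has length $L+\sqrt3\,w$ (the $|UV|$ formula). Translating $V'_1$ by $h$ in the $x$-direction therefore raises $L_{S^+}$ at rate $\sqrt3>1$.

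Your fallback fails for the same reason. $S^+$ is a freshly optimized SMT on its own terminals and need not respect the topology of the original tree $S$. So ``reached through a single node of $S$'' does not imply ``joined through a single junction in $S^+$''. This is also why Lemma~\ref{lemma:smt_deriv} is restricted to $|V'|\le 3$: there every bipartition has a singleton side, so by translation invariance only one terminal effectively moves.

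The missing observation, which the paper's proof rests on, is that at most one terminal moves. The terminals of $S^+$ are vertices of the local structure: they are the representatives chosen in Algorithm~\ref{alg:split_enumeration}, and the residual subtree $(Q)$ contributes none of them. The only local vertex on the far side of the edge of length $f$ is its endpoint $Q$. So there are two cases:
\begin{itemize}
\item If $Q\notin V'$, then $L_{S^+}$ does not depend on $f$.
\item Otherwise $V'_1=\{Q\}$, and the construction you set aside as ``wasteful'' is exactly right. $\mathrm{SMT}(V''\cup\{Q_1\})$ together with the segment $Q_1Q_2$, of length $x_2-x_1$, is a Steiner tree for $V''\cup\{Q_2\}$. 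Hence $L_{S^+}(x_2)-L_{S^+}(x_1)\le x_2-x_1$.
\end{itemize}
With this fact inserted, the rest of your proposal, including the Lipschitz and Dini-derivative remarks, goes through.
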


\begin{proof}
    If $Q\notin V^\prime$, then clearly $\frac{\partial L_{S^+}}{\partial f}=0$. Now we focus on the case where $Q\in V^\prime$. Fix the values of all variables except $f$. Let $h(x)$ denote the value of $L_{S^+}$ when $f=x$. We then prove that for any $0\le x_1<x_2$, $h(x_2)-h(x_1)\le x_2-x_1$.

    Let $Q_1$ and $Q_2$ denote the positions of point $Q$ when $f=x_1$ and $f=x_2$, respectively. Let $V^{\prime\prime}$ be the set of vertices in $V^\prime$ excluding $Q$. Since $SMT(V^{\prime\prime}\cup \{Q_1\})$ together with the edge $Q_1Q_2$ forms a Steiner tree for $V^{\prime\prime}\cup\{Q_2\}$, we have $L_S(V^{\prime\prime}\cup\{Q_2\}) \le L_S(V^{\prime\prime}\cup \{Q_1\}) + |Q_1Q_2|$, which by definition implies $h(x_2)-h(x_1)\le x_2-x_1$.
\end{proof}

Based on Lemma~\ref{lemma:smt_deriv_f}, when verifying monotonicity with respect to variable $f$, we can apply the monotonicity check procedure to the splits generated by the LLM using the 4-Point Steiner Tree Lemmas. This forms a key foundation for the reduction strategy for unbounded $f$ in Appendix~\ref{sec:appendix_problem_decomposition}.

\newpage

\section{Experimental Setup} \label{app:experiment_settings}
\subsection{Baseline Verification Functions} \label{app:baseline}

The construction of the baseline verification functions relies on a classical geometric result derived by \citet{du1983new}. This result serves as a specific, manually proved instance of the ``Trapped Regular Point'' lemmas that our LLM agent aims to generalize.

\begin{figure}[h]
    \centering
    \vspace{-20pt}
    \includegraphics[scale=0.6]{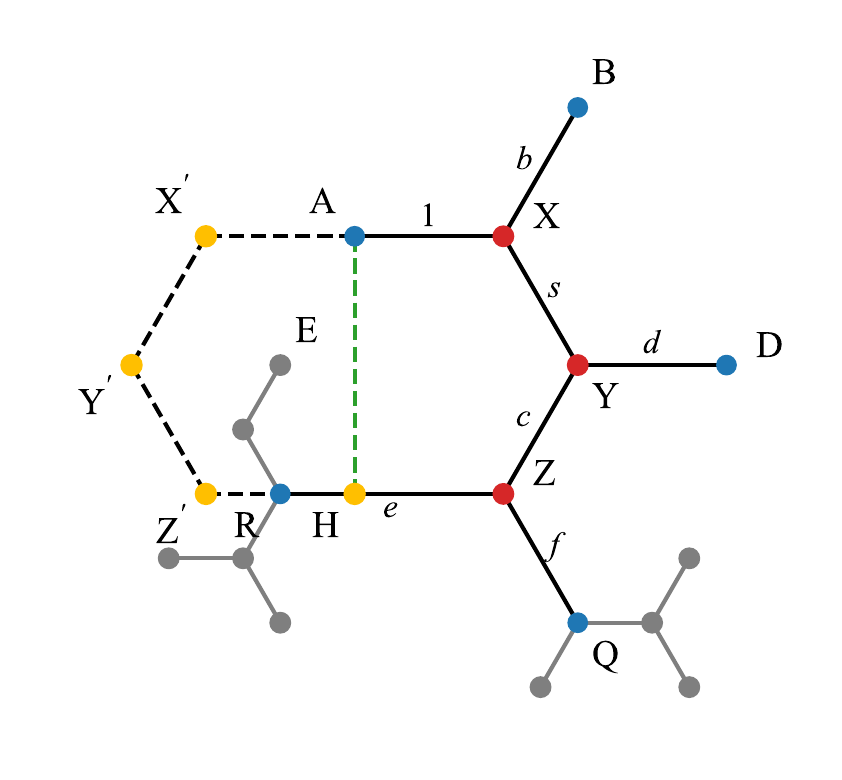}
    \vspace{-20pt}
    \caption{Sample of Subtree: $D$ is regular}
    \label{fig:subtree_restated}
\end{figure}

Recall the subtree of a Steiner Minimal Tree as shown in Figure~\ref{fig:subtree_restated}. Let $H$ be the orthogonal projection of $A$ on the line $ZR$. Let the region $AX^\prime Y^\prime Z^\prime H$ be the mirror reflection of the region $AXYZH$ with respect to $AH$.
\begin{tcolorbox}[
  enhanced, breakable,
  colframe=magenta,             
  colback=magenta!5,            
  colbacktitle=magenta!15,      
  coltitle=black,               
  boxrule=0.75pt,               
  arc=1mm,               
  left=3mm, right=3mm,          
  top=1mm, bottom=1mm,          
  boxsep=1pt,                   
  title={\textbf{\citet{du1983new} Lemma 2}}, 
  fonttitle=\small\sffamily\bfseries,
  attach boxed title to top left={xshift=4mm, yshift*=-\tcboxedtitleheight/2},
  boxed title style={
    sharp corners,
    boxrule=0.75pt,
    colframe=magenta,       
    colback=magenta!15,     
    top=1pt, bottom=0.5pt, left=4mm, right=4mm,
  },
  before skip=10pt, after skip=10pt
]
\begin{lemma}[\citet{du1983new} Lemma 2] \label{lem:du_hwang_manual}
    If $c\le 1$, there must exist a regular point $E$ in the region $AXYZHZ^\prime Y^\prime X^\prime$ that connects to $A$ through $Z$.
\end{lemma}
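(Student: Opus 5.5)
We will reduce the statement to the Type B trapping machinery of Lemma~\ref{lem:type_b}, applied to the Steiner spiral chain that starts at $A$ and turns consistently through $X$, $Y$, $Z$. Starting from $A$, traverse the full Steiner tree by always taking the branch making the same $60^\circ$ turn (toward the side of $H$), as in the definition of the Steiner Spiral Chain. Since every Steiner point has degree $3$ with $120^\circ$ angles (Lemma~\ref{lm:basic_property}), this traversal yields a chain $A=A_0, A_1=X, A_2=Y, A_3=Z, A_4=R,\dots$ with all interior angles $120^\circ$ and all turns in the same direction. In a full Steiner tree the chain must terminate at a regular point $E=A_n$, and by construction $E$ is joined to $A$ through $Z$. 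If the chain stops before passing $Z$, the conclusion is immediate, so the real content is the case where the chain continues past $Z$ toward $R$.

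In that case, we check that the hypothesis $c\le 1$ places us in a Type B situation. Here $c$ is the length of the edge playing the role of $a_2$ in the normalization $|AX|=a_1=1$. Take $u=0$, $v=3$ (line $A_3A_4=ZR$) and $s=1$. The projection $H$ of $A$ onto line $ZR$ lies on the ray $ZR$, and $|AH|$ is compared with $a_1=1$ using the affine coordinates of the spiral. Alternatively, if the labeling makes $v=2$ appropriate, we use the tabulated $C^B_{0,2,1}$, namely $(1-a_2)/2\ge 0$, whose first clause is exactly $c\le 1$. We will verify, with the explicit coordinates $A_k=\sum \lambda_i\varepsilon^{i-1}$, that $c\le1$ forces the required sign and distance inequalities. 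Lemma~\ref{lem:type_b} then gives that $E=A_n$ lies in the polygon $AXYZH$, or is excluded from it, by the same MST-exchange argument as Lemma~\ref{lem:type_a}.

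The reflected region $AX'Y'Z'H$ handles the other chirality. The argument above assumes the spiral turns toward $H$. If instead the relevant branch at the Steiner points turns the other way, we reflect the whole configuration across $AH$ and rerun the same argument. This places $E$ in the mirror polygon, and the union of the two gives the region $AXYZHZ'Y'X'$.

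The main obstacle is the trapping step itself: showing that the chain, once it has passed $Z$, can never escape through segment $AH$. To handle it, we would mimic the proof of Lemma~\ref{lem:type_b}. Suppose some later vertex crosses $AH$. Then some chain segment comes within distance less than $a_1=1$ of $A$, either at a point of $AH$ or of the chain. Replacing edge $AX$ by that shorter connection produces a shorter Steiner tree, contradicting minimality. Combined with non-self-intersection of the chain (via the region-nesting induction of Lemma~\ref{lem:sublemma_for_type_b}), this confines every subsequent vertex, and in particular $E$.
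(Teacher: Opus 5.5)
The paper does not reprove this lemma; it is quoted from Du--Hwang. Judged on its own, your reduction to Lemma~\ref{lem:type_b} has a genuine gap.

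\textbf{The Type~B route does not go through.} The indexing is off. Along the spiral chain from $A$ one has $A_1=X$, $A_2=Y$, $A_3=Z$, $A_4=R$, so $a_2=s$, $a_3=c$, $a_4=e$. This is how the paper instantiates $C^B_{0,4,2}$ as $1+2s+c-e>0$, $\tfrac{\sqrt3}{2}(c+e-1)\le s$ in the Step~2 lemma. Hence $C^B_{0,2,1}$ concerns the line $YZ$, its first clause reads $s\le1$, and its second forces $s\le 2/\sqrt3-1$; it is unrelated to $c\le1$.

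The predicate that actually yields the polygon $AXYZH$ is $C^B_{0,3,\sigma}$ with $\sigma\in\{1,2,3\}$. Its first clause $2+s-c\ge0$ does follow from $c\le1$. Its second clause needs $|AH|=\tfrac{\sqrt3}{2}(s+c)<\max(1,s,c)$, which fails for $s=c=1$, where $|AH|=\sqrt3$. So the promised check that ``$c\le1$ forces the distance inequality'' cannot be carried out.

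Your last paragraph fails for the same reason. Once $|AH|\ge1$, a chain edge can cross $AH$ at distance $\ge1$ from $A$ with no contradiction. That is why the lemma's region is the doubled hexagon. The mirror half is not a second chirality: the turning direction is fixed by $A\to X\to Y\to Z\to R$, and reflection in $AH$ is not a symmetry of the tree ($X',Y',Z'$ are not tree points). The mirror half is a bound on how far the chain can travel after crossing $AH$, and it needs its own argument.

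\textbf{The missing idea} is a barrier measured against $e=|ZR|$ rather than $a_1=1$.
\begin{enumerate}
\item Deleting $ZR$ and adding $AP$ gives $|AP|\ge e$ for every tree point $P$ beyond $ZR$.
\item Deleting the piece $ZP$ gives $|AP|\ge|ZP|$ for $P\in ZR$. Apply this at, or just past, $Z'$. There $|AZ'|=|AZ|\le|ZZ'|$ exactly when $c\le1$, so $R$ lies on the segment $ZZ'$.
\item The next chain edge leaves $R$ parallel to $Z'Y'$. Its line meets line $XA$ at the point $W$ from the proof of Lemma~\ref{lem:step8}, with $|AW|=e+c-1$. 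The hypothesis $c\le1$ is what gives $|AW|\le e$, so the chain beyond $R$ can never reach the segment $AW$. If $e+c<1$, then $W$ lies on the edge $AX$ and non-crossing suffices.
\item A non-crossing argument in the spirit of Lemma~\ref{lem:sublemma_for_type_b} then confines the chain to the polygon $AXYZRW$.
\item If $W$ falls left of $X'$, the part of that polygon outside $AXYZHZ'Y'X'$ is an equilateral triangle $WX'P_1$, with $P_1\in Y'X'$ and side $e+c-2$. Write $w=e+c-1>1$. The vertices lie at distances $w$, $1$ and $\sqrt{w^2-w+1}$ from $A$, all at most $e$, so this triangle is excluded as well.
\end{enumerate}
This yields the lemma; the Type~B predicate with $s=1$ plays no role in it.
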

\end{tcolorbox}
\begin{corollary} \label{cor:trapped_e}
    If $c\le 1$, there must exist an implicit regular point $E$ such that $|AE|\le \max\{|AY|, |AZ|\}$.
\end{corollary}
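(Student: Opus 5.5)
Corollary~\ref{cor:trapped_e} follows from Lemma~\ref{lem:du_hwang_manual}; the remaining work is to bound the distance from $A$ to points of the trapping region. Assuming $c\le 1$, Lemma~\ref{lem:du_hwang_manual} gives a regular point $E$, connected to $A$ through $Z$, lying in the closed polygon $P = AXYZHZ'Y'X'$. We then show that every point of $P$ is within distance $\max\{|AY|,|AZ|\}$ of $A$.

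The function $p\mapsto |Ap|$ is convex on the plane. Its maximum over the convex hull of $P$, and hence over $P$ itself, is therefore attained at a vertex of $P$. So it suffices to check the vertices $A, X, Y, Z, H, Z', Y', X'$. Convexity of $P$ is not needed: $P$ lies in the convex hull of its vertices, and a convex function on a convex hull is maximized at one of the generating points.

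By construction $X', Y', Z'$ are the reflections of $X, Y, Z$ across the line $AH$. Since $A$ lies on this line, reflection preserves distances to $A$: $|AX'|=|AX|$, $|AY'|=|AY|$ and $|AZ'|=|AZ|$. The vertex $A$ contributes $0$.

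It remains to compare $|AX|$ and $|AH|$ with $\max\{|AY|,|AZ|\}$.

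\begin{itemize}
\item For $H$: it is the foot of the perpendicular from $A$ to line $ZR$, so $|AH|\le |AZ|$, the perpendicular being the shortest segment to the line.
\item For $X$: we have $|AX|=1$. Since $\angle AXY=120^\circ$, the law of cosines gives
\[
|AY|^2 = 1 + |XY|^2 + |XY| \ge 1,
\]
so $|AX|\le |AY|$.
\end{itemize}

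Combining these, every vertex, and hence every point of $P$ including $E$, satisfies $|AE|\le \max\{|AY|,|AZ|\}$.

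The main point to check is the vertex list of the region, in particular that $H$ is a genuine vertex. Lemma~\ref{lem:du_hwang_manual} is set up so that $H$ lies on ray $ZR$, as in the Type~B predicate of Lemma~\ref{lem:type_b}. Even if it did not, $H$ is still the perpendicular foot, and the bound $|AH|\le|AZ|$ holds regardless.
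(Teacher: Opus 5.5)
Your proof is correct and follows the route the paper intends. The paper gives no separate proof of this corollary; it treats it as an immediate consequence of Lemma~\ref{lem:du_hwang_manual}, using the same vertex-maximum distance bound it applies to trapping polygons elsewhere (e.g.\ in Lemma~\ref{lem:step1} and the bound $|xA_n|\le\max_{p\in\mathcal V(P_j)}|xp|$). Your explicit checks are exactly what is needed to reduce that vertex maximum to $\max\{|AY|,|AZ|\}$: reflection across $AH$ gives $|AX'|=|AX|$, $|AY'|=|AY|$, $|AZ'|=|AZ|$; the $120^\circ$ angle at $X$ gives $|AX|=1\le|AY|$; and $|AH|\le|AZ|$ because $H$ is the foot of the perpendicular.
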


\textit{Remark}: It is important to note that the proof of Lemma~\ref{lem:du_hwang_manual} does not depend on the topological status of point $A$. Consequently, this trapping property applies universally to any such chain, regardless of whether the starting node is a regular point or a Steiner point. This allows us to apply the same logic symmetrically to the node $D$, even when $D$ is a Steiner point connected to $U$ and $V$.

Due to the symmetry of the Steiner structure, a similar rule applies to the chain starting at $D$.
\begin{tcolorbox}[
    colback=magenta!5, 
    colframe=magenta, 
    boxrule=1pt, 
    arc=2mm, 
    breakable   
]
\begin{corollary} \label{cor:trapped_f}
    If $f\le d$, there must exist an implicit regular point $F$ such that its distance to the terminal is bounded. The bound depends on whether $D$ is a regular point or a Steiner point:
    \begin{itemize}
        \item If $D$ is a regular point, $|DF|\le \max\{|DZ|, |DQ|\}$.
        \item If $D$ is a Steiner point, $|VF|\le \max\{|VY|, |VZ|, |VQ|\}$.
    \end{itemize}
\end{corollary}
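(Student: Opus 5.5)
The plan is to reduce Corollary~\ref{cor:trapped_f} to Lemma~\ref{lem:du_hwang_manual}. We do this by applying that lemma to the mirror-image chain that starts at $D$ instead of $A$, and then converting the region statement into a distance bound as in Corollary~\ref{cor:trapped_e}.

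\textbf{Step 1: set up the mirror configuration.} In the local substructure of Figure~\ref{fig:subtree}, the chain from $A$ passes through $X, Y, Z$ with hypothesis $c \le 1$, where $c$ is compared with the normalized edge $|AX| = 1$. The chain from $D$ turns the opposite way and passes through the Steiner points toward $Q$. Its leading edge has length $d$, and the edge playing the role of $c$ has length $f$.

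Reflecting the plane across a line maps left turns to right turns and preserves $120^\circ$ angles and all lengths. After this reflection and a rescaling by $1/d$, the $D$-chain becomes a Steiner spiral chain of the type treated in Lemma~\ref{lem:du_hwang_manual}. Its first edge has length $1$ and its relevant edge has length $f/d$. The hypothesis $f \le d$ is therefore exactly the hypothesis $c \le 1$ of the lemma. The remark after Corollary~\ref{cor:trapped_e} says the proof of Lemma~\ref{lem:du_hwang_manual} does not use the status of the starting node, so the lemma applies whether $D$ is regular or Steiner. It yields an implicit regular point $F$ in the reflected polygon, namely the analogue of $AXYZHZ'Y'X'$ with $D$ as apex.

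\textbf{Step 2: bound the distance by the vertices.} When $D$ is regular, the trapping polygon has apex $D$, and its other vertices are the chain vertices $Z, Q$ (and intermediate ones) together with their mirror images across the perpendicular $DH'$. The function $P \mapsto |DP|$ is convex, so its maximum over a polygon is attained at a vertex. Mirror images across a line through $D$ are equidistant from $D$, and the foot $H'$ is closer to $D$ than any chain vertex. Hence $|DF| \le \max\{|DZ|, |DQ|\}$, exactly as Corollary~\ref{cor:trapped_e} follows from the lemma.

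When $D$ is a Steiner point joined to $U$ and $V$, we measure from the regular point $V$ instead. The chain starting at $V$ through $D$ has $D$ as its first Steiner point. Under the hypothesis $f \le d$ it traps $F$ in the same polygon, now augmented by the vertex $Y$ or with $V$ as apex. The same convexity argument then bounds $|VF|$ by the maximum of the distances from $V$ to the vertices of the polygon, which gives $\max\{|VY|, |VZ|, |VQ|\}$.

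\textbf{Main obstacle.} The hardest part is the Steiner case. There I must check that the vertex set of the trapping polygon, as seen from $V$, is really covered by $Y, Z, Q$ and their reflections. The reflections are taken across the perpendicular from $D$, not from $V$, so they are no longer equidistant from $V$.

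My first approach is to show that every reflected vertex lies on the same side of the axis as $V$. It is then closer to $V$ than its preimage, and the max over the original vertices suffices. Failing that, I would fall back to a direct coordinate check of these few explicit affine points in terms of $\vw$.
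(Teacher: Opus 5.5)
Your proposal is correct and follows the paper's route. The paper also obtains this corollary purely by symmetry: it applies Lemma~\ref{lem:du_hwang_manual} to the mirrored chain $D,Y,Z,Q,\dots$, with $f\le d$ in the role of $c\le 1$ and the remark covering a Steiner $D$, and reads the bound off the vertices of the trapping polygon. For regular $D$ you only need $|DH'|\le|DQ|$; your claim that $H'$ is closer to $D$ than every chain vertex is false when $c\gg d$.

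For Steiner $D$, where the paper gives no details, your first approach works. The axis $DH'$ is perpendicular to $DY$, and $V$ (at $120^\circ$ from $\vec{DY}$) lies weakly on the opposite side of it from $Y,Z,Q$ (for $Q$ this uses $f\le d$), so each reflected vertex is no farther from $V$ than its preimage. Moreover $H'$ is the midpoint of $QQ'$ and $|VD|\le|VY|$. Delete the sentence about a chain starting at $V$: it does not give a trap under $f\le d$, and it is unnecessary because the trap comes from the $D$-chain as in your Step~1.
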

\end{tcolorbox}
Based on the corollaries above, the baseline set $\gF^{(0)}_{\text{ver}}$ comprises the verification functions resulted from all splits $\tau=(V^*, S^-, S^+, t^*)$ satisfying the following criteria:
\begin{enumerate}
    \item The component $S^+$ satisfies $|S^+|\le 3$.
    \item The set of admissible edges in $t^*$ is augmented to include the implicit connections $(A, E)$ and $(D, F)$ (or $(V, F)$). The length functions for these edges are conditional:
    \begin{itemize}
        \item Edge $(A, E)$:
        \[
        L(A,E) = \begin{cases}\max\{|AY|,|AZ|\} & c\le 1\\+\infty & c > 1\end{cases}
        \]
        \item Edge $(D, F)$ (when $D$ is a regular point):
        \[
        L(D,F) = \begin{cases}\max\{|DZ|,|DQ|\} & f\le d\\+\infty & f > d\end{cases}
        \]
        \item Edge $(V, F)$ (when $D$ is a Steiner point):
        \[
        L(V,F) = \begin{cases}\max\{|VY|,|VZ|, |VQ|\} & f\le d\\+\infty & f > d\end{cases}
        \]
    \end{itemize}
\end{enumerate}

\textbf{Impact on Monotonicity.} When a split utilizes implicit edges $(A,E)$, $(D,F)$, or $(V,F)$, their length bounds depend on local geometric parameters. Establishing the required non-increasing property for these variables is non-trivial. Consequently, our system performs the monotonicity checks (as detailed in Appendix~\ref{sec:appendix_mono_check_detail}) only for variables with respect to which the implicit edge lengths are constant.

\textbf{Explanation of Steiner Ratio greater than $0.824$ for Baseline Settings.} Though we only cited technical lemmas from \citet{du1983new} as the baseline, and that work only claimed a proof for a Steiner ratio lower bound of $\rho=0.8$, our initial ratio reached $0.8282$, even surpassing the previous best-known lower bound of $0.824$ by \citet{chung1985new}. This improvement is due to (1) prior work considered smaller topological structures. Our baseline setting directly considered topologies with a maximum chain length of $4$, as shown in Figure~\ref{fig:subtree}, while previous human experts, constrained by computational capabilities, only considered topologies with a maximum chain length of $3$. This allowed us to more accurately cover the parameter space, resulting in a higher lower bound; (2) prior work relied on manual proofs considering a relatively limited set of verification functions, whereas we leveraged computational mathematics to exhaustively enumerate all possible splits; (3) previous work employed looser scaling techniques, while we utilized more precise computational algorithms, eliminating gaps caused by arithmetic tricks.

\subsection{Problem Decomposition and Search Space Reduction}
\label{sec:appendix_problem_decomposition}

During the iterative improvement phase, the LLM Agent focuses exclusively on refining the existence conditions for the implicit point $E$ (associated with the chain starting at $A$). The conditions for the symmetric implicit point $F$ (associated with the chain starting at $D$) remain fixed to the baseline criterion ($f \le d$).

Consequently, the global verification problem is partitioned into four distinct sub-problems based on the topological status of $D$ and the domain of the parameter $f$:
\begin{enumerate}
    \item \textbf{Case 1:} $D$ is a regular point, and $f \in [0, d]$ ($F$ exists).
    \item \textbf{Case 2:} $D$ is a regular point, and $f \in (d, +\infty)$ ($F$ does not exist).
    \item \textbf{Case 3:} $D$ is a Steiner point, and $f \in [0, d]$ ($F$ exists).
    \item \textbf{Case 4:} $D$ is a Steiner point, and $f \in (d, +\infty)$ ($F$ does not exist).
\end{enumerate}

\noindent\textbf{Reduction Strategy for Unbounded $f$.}
For Cases 2 and 4, where $f > d$, the implicit connection $(D, F)$ (or $(U,F)$ and $(V,F)$) is invalid and thus excluded from the connector graph $t^*$. In these scenarios, we apply a strict filtering policy: we only admit splits whose splitting functions are \textbf{monotonically non-increasing} with respect to $f$.

Hence, in the region $f\ge d$, it suffices to verify the boundary $f=d$. For each function $F$, define $F^{\partial}$ as the reduced function obtained from $F$ by setting $f=d$ and removing $f$ from the input variables. The verification target is thus reduced to
\[
\max_{\vw\in\mathcal{W}^{\partial}}
\min_{F\in\mathcal{F}} F^{\partial}(\vw)\le 0,
\]
where $\mathcal{W}^{\partial}$ denotes the reduced parameter space obtained by removing the $f$-coordinate from $\mathcal{W}$, so that $\mathcal{W}^{\partial}=[0,+\infty)^{n-1}$.

Note that $F^{\partial}$ preserves both convexity and coordinatewise non-increasingness of $F$. In particular, its non-increasingness with respect to $d$ follows because increasing $d$ simultaneously increases the original $d$-entry and the substituted $f$-entry, and $F$ is non-increasing in both variables.

\noindent\textbf{Representative Regime for Evaluation.}
While our system rigorously verifies all four cases to establish the global lower bound, we restrict the detailed visualization and analysis in the main text to \textbf{Case 2} ($D$ is Regular, $f \ge d$). This decision is justified by two empirical observations regarding the optimization landscape:
\begin{enumerate}
    \item \textbf{Boundary Dominance:} The verification bottlenecks (regions where the feasible $\rho$ is minimized) predominantly concentrate at the boundary $f=d$.
    \item \textbf{Topological Degeneracy:} In cases where $D$ is a Steiner point, the most critical constraints typically occur when the terminal edges ($|DU|$ and $|DV|$) approach zero length. In this limit, the structure topologically degenerates to the case where $D$ is a regular point.
\end{enumerate}

Consequently, Case 2 represents the most challenging and representative slice of the parameter space. Focusing on this regime allows us to demonstrate the system's efficacy on the ``hardest'' instances of the problem without redundancy.

\newpage
\section{Proofs of LLM-Generated Lemmas} \label{app:full_proof}


In this section, we present the formal proofs for the geometric lemmas generated by the LLM agent during the iterative process. 
As discussed in Section~\ref{sec:appendix_problem_decomposition}, the LLM proposes these lemmas based on the representative regime where $D$ is a \textbf{regular point} and the parameter $f$ is fixed at its boundary value $f = d$.

\begin{table}[h]
  \centering
  \small 
  \begin{tcolorbox}[
    enhanced,
    title={Results of Each Step}, 
    toptitle=0.5pt,    
    bottomtitle=0.5pt, 
    colback=teal!5,
    colframe=teal,      
    coltext=black,
    coltitle=white,
    fonttitle=\bfseries, 
    arc=1mm,
    boxrule=1pt,
    boxsep=1pt,                
    left=4pt, right=4pt,      
    top=2pt, bottom=2pt,      
    center
  ]
  
    \renewcommand{\arraystretch}{1.2}
    \setlength{\tabcolsep}{4pt}
    
    \begin{tabularx}{\linewidth}{c|c|c|X} 
      \textbf{Step} & \textbf{Steiner Ratio} & \textbf{Proposed Lemma Type } & \textbf{Description} \\
      \hline
      1 & 0.8282 & / & / \\
      2 & 0.8500 & Trapped Regular Point & $C^{B}_{0,4,2}$ \\
      3 & 0.8502 & Trapped Regular Point & $C^{A}_{2,4,3}, C^{B}_{0, 4, 3}$ \\
      4 & 0.8519 & 4-Point Steiner Tree & $(DQ)$-$(RA)$ \\
      5 & 0.8536 & 4-Point Steiner Tree  & $(AD)$-$(QR)$ \\
      6 & 0.8540 & Trapped Regular Point & $C^{B}_{2,4,3}$ \\
      7 & 0.8542 & 4-Point Steiner Tree  & $(BD)$-$(QR)$ \\
      8 & 0.8558 & 4-Point Steiner Tree & $(AD)$-$(QR)'$ \\
      9 & 0.8559 & Trapped Regular Point & $C^{A}_{0,5,5}, C^{A}_{1,5,5}, C^{B}_{0, 5, 5}$\\
    \end{tabularx}
    
  \end{tcolorbox}
  \caption{\emph{Results of Each Step}}
  \label{table:results}
\end{table}

To apply these lemmas to the global verification problem (spanning all four cases defined in Section~\ref{sec:appendix_problem_decomposition}), we adhere to the following transplantation policies:

\begin{enumerate}
    \item \textbf{Trapped Regular Point Lemmas (Implicit $E$):} 
    These lemmas rely exclusively on the local geometry of the chain starting at $A$. The validity of the implication $C \implies \text{Trapped}(E)$ is independent of the topology of $D$ or the value of $f$. Therefore, these lemmas are \textbf{universally transplanted} to all four cases.
    
    \item \textbf{4-Point Steiner Tree Lemmas:} 
    These lemmas (e.g., verifying the existence of $(AD)$-$(QR)$ trees) were derived and symbolically verified under the strict assumption $f=d$.
    \begin{itemize}
        \item \textbf{Cases 2 \& 4 ($f \ge d$):} As per our reduction strategy, the verification functions for the unbounded domain $f \in (d, +\infty)$ are mapped to the boundary $f=d$. Thus, these lemmas are \textbf{valid and applied}.
        \item \textbf{Cases 1 \& 3 ($f \le d$):} In these regimes, the geometric validity of a 4-point topology depends on the specific value of $f \in [0, d]$. Since the symbolic verification was not performed for variable $f$, we adopt a conservative policy: we \textbf{exclude} these high-complexity 4-point tree lemmas in these cases, relying instead on the baseline splits and trapped point lemmas.
    \end{itemize}
\end{enumerate}

Edge lengths follow the notation in Figure~\ref{fig:subtree_restated}. For example, $|AY| = \sqrt{1 + s + s^2}$ and $|AZ| = \sqrt{1 + s + s^2 - c + cs + c^2}$.
\begin{tcolorbox}[
  enhanced, breakable,
  colframe=magenta,             
  colback=magenta!5,            
  colbacktitle=magenta!15,      
  coltitle=black,               
  boxrule=0.75pt,               
  arc=1mm,               
  left=3mm, right=3mm,          
  top=1mm, bottom=1mm,          
  boxsep=1pt,                   
  title={\textbf{Step 1  }}, 
  fonttitle=\small\sffamily\bfseries,
  attach boxed title to top left={xshift=4mm, yshift*=-\tcboxedtitleheight/2},
  boxed title style={
    sharp corners,
    boxrule=0.75pt,
    colframe=magenta,       
    colback=magenta!15,     
    top=1pt, bottom=0.5pt, left=4mm, right=4mm,
  },
  before skip=10pt, after skip=10pt
]
\begin{lemma}[Step 1, Side $A$] \label{lem:step0A}
  If $c \le 1$, then there exists a regular point $E$ such that $|AE| \le \max\{|AY|, |AZ|\}$.
\end{lemma}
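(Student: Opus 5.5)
This lemma restates Corollary~\ref{cor:trapped_e}, so my plan is to derive it from Lemma~\ref{lem:du_hwang_manual} (\citet{du1983new} Lemma 2) together with an elementary bound on distances from $A$ to points of the trapping region. The argument has three steps.

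\textbf{Step 1: trapping.} Under $c\le 1$, Lemma~\ref{lem:du_hwang_manual} gives a regular point $E$. This point connects to $A$ through $Z$ and lies in the closed polygon
\[
P = AXYZHZ'Y'X'.
\]
Concretely, $E$ is the terminal reached by the right-turning Steiner spiral chain $A, X, Y, Z, \dots$. By the remark after that lemma, no assumption on the status of $A$ is needed.

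\textbf{Step 2: reduce to vertices.} The function $\vq \mapsto |A\vq|$ is convex, so its maximum over the polygon $P$ is attained at a vertex. This uses only the convexity of the norm, not the convexity of $P$, since every point of $P$ lies in the convex hull of its vertices. Hence
\[
|AE| \le \max\{|AA|, |AX|, |AY|, |AZ|, |AH|, |AZ'|, |AY'|, |AX'|\}.
\]

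\textbf{Step 3: prune the list of vertices.}
\begin{itemize}
\item The primed vertices are mirror images of $X, Y, Z$ in the line $AH$. Since $A$ lies on the mirror line, $|AX'| = |AX|$, $|AY'| = |AY|$ and $|AZ'| = |AZ|$.
\item $|AA| = 0$.
\item $H$ is the foot of the perpendicular from $A$ to line $ZR$, so $|AH| \le |AZ|$.
\item $|AX| = 1$, while $|AY| = \sqrt{1+s+s^2} \ge 1$.
\end{itemize}
The maximum therefore collapses to $\max\{|AY|, |AZ|\}$, which is the claimed bound.

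The main obstacle is Step 1 itself: it is exactly the cited manual lemma, and the stated hypothesis $c\le 1$ must match its hypothesis. I expect that the condition $c\le 1$ corresponds to the Type B predicate $C^B_{0,2,1}$ of Lemma~\ref{lem:type_b}, namely that the projection $H$ lies on the ray $ZR$ with $|AH|<1$. I would check this identification directly, to make sure the polygon of Lemma~\ref{lem:type_b} (together with its mirror image) really is $P$. The remaining steps are routine.
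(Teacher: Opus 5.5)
Your argument is correct and is essentially the paper's: the statement is just Corollary~\ref{cor:trapped_e}, obtained from Lemma~\ref{lem:du_hwang_manual} by the vertex reduction you carry out (convexity of $|A\cdot|$, reflection in $AH$ preserving distances from $A$, $|AH|\le|AZ|$, and $|AX|=1\le|AY|$), a step the paper leaves implicit. Drop your closing guess, though: $C^B_{0,2,1}$ concerns the projection onto line $YZ$ and reads $s\le 1$, $\frac{\sqrt{3}}{2}(1+s)\le 1$, with no $c$ at all, and the projection onto $ZR$ has $|AH|=\frac{\sqrt{3}}{2}(s+c)$, which $c\le 1$ does not bound; what $c\le 1$ actually gives is that, combined with the Type~A condition $|AR|\ge e$ (that is, $C^A_{0,4,4}$), it forces $R$ onto the segment $ZZ'$, which is why the trapping region is the mirror-symmetric one rather than a Type~B polygon.
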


\begin{lemma}[Step 1, Side $D$] \label{lem:step0B}
  If $f \le d$, then there exists a regular point $F$ such that $|DF| \le \max\{|DZ|, |DQ|\}$.
\end{lemma}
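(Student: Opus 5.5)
This follows from Corollary~\ref{cor:trapped_f}, the $D$-side analogue of Lemma~\ref{lem:du_hwang_manual}, so the plan is to reduce to that trapping lemma by symmetry. Follow the Steiner tree from $D$ through its neighbour $Z$, and then the turning chain starting along $DZ$. Use the turn direction opposite to the one used for the chain from $A$, or equivalently reflect the configuration in a line so that orientations swap. This gives a Steiner spiral chain $D=A_0, Z=A_1, Q=A_2,\dots$ with $a_1=|DZ|=d$ and $a_2=f$; all internal angles are $120^\circ$ by Lemma~\ref{lm:basic_property}, and the chain ends at a regular point $F$. After rescaling by $d$, the hypothesis $f\le d$ is exactly the normalized condition $a_2\le a_1$. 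This is the role played by $c\le 1$ in Lemma~\ref{lem:du_hwang_manual}, where $|AX|=1$ is normalized.

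\textbf{Step 1: trap $F$.} Following the remark after Corollary~\ref{cor:trapped_e}, the trapping argument of Lemma~\ref{lem:du_hwang_manual} uses only the local $120^\circ$ geometry of the chain and the minimality of the tree. It therefore applies verbatim to the chain starting at $D$. Let $H$ be the foot of the perpendicular from $D$ onto the line through the third chain edge, and let $\Omega$ be the union of the polygon $DZQ\cdots H$ with its mirror image in $DH$. The conclusion is that $F$ lies in $\Omega$.

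If one wants to argue this directly, the condition $a_2\le a_1$ means $H$ falls on the extension beyond the relevant edge, as in Lemma~\ref{lem:type_b}. Lemma~\ref{lem:type_a} (the SMT exchange argument) forbids the chain from re-approaching $D$ closer than $d$. The right-turn spiral argument of Lemma~\ref{lem:sublemma_for_type_b} then keeps every later vertex inside the region. The mirrored half accounts for the fact that the chain may continue to either side of $DH$.

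\textbf{Step 2: bound $|DF|$.} Since $F\in\Omega$, we have $|DF|$ at most the maximum distance from $D$ to a point of $\Omega$. The function $x\mapsto |Dx|$ is convex, so this maximum over a polygon is attained at a vertex. Reflection in $DH$ preserves distance to $D$, so the mirrored vertices give nothing new. We have $|DH|\le|DQ|$ because $H$ is a perpendicular foot on a line through $Q$, so the remaining candidates are $Z$ and $Q$. This gives $|DF|\le\max\{|DZ|,|DQ|\}$.

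\textbf{Main obstacle.} The step that needs care is checking that the mirrored polygon really has only $Z$, $Q$ (and $H$) as its extremal vertices. This means confirming that the vertices of the $D$-side region correspond to $Z$ and $Q$ exactly as $Y$ and $Z$ do on the $A$ side, with no farther vertex arising from the reflection. I would check this in the complex-coordinate model of Lemma~\ref{lem:sublemma_for_type_b}, with $D$ at the origin and $DZ$ on the real axis. The chain vertices are explicit sums $\sum \lambda_k\varepsilon^{k-1}$, and the vertex distances can be compared directly using $f\le d$.
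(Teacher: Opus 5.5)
Your strategy matches the paper's. The paper simply identifies this statement with Corollary~\ref{cor:trapped_f}, i.e.\ Lemma~\ref{lem:du_hwang_manual} transplanted to the chain starting at $D$ via the remark that the trapping argument does not depend on the starting node. It then reads off vertex distances. The concrete problem is the chain you feed into that lemma. In the local structure, $D$ hangs off $Y$ ($|YD|=d$), $Y$'s third neighbour is $Z$ ($|YZ|=c$), and $Q$ hangs off $Z$ ($|ZQ|=f$). So $D$ and $Z$ are not adjacent, and $|DZ|=\sqrt{d^2+dc+c^2}\ne d$. Moreover $\angle DZQ<120^\circ$ whenever $d>0$. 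Hence $D,Z,Q,\dots$ is not a path of the tree with $120^\circ$ turns, and neither Lemma~\ref{lm:basic_property} nor Lemma~\ref{lem:du_hwang_manual} applies to it. Your index matching is off for the same reason. In Lemma~\ref{lem:du_hwang_manual} the chain is $A,X,Y,Z$ with lengths $1,s,c$, and the projection line is the fourth edge $ZR$. So the hypothesis $c\le 1$ means $a_3\le a_1$, not $a_2\le a_1$. As written, Step~1 invokes the lemma on an object that does not satisfy its hypotheses.

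The repair is to keep $Y$ in the chain.
\begin{itemize}
\item Take the left-turning chain $D=A_0$, $Y=A_1$, $Z=A_2$, $Q=A_3$, with lengths $d,c,f$. Then $f\le d$ is exactly $a_3\le a_1$, the counterpart of $c\le|AX|$.
\item Let $H$ be the foot of the perpendicular from $D$ to the line of the edge leaving $Q$. Here $|DH|=\tfrac{\sqrt3}{2}(c+f)$, and $f\le d$ guarantees that $H$ lies on the forward ray.
\item The trapping region is $DYZQH$ together with its reflection in $DH$.
\end{itemize}
Your Step~2 then works verbatim with one extra vertex, $Y$. It is harmless because $|DY|=d\le|DZ|$, and together with $|DH|\le|DQ|$ you get $\max\{|DZ|,|DQ|\}$. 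This is the correspondence $A,X,Y,Z\leftrightarrow D,Y,Z,Q$; on the $A$ side the vertex $X$ drops out the same way, since $|AX|=1\le|AY|$. So your ``main obstacle'' needs no coordinate computation.

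Your optional ``direct'' sketch cannot replace Lemma~\ref{lem:du_hwang_manual}. The hypothesis $f\le d$ does not give $|DH|<a_s$ for any $s$: take $c=d=f$, where $|DH|=\sqrt3\,d$ exceeds every $a_s$. So this is not an instance of Lemma~\ref{lem:type_b}, and the trapping genuinely rests on Lemma~\ref{lem:du_hwang_manual} (with its doubled region), not on Lemmas~\ref{lem:type_a} and~\ref{lem:type_b}.
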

\end{tcolorbox}
Note that these 2 lemmas are the same as Corollary~\ref{cor:trapped_e} and Corollary~\ref{cor:trapped_f}, respectively, when $D$ is a regular point. In particular, they are not model-generated lemmas. We restated them here for a clearer step-by-step proof.
\begin{tcolorbox}[
  enhanced, breakable,
  colframe=magenta,             
  colback=magenta!5,            
  colbacktitle=magenta!15,      
  coltitle=black,               
  boxrule=0.75pt,               
  arc=1mm,               
  left=3mm, right=3mm,          
  top=1mm, bottom=1mm,          
  boxsep=1pt,                   
  title={\textbf{Step 2}}, 
  fonttitle=\small\sffamily\bfseries,
  attach boxed title to top left={xshift=4mm, yshift*=-\tcboxedtitleheight/2},
  boxed title style={
    sharp corners,
    boxrule=0.75pt,
    colframe=magenta,       
    colback=magenta!15,     
    top=1pt, bottom=0.5pt, left=4mm, right=4mm,
  },
  before skip=10pt, after skip=10pt
]
\begin{lemma}[Step 2] \label{lem:step1}
  If $e < 1 - c + 2/\sqrt{3} \cdot s$ and $c + e > 1$, then there exists a regular point $E$ such that $|AE| \le \max\{1, |AY|, |AZ|, |AR|, \sqrt{3}/2 \cdot (c + e - 1)\}$.
\end{lemma}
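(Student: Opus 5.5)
The plan is to obtain the statement as a direct instance of Lemma~\ref{lem:type_b} (equivalently Theorem~\ref{thm:finding_regular_point} with no Type A conditions), applied to the Steiner Spiral Chain that starts at $A$. I take the right-turning chain $A_0=A$, $A_1=X$, $A_2=Y$, $A_3=Z$, $A_4=R,\dots$ of Figure~\ref{fig:subtree_restated}, with $a_1=|AX|=1$, $a_2=s$, $a_3=c$, $a_4=e$. In this notation the hypotheses should be exactly the predicate $C^B_{0,4,2}$ from the B-type table (this matches the Step~2 entry of Table~\ref{table:results}). The regular point $A_n$ at the end of the chain plays the role of $E$.

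First I check $C^B_{0,4,2}$. Its second clause reads $\tfrac{\sqrt3}{2}(-1+c+e)\le s$, i.e.\ $e\le 1-c+\tfrac{2}{\sqrt3}s$. The hypothesis gives this strictly, which matches the strict inequality $|A_0H|<a_2$ required in Lemma~\ref{lem:type_b}. The first clause, $1+2s+c-e\ge 0$, says that the foot $H$ lies on the ray $A_4A_5$. It follows from the same hypothesis, since
\[
e<1-c+\tfrac{2}{\sqrt3}s\le 1+c+2s .
\]
The second hypothesis, $c+e>1$, ensures that the signed distance $\tfrac{\sqrt3}{2}(c+e-1)$ from $A$ to the line $ZR$ is positive. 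Then $A$ lies on the correct side of that line, and the polygon $A X Y Z R H A$ is the nondegenerate region described in Lemma~\ref{lem:type_b}, with $|AH|=\tfrac{\sqrt3}{2}(c+e-1)$. I would confirm this sign convention by a short complex-coordinate computation, as in the proof of Lemma~\ref{lem:sublemma_for_type_b}.

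Next I handle the length of the chain. Lemma~\ref{lem:type_b} with $v=4$ needs $n>4$. If $n\le 4$, the chain stops earlier and its endpoint is a regular point among $X,Y,Z,R$. Since $X,Y,Z$ are Steiner points in the local structure, that endpoint is $R$, and $|AE|=|AR|$ lies within the claimed bound. Otherwise Lemma~\ref{lem:type_b} places $E=A_n$ inside the polygon $AXYZRH$.

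To finish, I use that $p\mapsto|Ap|$ is convex, so its maximum over the polygon (a subset of the convex hull of its vertices) is attained at a vertex. The vertex distances from $A$ are $0$, $|AX|=1$, $|AY|$, $|AZ|$, $|AR|$ and $|AH|=\tfrac{\sqrt3}{2}(c+e-1)$, which gives exactly the stated maximum. The main obstacle is the geometric bookkeeping in the first check: verifying that the table's algebraic form of $C^B_{0,4,2}$ (foot on the ray, and the perpendicular distance formula) is correctly oriented under $c+e>1$, so that Lemma~\ref{lem:type_b} applies verbatim. I would verify this by writing $Z$, $R$ and the direction of $ZR$ explicitly as sums of powers of $e^{-\pi i/3}$.
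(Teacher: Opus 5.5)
Your proposal is correct and follows the paper's proof. You identify the hypotheses with $C^B_{0,4,2}$ on the chain $A,X,Y,Z,R,\dots$ (with $a_2=s$, $a_3=c$, $a_4=e$), dispose of the case that $R$ is regular, invoke Lemma~\ref{lem:type_b}, and bound $|AE|$ by the vertex distances of $AXYZRH$ with $|AH|=\tfrac{\sqrt3}{2}(c+e-1)$. Your check of the ray clause via $1-c+\tfrac{2}{\sqrt3}s\le 1+c+2s$ is actually simpler than the paper's, which substitutes the bound on $s$ and then splits on $c<1$ (handled by Lemma~\ref{lem:step0A}) versus $c\ge 1$.
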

\end{tcolorbox}
\begin{proof}
  If $R$ is a regular point, the statement is proved. We now assume $R$ is a Steiner point. For a Minimal Steiner Tree, this implies the existence of a subsequent edge emanating from $R$ in the clockwise direction. Let $H$ be the foot of the perpendicular from point $A$ to the line containing this edge. Note that $C^B_{0,4,2}$ is $1+2s+c-e > 0$ and $\sqrt{3}/2 \cdot (-1 + c + e) \le s$. We show that $C^B_{0,4,2}$ holds under the lemma conditions. 
  
  First, from $e < 1 - c + 2/\sqrt{3} \cdot s$, we have $1 + 2s + c - e > 1 + 2(\sqrt{3}/2 \cdot (-1 + c + e)) + c - e = 1 - \sqrt{3} + (1 + \sqrt{3}) c + (\sqrt{3} - 1) e$. Note that if $c < 1$, then by Lemma~\ref{lem:step0A} we already have $|AE| \le \max\{|AY|, |AZ|\}$, so we only need to consider the case $c \ge 1$. Thus $1 + 2s + c - e > 1 - \sqrt{3} + (1 + \sqrt{3}) c + (\sqrt{3} - 1) e \ge 2 + (\sqrt{3} - 1)e > 0$. Second, we have $\sqrt{3}/2 \cdot (-1 + c + e) < \sqrt{3}/2 \cdot (2/\sqrt{3} \cdot s) = s$. Thus, $C^B_{0,4,2}$ holds. Then by Theorem~\ref{thm:finding_regular_point}, there exists a regular point $E$ trapped inside the polygonal region $AXYZRH$. Since $|AH| = \sqrt{3} / 2 \cdot (c + e - 1)$ by simple calculation, we have $|AE| \le \max\{1, |AY|, |AZ|, |AR|, \sqrt{3}/2 \cdot (c + e - 1)\}$.
\end{proof}
\begin{tcolorbox}[
  enhanced, breakable,
  colframe=magenta,             
  colback=magenta!5,            
  colbacktitle=magenta!15,      
  coltitle=black,               
  boxrule=0.75pt,               
  arc=1mm,               
  left=3mm, right=3mm,          
  top=1mm, bottom=1mm,          
  boxsep=1pt,                   
  title={\textbf{Step 3}}, 
  fonttitle=\small\sffamily\bfseries,
  attach boxed title to top left={xshift=4mm, yshift*=-\tcboxedtitleheight/2},
  boxed title style={
    sharp corners,
    boxrule=0.75pt,
    colframe=magenta,       
    colback=magenta!15,     
    top=1pt, bottom=0.5pt, left=4mm, right=4mm,
  },
  before skip=10pt, after skip=10pt
]
\begin{lemma}[Step 3] \label{lem:step2}
  If $e < 1 + (2/\sqrt{3} - 1) \cdot c$ and $c + e > 1$, then there exists a regular point $E$ such that $|AE| \le \max\{1, |AY|, |AZ|, |AR|, \sqrt{3}/2 \cdot (c + e - 1)\}$.
\end{lemma}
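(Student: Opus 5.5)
The plan mirrors the proof of Lemma~\ref{lem:step1}, but with the trapping predicate $C^B_{0,4,3}$ (where $a_s = a_3 = c$) in place of $C^B_{0,4,2}$, as recorded in Table~\ref{table:results}. We index the chain starting from $A$ as $A_0=A$, $A_1=X$, $A_2=Y$, $A_3=Z$, $A_4=R$, so that $a_1=1$, $a_2=s$, $a_3=c$, $a_4=e$.

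First dispose of the trivial cases. If $R$ is a regular point, take $E=R$; then $|AE| = |AR|$ and the bound holds. If $c<1$, Lemma~\ref{lem:step0A} gives $|AE|\le\max\{|AY|,|AZ|\}$, which is already at most the claimed maximum. So assume $R$ is a Steiner point and $c\ge 1$. Then there is a next edge $A_4A_5$ leaving $R$ with a right turn. Let $H$ be the foot of the perpendicular from $A$ to the line $A_4A_5$.

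By the table of B-type conditions, $C^B_{0,4,3}$ reads
\[
(1+2s+c-e)/2\ge 0 \quad\text{and}\quad \tfrac{\sqrt3}{2}(c+e-1)\le c .
\]
\begin{itemize}
\item \emph{Second inequality.} It is exactly a rearrangement of the hypothesis: $\tfrac{\sqrt3}{2}(c+e-1)<c$ is equivalent to $e<1+(2/\sqrt3-1)c$. It also gives the strict inequality $|AH|<a_3$ required by Lemma~\ref{lem:type_b}.
\item \emph{First inequality (projection on the ray).} The same hypothesis gives $1+c-e > c\,(2-2/\sqrt3) > 0$. Since $s\ge 0$, we get $1+2s+c-e>0$, so $H$ lies on the ray $A_4A_5$.
\end{itemize}
The Type A condition $C^A_{2,4,3}$ listed in the table is $c^2+ce+e^2\ge c^2$, which holds automatically. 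I would record it only to match the form of Theorem~\ref{thm:finding_regular_point}. The hypothesis $c+e>1$ ensures that $c+e-1$ is positive, so the expression $\tfrac{\sqrt3}{2}(c+e-1)$ really is the distance $|AH|$ rather than its negative.

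Then invoke Theorem~\ref{thm:finding_regular_point}, equivalently Lemma~\ref{lem:type_b} with $u=0$, $v=4$, $s=3$. This shows that a regular point $E$ at the end of the right-turning chain is trapped in the polygon $AXYZRH$. The distance from $A$ to any point of this polygon is at most the largest distance from $A$ to one of its vertices. Together with the same computation $|AH|=\tfrac{\sqrt3}{2}(c+e-1)$ used in Lemma~\ref{lem:step1}, this yields the stated bound.

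The only step needing care is confirming the sign conventions for $|AH|$ and for the projection lying on the ray. I expect these to follow by direct reuse of the coordinate computation from Lemma~\ref{lem:step1}, so there is no real obstacle.
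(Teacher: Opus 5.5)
Your proof is correct and follows the paper's argument. Like the paper, you verify $C^B_{0,4,3}$ under the hypotheses, use Lemma~\ref{lem:type_b} (via Theorem~\ref{thm:finding_regular_point}) to trap $E$ in $AXYZRH$, and bound $|AE|$ by the vertex distances with $|AH|=\tfrac{\sqrt3}{2}(c+e-1)$. One point in your favor: the paper handles the ray condition $1+2s+c-e>0$ by saying it is ``the same as Lemma~\ref{lem:step1},'' but that argument used the Step~2 hypothesis $e<1-c+2s/\sqrt3$, which is not assumed here, so your direct derivation $1+c-e>(2-2/\sqrt3)\,c\ge 0$ from the Step~3 hypothesis is the correct justification (and it does not even need the reduction to $c\ge 1$).
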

\end{tcolorbox}
\begin{proof}
  If $R$ is a regular point, the statement is proved. We now assume $R$ is a Steiner point. For a Minimal Steiner Tree, this implies the existence of a subsequent edge emanating from $R$ in the clockwise direction. Let $H$ be the foot of the perpendicular from point $A$ to the line containing this edge. Note that $C^B_{0, 4, 3}$ is $1+2s+c-e > 0$ and $\sqrt{3}/2 \cdot (-1 + c + e) \le c$. We show that $C^B_{0,4,3}$ holds under the lemma conditions.

  The first part is same as Lemma~\ref{lem:step1}. For the second part, from $e < 1 + (2/\sqrt{3} - 1) \cdot c$, we have $\sqrt{3}/2 \cdot (-1 + c + e) < \sqrt{3}/2 \cdot ((2/\sqrt{3} - 1) \cdot c + c) = c$. Thus, $C^B_{0,4,3}$ holds. Then by Theorem~\ref{thm:finding_regular_point}, there exists a regular point $E$ trapped inside the polygonal region $AXYZRH$, so we can conclude the proof same as Lemma~\ref{lem:step1}. 
\end{proof}

\begin{tcolorbox}[
  enhanced, breakable,
  colframe=magenta,             
  colback=magenta!5,            
  colbacktitle=magenta!15,      
  coltitle=black,               
  boxrule=0.75pt,               
  arc=1mm,               
  left=3mm, right=3mm,          
  top=1mm, bottom=1mm,          
  boxsep=1pt,                   
  title={\textbf{Step 4}}, 
  fonttitle=\small\sffamily\bfseries,
  attach boxed title to top left={xshift=4mm, yshift*=-\tcboxedtitleheight/2},
  boxed title style={
    sharp corners,
    boxrule=0.75pt,
    colframe=magenta,       
    colback=magenta!15,     
    top=1pt, bottom=0.5pt, left=4mm, right=4mm,
  },
  before skip=10pt, after skip=10pt
]
\begin{lemma}[Step 4] \label{lem:step3}
  The region defined by the system of inequalities $\mathcal{S}$ below is an \textit{orthogonally convex set}. Furthermore, for any $\vw$ in this region, the $(RA)$-$(DQ)$ type Steiner tree is valid.
  \[
  \mathcal{S}: \begin{cases}
  I_1: & e+c \ge 1\\
  I_2: & e \le s+c+d+2\\
  I_3: & e \ge 2-c-d+s\\
  I_4: & e \le 3s\\
  I_5: & (2+c^2+3d+2s+3cs+3ds+2s^2)-(2+c+3d+s)e \ge 0\\
  I_6: & 2(\vec{RD}\cdot \vec{AQ}) + |\vec{RD}||\vec{AQ}| \ge 0\\
  I_7: & 2(\vec{RV}\cdot \vec{RA}) + |\vec{RV}||\vec{RA}| \ge 0\\
  I_8: & 2(\vec{AV}\cdot \vec{AR}) + |\vec{AV}||\vec{AR}| \ge 0
  \end{cases}
  \]
\end{lemma}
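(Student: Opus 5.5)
The plan has two parts: first reduce the validity of the $(RA)$-$(DQ)$ tree to the three conditions of Theorem~\ref{thm:4point_validity}, then check orthogonal convexity of $\mathcal{S}$ one coordinate direction at a time. Throughout I work in the representative regime of Appendix~\ref{sec:appendix_problem_decomposition}: $D$ is regular and $f=d$. I place the local structure of Figure~\ref{fig:subtree_restated} in complex coordinates with $A$ at the origin and $AX$ along the real axis. Each edge is rotated by $\pm 60^\circ$ relative to the previous one, so $X,Y,Z,R,D,Q$ become explicit affine functions of $\vw=(s,c,e,d,\dots)$. Then $U$ and $V$, the apexes of the equilateral triangles on $RA$ and on $DQ$, are also affine in $\vw$.

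The key identity for the validity part is that an angle $\angle(\vec u,\vec v)\le 120^\circ$ exactly when $\cos\angle\ge -1/2$, i.e.\ when $2\,\vec u\cdot\vec v+|\vec u||\vec v|\ge 0$. I would match the conditions of Theorem~\ref{thm:4point_validity}, taken in the order $R,A,D,Q$, to $\mathcal{S}$ as follows.
\begin{itemize}
\item Condition (3), the diagonal angle: this is $I_6$, applied to the diagonals $RD$ and $AQ$. I need to confirm that the angle in $I_6$ is the correct one of the two supplementary angles at $O$; the fixed orientation supplied by the convexity step should settle this.
\item Condition (2) at the vertices $R,A$ of the pair $(RA)$: these are $I_7$ and $I_8$, with $V$ the apex on $DQ$.
\item Condition (2) at $D,Q$, using the apex $U$ on $RA$: I expect these to follow from the polynomial inequality $I_5$ together with the linear constraints. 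I would verify this by an exact \texttt{Reduce} computation of $\mathcal{S}\Rightarrow(\text{angle}\le120^\circ)$.
\item Condition (1), convexity of $RADQ$: this means four orientation determinants have the same sign. Each determinant is a polynomial of degree at most two in $\vw$. I would show that $I_1$--$I_4$ force these signs; for example, $I_1$ puts $R$ on the correct side of the line $AD$, and $I_3$, $I_4$ control the position of $Q$. Where the bound is not immediate, I would again certify the implication with CAD.
\end{itemize}

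For orthogonal convexity, $I_1$--$I_4$ are linear, so each defines a convex set. Since an intersection of orthogonally convex sets is orthogonally convex, it suffices to show that each remaining constraint, restricted to every axis-parallel line, cuts out an interval. This must hold at least on lines through the polyhedron defined by $I_1$--$I_4$; intersecting with that polyhedron preserves the interval property.
\begin{itemize}
\item $I_5$ is linear in $e$ and in $d$. In $s$ and in $c$ it is a convex quadratic, so its superlevel set along such a line could be two rays. I would rule this out by showing that the vertex of the parabola lies outside the slice of the polyhedron $I_1$--$I_4$, so the admissible part is one monotone branch. This is a one-variable check for each coordinate.
\item For $I_6$--$I_8$, along a coordinate line the two vectors move affinely in a parameter $t$. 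The condition says that the angle $\theta(t)$ between two affinely moving vectors stays at most $120^\circ$. I would try to show that $\theta(t)$ is monotone, or unimodal, on the relevant slice, which makes $\{\theta\le 120^\circ\}$ an interval. Concretely, $\theta$ changes monotonically as long as the cross product keeps a fixed sign, and that sign should be fixed by the orientation established in the convexity step.
\end{itemize}

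I expect this last point, orthogonal convexity of the nonlinear constraints $I_6$--$I_8$, to be the main obstacle. If a clean monotonicity argument fails, my fallback is to square each condition on the region where $\vec u\cdot\vec v<0$, which gives the polynomial form $3|\vec u|^2|\vec v|^2\ge 4(\vec u\cdot\vec v)^2$. I would then certify by cylindrical algebraic decomposition, one coordinate at a time, that the set of $t$ satisfying it is connected. This is the same Mathematica-based certification the LLM agent is instructed to carry out.
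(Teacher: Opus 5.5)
Your plan is essentially the paper's proof. The paper also checks the three conditions of Theorem~\ref{thm:4point_validity} for $RADQ$ by CAD, with $I_6$ the diagonal angle and $I_7,I_8$ the angles at $R,A$. It gets orthogonal convexity axis by axis from CAD-certified derivative signs, splitting on quantities that are constant along the line. Where the two vectors rotate in the same direction (e.g.\ $I_7$ along $c$ when $e>s+1$), a fixed cross-product sign does not force the angle to be monotone, and the paper instead shows the inequality holds on the whole line; your CAD fallback would also catch this.

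Your guesses about which constraint does what are off, though harmlessly, since your CAD checks run against all of $\mathcal{S}$. The angles at $D$ and $Q$ come from $I_2$ and $I_3$, because $\vec{DU}\cdot\vec{DQ}=\tfrac{c+d}{2}(2+s+c+d-e)$ and $\vec{QU}\cdot\vec{QD}=\tfrac{c+d}{2}(c+d+e-s-2)$. $I_5$ is exactly $2\,\vec{AV}\cdot\vec{AR}\ge 0$, so it already implies $I_8$. Of the four convexity turns, only the one at $R$, $\vec{QR}\times\vec{RA}\le 0$, needs a hypothesis, namely $I_1$.
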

\end{tcolorbox}
\begin{proof}
We perform the verification in two parts: first demonstrating that these conditions imply the geometric existence of the tree, and second proving that the defined region is orthogonally convex.

We employ \textbf{Cylindrical Algebraic Decomposition (CAD)} to rigorously verify all the following inequalities.

\subsubsection*{Part 1: Geometric Validity}
We verify the sufficient conditions from Theorem~\ref{thm:4point_validity}. To enable algebraic verification, all angular conditions $\theta \le 120^\circ$ are transformed into the equivalent scalar product inequality:
\[
    2(\vec{u} \cdot \vec{v}) + |\vec{u}| |\vec{v}| \ge 0.
\]

\noindent\textbf{1. Convexity of Quadrilateral $RADQ$}:
\begin{itemize}
    \item $\vec{RA} \times \vec{AD} \le 0$, $\vec{AD} \times \vec{DQ} \le 0$, and $\vec{DQ} \times \vec{QR} \le 0$: Verified unconditionally.
    \item $\vec{QR} \times \vec{RA} \le 0$: Verified to hold when $c+e \ge 1$, which is explicitly enforced by inequality $I_1$.
\end{itemize}

\noindent\textbf{2. Simpson Point Angles}:
\begin{itemize}
    \item $\angle UDQ \le 120^\circ$: The condition $I_2$ implies that $\vec{DU} \cdot \vec{DQ} \ge 0$ (i.e., the angle is $\le 90^\circ$), which sufficiently implies $\le 120^\circ$.
    \item $\angle UQD \le 120^\circ$: The condition $I_3$ implies that $\vec{QU} \cdot \vec{QD} \ge 0$ (i.e., the angle is $\le 90^\circ$), which sufficiently implies $\le 120^\circ$.
    \item $\angle VRA \le 120^\circ$: This is the direct geometric interpretation of inequality $I_7$.
    \item $\angle VAR \le 120^\circ$: This is the direct geometric interpretation of inequality $I_8$.
\end{itemize}

\noindent\textbf{3. Intersection Angle}:
\begin{itemize}
    \item The condition that the angle between diagonals $RD$ and $AQ$ is $\le 120^\circ$ corresponds directly to inequality $I_6$.
\end{itemize}

\subsubsection*{Part 2: Orthogonal Convexity Check}

We verify that each defining inequality in $\mathcal{S}$ is monotonic with respect to the coordinate axes within the feasible region.

\noindent\textbf{Linear Inequalities ($I_1, I_2, I_3, I_4$)}:
These are linear combinations of variables. Linear functions are unconditionally monotonic, thus defining orthogonally convex half-spaces.

\noindent\textbf{Inequality $I_5$}: Let $g_5(\vw)$ be the LHS of $I_5$.
\begin{itemize}
    \item Variables $d, e$: The function $g_5$ is linear with respect to $d$ and $e$, ensuring monotonicity.
    \item Variable $s$: We compute $\frac{\partial g_5}{\partial s}$. Symbolic verification confirms $\frac{\partial g_5}{\partial s} \ge 0$ whenever $e \le s+c+d+2$. This condition is ensured by $I_2$.
    \item Variable $c$: We compute $\frac{\partial g_5}{\partial c}$. Symbolic verification confirms $\frac{\partial g_5}{\partial c} \ge 0$ whenever $e \le 3s$. This condition is ensured by $I_4$.
\end{itemize}

\noindent\textbf{Inequality $I_6$}: Let $G(\vw) = \frac{\vec{RD}\cdot \vec{AQ}}{|\vec{RD}||\vec{AQ}|}$ be the cosine of the angle between the diagonals. The condition is equivalent to $G(\vw) \ge -1/2$. We analyze the connectivity of the feasible region along lines parallel to each coordinate axis.

\begin{itemize}
    \item \textbf{Variables $c$ and $e$}: Symbolic differentiation confirms that $\frac{\partial G}{\partial c} \le 0$ and $\frac{\partial G}{\partial e} \ge 0$ unconditionally. Thus, $G$ is monotonic along the $c$ and $e$ axes, ensuring the feasible region is a connected interval (orthogonally convex) in these directions.
    
    \item \textbf{Variable $s$}: The symbolic engine indicates that the sign of $\frac{\partial G}{\partial s}$ depends on $c$:
    \[
        \text{sgn}\left(\frac{\partial G}{\partial s}\right) = \begin{cases} \ge 0 & \text{if } c \ge 1 \\ \le 0 & \text{if } c \le 1 \end{cases}
    \]
    To prove orthogonal convexity, consider any line $L$ parallel to the $s$-axis. On such a line, the parameters $c, d, e$ are fixed constants. Consequently, the sign of the derivative with respect to $s$ is constant along the entire line $L$. Thus, $G$ restricted to $L$ is monotonic, and the sublevel set $\{s \in L \mid G(\dots, s, \dots) \ge -1/2\}$ is a connected interval.
    
    \item \textbf{Variable $d$}: Similarly, the behavior depends on $c$:
    \begin{itemize}
        \item If $c \ge 1$: Symbolic verification shows $\frac{\partial G}{\partial d} \ge 0$. $G$ is monotonic, so the intersection is an interval.
        \item If $c < 1$: Symbolic verification confirms that $G(\vw) \ge -1/2$ is satisfied for all admissible $d$. The intersection is the entire domain $[0, +\infty)$, which is a connected interval.
    \end{itemize}
    In all cases, the intersection with lines parallel to the $d$-axis is connected.
\end{itemize}

\noindent\textbf{Inequality $I_7$}: Let $G(\vw) = \frac{\vec{RV}\cdot \vec{RA}}{|\vec{RV}||\vec{RA}|}$. The condition is equivalent to $G(\vw) \ge -1/2$. We analyze the properties under the assumption that $c+e \ge 1$ (Inequality $I_1$).

\begin{itemize}
    \item \textbf{Variables $s, d, e$}: Symbolic differentiation confirms that within the region $c+e \ge 1$:
    \[ \frac{\partial G}{\partial s} \le 0, \quad \frac{\partial G}{\partial d} \le 0, \quad \frac{\partial G}{\partial e} \ge 0. \]
    Since the function is monotonic with respect to each of these variables individually, the feasible region is orthogonally convex along these axes.

    \item \textbf{Variable $c$}: The behavior depends on the fixed values of $s$ and $e$:
    \begin{itemize}
        \item If $e \le s+1$: Symbolic verification shows $\frac{\partial G}{\partial c} \ge 0$. The function is monotonic, so the valid set for $c$ is a connected interval.
        \item If $e > s+1$: Symbolic verification confirms that $G(\vw) \ge -1/2$ holds unconditionally for all valid $c$. The valid set is the entire domain, which is a connected interval.
    \end{itemize}
    Thus, for any fixed $s, d, e$, the set of valid $c$ values is connected.
\end{itemize}

\noindent\textbf{Inequality $I_8$}: Let $\boldsymbol{u} = \vec{AV}$ and $\boldsymbol{v} = \vec{AR}$. The condition is equivalent to the angle inequality $\angle(\boldsymbol{u}, \boldsymbol{v}) \le 120^\circ$.
We verify this by analyzing the inclination angles $\phi$ and $\psi$ of $\boldsymbol{u}$ and $\boldsymbol{v}$ respectively. 

\textbf{Geometric Constraint:} We explicitly verify that under the given constraints, the $y$-components of $\vec{AV}$ and $\vec{AR}$ are always non-positive. Consequently, the vectors lie in the lower half-plane, and their inclination angles $\phi, \psi$ are well-defined and restricted to the interval $[-180^\circ, 0]$. This ensures no discontinuity in the angle definition.

The condition is thus equivalent to:
\[
    -120^\circ \le \Phi(\vw) \le 120^\circ, \quad \text{where } \Phi(\vw) = \phi(\vw) - \psi(\vw).
\]
To prove orthogonal convexity, it suffices to show that for any axis-parallel line, $\Phi$ is monotonic (or the condition is unconditionally true). We utilize the derivative formula for the inclination angle $\phi$ of a vector $\boldsymbol{u}$: \[\frac{\partial \phi}{\partial x} = \frac{\det(\boldsymbol{u}, \partial_x \boldsymbol{u})}{|\boldsymbol{u}|^2}.\]

\begin{itemize}
    \item \textbf{Variable $c$}: The behavior depends on the parameters $e$ and $s$:
    \begin{itemize}
        \item If $e \ge s+1$: Symbolic verification shows $\frac{\partial \phi}{\partial c} \le 0$ and $\frac{\partial \psi}{\partial c} \ge 0$. Thus, $\frac{\partial \Phi}{\partial c} \le 0$. The difference is monotonic, so the valid set is an interval.
        \item If $e < s+1$: Symbolic verification confirms that the angle condition holds unconditionally. The valid set is the entire domain.
    \end{itemize}
    Since the condition $e \ge s+1$ is independent of $c$, any line parallel to the $c$-axis falls entirely into one of these two cases.

    \item \textbf{Variable $s$}: We rely on Inequality $I_1$ ($c+e \ge 1$). Symbolic verification confirms that when $c+e \ge 1$:
    \[ \frac{\partial \phi}{\partial s} \le 0, \quad \frac{\partial \psi}{\partial s} \ge 0 \implies \frac{\partial \Phi}{\partial s} \le 0. \]
    Thus, $\Phi$ is monotonic along the $s$-axis.

    \item \textbf{Variable $d$}: 
    Symbolic verification shows $\frac{\partial \psi}{\partial d} = 0$. The sign of $\frac{\partial \phi}{\partial d}$ depends on the sum $c+s$:
    \[
        \text{sgn}\left(\frac{\partial \Phi}{\partial d}\right) = \text{sgn}\left(\frac{\partial \phi}{\partial d}\right) = \begin{cases} \ge 0 & \text{if } c+s \ge 1 \\ \le 0 & \text{if } c+s \le 1 \end{cases}
    \]
    Consider any line parallel to the $d$-axis. Along this line, $c$ and $s$ are fixed constants. Thus, the derivative's sign is constant, implying $\Phi$ is monotonic along that line.

    \item \textbf{Variable $e$}:
    Symbolic verification shows $\frac{\partial \phi}{\partial e} = 0$ and $\frac{\partial \psi}{\partial e} \le 0$. Thus, $\frac{\partial \Phi}{\partial e} \ge 0$. The function is monotonic along the $e$-axis.
\end{itemize}

\noindent\textbf{Final Conclusion}:
We have demonstrated that for every inequality $I_1$ through $I_8$, the defining constraints satisfy the coordinate-wise monotonicity property (either directly or via case-based logic where the case condition is constant along the axis). Therefore, the intersection of these regions is an orthogonally convex set.

\end{proof}
\begin{tcolorbox}[
  enhanced, breakable,
  colframe=magenta,             
  colback=magenta!5,            
  colbacktitle=magenta!15,      
  coltitle=black,               
  boxrule=0.75pt,               
  arc=1mm,               
  left=3mm, right=3mm,          
  top=1mm, bottom=1mm,          
  boxsep=1pt,                   
  title={\textbf{Step 5}}, 
  fonttitle=\small\sffamily\bfseries,
  attach boxed title to top left={xshift=4mm, yshift*=-\tcboxedtitleheight/2},
  boxed title style={
    sharp corners,
    boxrule=0.75pt,
    colframe=magenta,       
    colback=magenta!15,     
    top=1pt, bottom=0.5pt, left=4mm, right=4mm,
  },
  before skip=10pt, after skip=10pt
]
\begin{lemma}[Step 5] \label{lem:step4}
  If $c \ge 1$ and $e \ge 1$, then $(AD)$-$(QR)$ type Steiner tree is valid.
\end{lemma}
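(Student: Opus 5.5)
We verify the three conditions of Theorem~\ref{thm:4point_validity} for the quadrilateral with vertices $A, D, Q, R$, using the pairing $(AD)$-$(QR)$. Throughout we stay in the representative regime of Section~\ref{sec:appendix_problem_decomposition}: $D$ is regular and $f=d$. First we fix the complex coordinates of all relevant points in the local structure of Figure~\ref{fig:subtree_restated}. Put $X$ at the origin and $A$ at $1$. Each successive edge of the chain $A\to X\to Y\to Z\to R$ turns by $60^\circ$, so $Y, Z, R$ are affine in $s, c, e$ with coefficients in $\{1,\omega,\omega^2,\dots\}$, $\omega=e^{\pm\pi i/3}$. Likewise $D$ and $Q$ are obtained by following the edges $d$ and $f=d$ from $Y$ and $Z$. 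Every condition then becomes a polynomial inequality (after squaring) in $(s,c,d,e)$, as in the proof of Lemma~\ref{lem:step3}.

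\textbf{Step 1: convexity of $ADQR$.} We check the four cross products $\vec{AD}\times\vec{DQ}$, $\vec{DQ}\times\vec{QR}$, $\vec{QR}\times\vec{RA}$ and $\vec{RA}\times\vec{AD}$, and show they all have the same sign. As in Lemma~\ref{lem:step3}, I expect three of them to hold unconditionally. The one involving $\vec{QR}\times\vec{RA}$ should reduce to a linear condition like $c+e\ge1$, which is implied by $c\ge1$ and $e\ge1$.

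\textbf{Step 2: Simpson-point angles.} Let $U$ be the apex of the equilateral triangle on $AD$ and $V$ the apex on $QR$, with orientations as in Theorem~\ref{thm:4point_validity}. We need $\max(\angle UQR,\angle URQ,\angle VAD,\angle VDA)\le120^\circ$. Each angle condition is written as $2(\vec u\cdot\vec v)+|\vec u||\vec v|\ge0$. Wherever possible, I would prove the stronger statement $\vec u\cdot\vec v\ge0$, i.e.\ the angle is at most $90^\circ$. This is a polynomial inequality in which $c-1\ge0$ and $e-1\ge0$ should appear with nonnegative coefficients. So the first attempt is to substitute $c=1+c'$ and $e=1+e'$ and check that all coefficients in $s,c',d,e'\ge0$ are nonnegative. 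If that fails, I would fall back on CAD over the region $\{c\ge1, e\ge1, s,d\ge0\}$.

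\textbf{Step 3: diagonal angle.} The diagonals are $AQ$ and $DR$, meeting at a point $O$. We must show $\angle AOD\le120^\circ$, which translates into $2(\vec{DR}\cdot\vec{AQ})\pm|\vec{DR}||\vec{AQ}|\ge0$ with the sign set by the orientation. Isolating the square roots and squaring turns this into a polynomial positivity statement on the same region. Again the first try is the shift-and-check-coefficients approach, with CAD as a backup.

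Orthogonal convexity comes for free here: the region $\{c\ge1,e\ge1\}$ is a product of half-lines, so no analogue of Part~2 of Lemma~\ref{lem:step3} is needed. The main obstacle is Step~3, and to a lesser extent the non-$90^\circ$ Simpson angles in Step~2. The square-root terms do not make the inequality monotone in all variables at once. The expressions also involve $d$ through both $D$ and $Q$, so the simple coefficient-sign certificate may fail near the corner $c=e=1$ with large $s$ or $d$. Those regimes would have to be handled either by a case split (e.g.\ $d\le s$ versus $d\ge s$) or by a CAD certificate.
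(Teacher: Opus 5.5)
Your proposal is correct and follows essentially the paper's proof. The paper also checks the three conditions of Theorem~\ref{thm:4point_validity} for $ADQR$ (with $f=d$) via the four cross products and the reformulation $2(\vec{u}\cdot\vec{v})+|\vec{u}||\vec{v}|\ge 0$, certifies them by CAD under $c\ge1$, $e\ge1$ (convexity at $R$ and the diagonal angle use $c\ge1$, $\angle URQ$ uses $e\ge1$, the rest hold unconditionally), and notes that orthogonal convexity is immediate. Expect your $90^\circ$ shortcut to fail for $\angle URQ$ and for the diagonal angle (at $c=1$ the latter tends to $120^\circ$ as $e+d\to\infty$), so in those two cases the full $120^\circ$ condition with your CAD fallback is what actually carries the argument.
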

\end{tcolorbox}
\begin{proof}
We verify the sufficient conditions for the existence of the $(AD)$-$(QR)$ topology (Theorem~\ref{thm:4point_validity}) using the symbolic computation engine \texttt{Mathematica}.
To enable algebraic verification, all angular conditions $\theta \le 120^\circ$ are transformed into the equivalent scalar product inequality:
\[
    2(\vec{u} \cdot \vec{v}) + |\vec{u}| |\vec{v}| \ge 0.
\]
We employ \textbf{Cylindrical Algebraic Decomposition (CAD)} to rigorously verify that the following inequalities hold under the assumptions $c \ge 1$ and $e \ge 1$:

\noindent\textbf{1. Convexity of Quadrilateral $ADQR$} (via cross products):
\begin{itemize}
    \item $\vec{AD} \times \vec{DQ} \le 0$ and $\vec{DQ} \times \vec{QR} \le 0$: Verified (Unconditional).
    \item $\vec{QR} \times \vec{RA} \le 0$: Verified (Requires $c \ge 1$).
    \item $\vec{RA} \times \vec{AD} \le 0$: Verified (Unconditional).
\end{itemize}

\noindent\textbf{2. Simpson Point Angles} (via transformed inequality):
\begin{itemize}
    \item $\angle UQR \le 120^\circ$, $\angle VAD \le 120^\circ$, $\angle VDA \le 120^\circ$: Verified (Unconditional).
    \item $\angle URQ \le 120^\circ$: Verified (Requires $e \ge 1$).
\end{itemize}

\noindent\textbf{3. Intersection Angle} (via transformed inequality):
\begin{itemize}
    \item Angle between diagonals $AQ$ and $DR \le 120^\circ$: Verified (Requires $c \ge 1$).
\end{itemize}

\noindent\textbf{4. Orthogonal Convexity}:
The validity region is defined by the intersection of linear inequalities. Since these are intersections of convex sets, the resulting domain is convex.
\end{proof}
\begin{tcolorbox}[
  enhanced, breakable,
  colframe=magenta,             
  colback=magenta!5,            
  colbacktitle=magenta!15,      
  coltitle=black,               
  boxrule=0.75pt,               
  arc=1mm,               
  left=3mm, right=3mm,          
  top=1mm, bottom=1mm,          
  boxsep=1pt,                   
  title={\textbf{Step 6}}, 
  fonttitle=\small\sffamily\bfseries,
  attach boxed title to top left={xshift=4mm, yshift*=-\tcboxedtitleheight/2},
  boxed title style={
    sharp corners,
    boxrule=0.75pt,
    colframe=magenta,       
    colback=magenta!15,     
    top=1pt, bottom=0.5pt, left=4mm, right=4mm,
  },
  before skip=10pt, after skip=10pt
]
\begin{lemma}[Step 6] \label{lem:step5}
  If $e < (2 / \sqrt{3} - 1) \cdot c$, then there exists a regular point $E$ such that $|AE| \le \max\{|AY|, |AZ|, |AR|\}$, and $|DE| \le \max\{|DY|, |DZ|, |DR|, \sqrt{3/4 \cdot (c + e)^2 + 3/2 \cdot (c + e)d + d^2}\}$.
\end{lemma}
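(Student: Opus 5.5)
The plan is to reduce the statement to the Type B predicate $C^B_{2,4,3}$ and apply Theorem~\ref{thm:finding_regular_point}, following the pattern of Lemmas~\ref{lem:step1} and~\ref{lem:step2}. I use the chain notation of Lemma~\ref{lem:type_b} along the right-turning chain from $A$: $A_0=A$, $A_1=X$, $A_2=Y$, $A_3=Z$, $A_4=R$. With the normalization $a_1=|AX|=1$, this gives $a_2=s$, $a_3=c$, $a_4=e$.

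First, the trivial case. If $R$ is a regular point, take $E=R$. Then $|AE|=|AR|$ and $|DE|=|DR|$, and both bounds hold immediately. Otherwise $R$ is a Steiner point, so the chain continues with an edge from $R$ turning right by $60^\circ$. Let $H$ be the foot of the perpendicular from $Y$ onto the line containing this edge.

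Next, I verify $C^B_{2,4,3}$. From the table it reads $(a_3-a_4)/2\ge 0$ and $\sqrt3/2\cdot(a_3+a_4)\le a_3$, that is, $e\le c$ and $\sqrt3/2\cdot(c+e)\le c$. The second inequality is exactly $e\le(2/\sqrt3-1)c$, which is the hypothesis. Since $2/\sqrt3-1<1$, it also gives $e\le c$. Then Theorem~\ref{thm:finding_regular_point}, via Lemma~\ref{lem:type_b} with $u=2$, $s=3$, $v=4$, shows that some regular point $E$ lies in the closed polygon $YZRH$. The index requirement $v<\min(6,n)$ holds because the chain continues past $R$.

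To bound distances from a point of the polygon, I use convexity of $P\mapsto|QP|$ for fixed $Q$. Its maximum over a polygon, even a non-convex one, is attained at a vertex, since the polygon lies in the convex hull of its vertices. Hence $|AE|\le\max\{|AY|,|AZ|,|AR|,|AH|\}$ and $|DE|\le\max\{|DY|,|DZ|,|DR|,|DH|\}$. I would then place the configuration in complex coordinates, with edge directions given by powers of $e^{-\pi i/3}$ as in Lemma~\ref{lem:sublemma_for_type_b}. In these coordinates $H$ is an affine function of $(s,c,e)$, with $|YH|=\sqrt3/2\cdot(c+e)$. A direct computation should yield $|DH|^2=\tfrac34(c+e)^2+\tfrac32(c+e)d+d^2$, which is the fourth term in the claimed bound for $|DE|$. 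The $f=d$ placement of $D$ enters this computation.

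The main obstacle is removing $|AH|$ from the bound on $|AE|$, that is, proving $|AH|\le\max\{|AY|,|AZ|,|AR|\}$ under $e<(2/\sqrt3-1)c$. I would first try a geometric argument. $H$ lies on the ray from $R$, and the condition gives $|YH|<|YZ|$, so $H$ sits inside the triangle-like region near $Z,R$. Writing $H$ as a convex combination of $Z$ and $R$, or of $Y,Z,R$, would then give the bound by convexity of $|A\cdot|$. If that fails, I would compare $|AH|^2$ directly with $|AR|^2$ and $|AZ|^2$ as polynomials in $(s,c,e)$ and certify the disjunction by CAD in Mathematica over the region $e<(2/\sqrt3-1)c$, $s,c,e\ge 0$.
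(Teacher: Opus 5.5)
Your structure matches the paper's proof: the trivial case $E=R$, the reduction to $C^B_{2,4,3}$ (its second clause is exactly the hypothesis and forces $e\le c$), the vertex bound over the quadrilateral $YZRH$, and $|DH|^2$ from the law of cosines in $\triangle DYH$ with $\angle DYH=150^\circ$. One small slip: $D=Y+(d,0)$ does not depend on $f$, so $f=d$ plays no role in that computation.

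The step you single out as the main obstacle is where your primary plan fails. $H$ is not a convex combination of $Y,Z,R$. Take $Y$ as origin. Then $\vec{YH}=-\tfrac{c+e}{4}(3,\sqrt3)$ has inclination $210^\circ$. For $e<c$, the vector $\vec{YR}$ has inclination in $(210^\circ,240^\circ]$, so $H$ lies outside the angle of $\triangle YZR$ at $Y$. For instance, when $e=0$ we have $R=Z$, and $H$ is off the segment $YZ$. This is expected, since $H$ is a genuine extra vertex of the trapping polygon.

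Your CAD fallback would go through, but no computation is needed. The missing idea is a projection onto the line $RH$, which is exactly the paper's argument ("the projections of $YX$ and $XA$ onto $RH$ are co-directional with $\vec{RH}$"). Let $u$ be the unit vector of inclination $120^\circ$. The components of $\vec{RY}$, $\vec{YX}$, $\vec{XA}$ along $u$ are $\tfrac{c-e}{2}$, $s$, $\tfrac12$, all nonnegative. So the foot of $A$ on this line lies beyond $H$ as seen from $R$. Hence
\[
|AR|^2-|AH|^2=\tfrac{c-e}{2}\Bigl(\tfrac{c-e}{2}+2s+1\Bigr)\ge 0,
\]
that is, $|AH|\le|AR|$, which removes $|AH|$ from the bound on $|AE|$.
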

\end{tcolorbox}
\begin{proof}
    We first analyze point $R$ under condition $C^B_{2,4,3}$. If $R$ is a regular point, the statement is proved. We therefore proceed by considering the case where $R$ is a Steiner point. 
    For a Minimal Steiner Tree, this implies the existence of a subsequent edge emanating from $R$ in the clockwise direction. Let $H$ be the foot of the perpendicular from point $Y$ to the line containing this edge. It follows that a trapped regular point, which we denote by $E$, exists within the polygonal region $YZRH$. The distance $|AM|$ is therefore bounded by:
    \[
    |AE| \le \max\{|AY|, |AZ|, |AR|, |AH|\}
    \]
    It is clear that $|AH| \le |AR|$, a consequence of the geometric construction. The reasoning is that the projections of the segments $YX$ and $XA$ onto the line $RH$ are co-directional with the vector $\vec{RH}$.

    Similarly, for the distance between $D$ and the trapped point $E$, we have the inequality:
    \[
    |DE| \le \max\{|DY|, |DZ|, |DR|, |DH|\}
    \]
    A calculation yields $|YH| = \frac{\sqrt3}{2}(c+e)$. Given that $\angle DYH = 150^\circ$, an application of the Law of Cosines to $\triangle DYH$ gives:
    \begin{align*}
        |DH|^2 &= |DY|^2 + |YH|^2 - 2|DY||YH|\cos(150^\circ) \\
               &= |DY|^2 + |YH|^2 + \sqrt{3}|DY||YH| \\
               &= d^2 + \frac{3}{2} d(c+e) + \frac{3}{4} (c+e)^2 .
    \end{align*}
\end{proof}

\begin{tcolorbox}[
  enhanced, breakable,
  colframe=magenta,             
  colback=magenta!5,            
  colbacktitle=magenta!15,      
  coltitle=black,               
  boxrule=0.75pt,               
  arc=1mm,               
  left=3mm, right=3mm,          
  top=1mm, bottom=1mm,          
  boxsep=1pt,                   
  title={\textbf{Step 7}}, 
  fonttitle=\small\sffamily\bfseries,
  attach boxed title to top left={xshift=4mm, yshift*=-\tcboxedtitleheight/2},
  boxed title style={
    sharp corners,
    boxrule=0.75pt,
    colframe=magenta,       
    colback=magenta!15,     
    top=1pt, bottom=0.5pt, left=4mm, right=4mm,
  },
  before skip=10pt, after skip=10pt
]
\begin{lemma}[Step 7] \label{lem:step6}
  If $d \ge b$, then $(BD)$-$(QR)$ type Steiner tree is valid.
\end{lemma}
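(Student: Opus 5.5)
We follow the template of Lemma~\ref{lem:step4} (Step 5): verify the three sufficient conditions of Theorem~\ref{thm:4point_validity} for the quadrilateral on $B, D, Q, R$ with the $(BD)$-$(QR)$ pairing, then check that the validity region is orthogonally convex so that Theorem~\ref{thm:lemma_verification_function} applies. First we fix coordinates as in Figure~\ref{fig:subtree_restated}, with $X$ at the origin and $XA$ along a fixed axis of length $1$. We use the representative regime $D$ regular and $f=d$, and we write $B, D, Q, R$ as affine functions of $\vw=(b,c,d,e,s,\dots)$. Each edge direction is a multiple of $60^\circ$, since every Steiner point has $120^\circ$ angles (Lemma~\ref{lm:basic_property}). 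Next we construct the two Simpson points: $U$ on the equilateral triangle over $BD$ and $V$ on the one over $QR$, with the orientations prescribed in Appendix~\ref{app:4-point}. Both are again affine in $\vw$. As before, every angle condition $\theta\le 120^\circ$ is rewritten as $2(\vec u\cdot\vec v)+|\vec u||\vec v|\ge 0$.

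We then verify the conditions in order, by CAD in Mathematica under the single hypothesis $d\ge b$ together with $\vw\in[0,\infty)^n$.
\begin{enumerate}
\item \textbf{Convexity of $BDQR$.} We check the four cross products $\vec{BD}\times\vec{DQ}$, $\vec{DQ}\times\vec{QR}$, $\vec{QR}\times\vec{RB}$ and $\vec{RB}\times\vec{BD}$ for a consistent sign.
\item \textbf{Simpson angles.} We check $\angle UQR$, $\angle URQ$, $\angle VBD$ and $\angle VDB$, each at most $120^\circ$.
\item \textbf{Diagonal angle.} We check that the angle between the diagonals $BQ$ and $DR$ is at most $120^\circ$.
\end{enumerate}
We expect most of these to hold unconditionally, because $B$ and $D$ both hang off the path $X$--$Y$--$Z$ in a $120^\circ$ spiral. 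The hypothesis $d\ge b$ should be needed for only one or two items. The natural candidate is the Simpson angle at $B$ or $D$: when $|BX|=b$ dominates $|DY|=d$, the equilateral apex over $BD$ tilts too far and one base angle exceeds $120^\circ$. We would isolate which item needs $d\ge b$ by running CAD on each inequality separately, first with no hypothesis.

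The last step is orthogonal convexity, and here the argument is immediate. The hypothesis region $\{d\ge b\}\cap[0,\infty)^n$ is a half-space intersected with an orthant, hence convex and in particular orthogonally convex. By Lemma~\ref{lem: length_four_pt_ST}, the tree length equals $|UV|$, a norm of an affine function of $\vw$, so it is convex. The argument of Theorem~\ref{thm:lemma_verification_function} therefore yields a verification function.

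The main obstacle is step 2 (or possibly step 3). The expressions there involve products of square roots of quadratics in up to five variables. CAD may be slow, or may show that validity fails near degenerate boundaries such as $c$ or $e$ tending to $0$, where $Q$, $R$ and $B$ can become nearly collinear. If an inequality fails unconditionally, the fallback is to square it carefully after splitting on the sign of the dot product. If it genuinely fails, we would record the extra linear side conditions needed and check that they are implicitly satisfied in the bottleneck region where the lemma is applied.
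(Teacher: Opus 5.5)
Your plan matches the paper's proof. The paper also verifies the three conditions of Theorem~\ref{thm:4point_validity} for $BDQR$ by CAD under $d\ge b$ (with $f=d$), using the same $2(\vec u\cdot\vec v)+|\vec u||\vec v|\ge 0$ reformulation, and gets orthogonal convexity directly from the linearity of $d\ge b$. For the record, the paper finds that convexity, $\angle URQ$, $\angle VBD$ and the diagonal angle hold unconditionally, while $d\ge b$ is needed exactly for $\angle UQR$ and $\angle VDB$, so your fallback is never triggered.
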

\end{tcolorbox}
\begin{proof}
We verify the sufficient conditions for the existence of the $(BD)$-$(QR)$ topology (Theorem~\ref{thm:4point_validity}) using the symbolic computation engine \texttt{Mathematica}.
As in the previous proof, all angular conditions $\theta \le 120^\circ$ are transformed into the equivalent scalar product inequality:
\[
    2(\vec{u} \cdot \vec{v}) + |\vec{u}| |\vec{v}| \ge 0.
\]
We employ \textbf{Cylindrical Algebraic Decomposition (CAD)} to rigorously verify that the following inequalities hold under the assumption $d \ge b$:

\noindent\textbf{1. Convexity of Quadrilateral $BDQR$} (via cross products):
\begin{itemize}
    \item $\vec{BD} \times \vec{DQ} \le 0$, $\vec{DQ} \times \vec{QR} \le 0$, $\vec{QR} \times \vec{RB} \le 0$, and $\vec{RB} \times \vec{BD} \le 0$: All verified to hold unconditionally within the geometric domain.
\end{itemize}

\noindent\textbf{2. Simpson Point Angles} (via transformed inequality):
\begin{itemize}
    \item $\angle URQ \le 120^\circ$ and $\angle VBD \le 120^\circ$: Verified (Unconditional).
    \item $\angle UQR \le 120^\circ$: Verified (Requires $d \ge b$).
    \item $\angle VDB \le 120^\circ$: Verified (Requires $d \ge b$).
\end{itemize}

\noindent\textbf{3. Intersection Angle} (via transformed inequality):
\begin{itemize}
    \item Angle between diagonals $BQ$ and $DR \le 120^\circ$: Verified (Unconditional).
\end{itemize}

\noindent\textbf{4. Orthogonal Convexity}:
The validity region is defined by the linear inequality $d \ge b$, which is automatically convex.
\end{proof}
\begin{tcolorbox}[
  enhanced, breakable,
  colframe=magenta,             
  colback=magenta!5,            
  colbacktitle=magenta!15,      
  coltitle=black,               
  boxrule=0.75pt,               
  arc=1mm,               
  left=3mm, right=3mm,          
  top=1mm, bottom=1mm,          
  boxsep=1pt,                   
  title={\textbf{Step 8}}, 
  fonttitle=\small\sffamily\bfseries,
  attach boxed title to top left={xshift=4mm, yshift*=-\tcboxedtitleheight/2},
  boxed title style={
    sharp corners,
    boxrule=0.75pt,
    colframe=magenta,       
    colback=magenta!15,     
    top=1pt, bottom=0.5pt, left=4mm, right=4mm,
  },
  before skip=10pt, after skip=10pt
]
\begin{lemma}[Step 8] \label{lem:step7}
  The region defined by the system of inequalities $\mathcal{S}$ below is an \textit{orthogonally convex set}. Furthermore, for any $\vw$ in this region, the $(AD)$-$(QR)$ type Steiner tree is valid.
  \[
  \mathcal{S}: \begin{cases}
  I_1: & d(c+e-1)+e(s+c)\ge 0\\
  I_2: & e\ge\frac{1}{2}\left(\sqrt{(c+s+2d)^2+4d}-(c+s+2d)\right)\\
  I_3: & -(1+s/2-c/2+d/2)(c/2+e+d)+\frac{3}{4} c(s+c+d)\ge 0\\
  I_4: & s+e\ge 1\\
  I_5: & c\le 1+d+s/2+e/2
  \end{cases}
  \]
\end{lemma}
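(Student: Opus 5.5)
The proof has two parts, mirroring Lemma~\ref{lem:step3} and Lemma~\ref{lem:step4}. The first part shows that every $\vw \in \mathcal{S}$ satisfies the three sufficient conditions of Theorem~\ref{thm:4point_validity} for the quadrilateral $A,D,Q,R$ with topology $(AD)$-$(QR)$. The second part shows that $\mathcal{S}$ is orthogonally convex, by checking the defining inequalities one coordinate axis at a time.

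\textbf{Setting up the geometric check.} I will fix the coordinates of Figure~\ref{fig:subtree_restated}, with $A$ at the origin and $|AX|=1$. All points $A,X,Y,Z,R,D,Q$ are then affine in $(s,c,d,e)$ with $f=d$, as stipulated in Appendix~\ref{app:full_proof}. I will also compute the Simpson points $U$ (equilateral apex on $AD$) and $V$ (equilateral apex on $QR$).

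\textbf{Part 1: geometric validity.} The conditions of Theorem~\ref{thm:4point_validity} become polynomial inequalities.
\begin{itemize}
\item \emph{Convexity.} I will require the four cross products $\vec{AD}\times\vec{DQ}$, $\vec{DQ}\times\vec{QR}$, $\vec{QR}\times\vec{RA}$, $\vec{RA}\times\vec{AD}$ to be $\le 0$.
\item \emph{Simpson angles.} For $\angle UQR$, $\angle URQ$, $\angle VAD$, $\angle VDA$ I will require $2(\vec u\cdot\vec v)+|\vec u||\vec v|\ge 0$.
\item \emph{Diagonal angle.} The same inequality must hold for the angle between $AQ$ and $DR$.
\end{itemize}
I expect the linear and near-linear constraints to carry most of this. $I_4$ ($s+e\ge1$) and $I_1$ should give the convexity at $R$ and $A$, in the same way that $c+e\ge1$ did in Lemma~\ref{lem:step3}. $I_5$ and $I_3$ should control $\angle VDA$ and $\angle VAD$, where $V$ sits over $QR$. $I_2$ should control the diagonal angle; its square root is the positive root of $e^2+(c+s+2d)e-d=0$, so it is equivalent to $e^2+(c+s+2d)e\ge d$. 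Each implication ``$\mathcal{S}\Rightarrow$ condition'' will be discharged by CAD (Mathematica \texttt{Reduce} on the negation conjoined with $\mathcal{S}$, checking infeasibility). Where needed I will square the angle inequalities after separating signs, to remove the norms.

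\textbf{Part 2: orthogonal convexity.} I will show that the intersection of $\mathcal{S}$ with every axis-parallel line is an interval. It suffices that each $I_k$ cuts such a line in an interval, since an intersection of intervals is an interval.
\begin{itemize}
\item $I_4$ and $I_5$ are linear, so there is nothing to prove.
\item $I_1$ is bilinear: on any axis-parallel line it is affine in the free variable, hence it defines an interval.
\item $I_2$, rewritten as $e^2+(c+s+2d)e-d\ge 0$, is increasing in $e$ and in $c,s$ for $e\ge0$, and affine in $d$. So it is fine on every axis.
\item $I_3$ is quadratic. It is affine in $e$. In $s$ and $d$ it is affine with a coefficient of fixed sign once the other variables are fixed. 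In $c$ it is a quadratic with leading coefficient $1/2+3/4>0$, and here is the difficulty: a convex quadratic's superlevel set $\{\ge 0\}$ can be two rays.
\end{itemize}

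\textbf{Main obstacle.} I expect the $c$-direction of $I_3$ to be the hardest step. The plan is to show that, inside the region cut out by the other constraints (especially $I_5$, which bounds $c$ above), only one branch of the quadratic is admissible. Concretely, I will show via CAD that the vertex of the parabola in $c$ lies on the infeasible side of $I_5$ or of $I_1$. Then $\mathcal{S}$ restricted to a $c$-line is still an interval.

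If this fails, I would argue that $\mathcal{S}$ is orthogonally convex as a whole rather than constraint-by-constraint. I would use CAD on the statement: for all $(c_1,c_2)$ in $\mathcal{S}$ on a common line, every $c$ between them also lies in $\mathcal{S}$. I would do the same for the other axes if any individual monotonicity check is inconclusive.
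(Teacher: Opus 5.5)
Your two-part architecture matches the paper's. Part 1 is a CAD check of the conditions of Theorem~\ref{thm:4point_validity} against all of $\mathcal{S}$; Part 2 is axis-by-axis monotonicity of each $I_k$. You also correctly single out the $c$-direction of $I_3$ as the crux. But your concrete plan for that step uses the wrong constraint and cannot succeed.

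Write $g_3$ for the left side of $I_3$. Then $g_3 = c^2 + c\bigl(d+\tfrac{s+e-1}{2}\bigr) - \bigl(1+\tfrac s2+\tfrac d2\bigr)(e+d)$, with leading coefficient $1$, not $5/4$. So $\partial g_3/\partial c = 2c+d+\tfrac{s+e-1}{2}$, and the vertex is at $c^\ast=(1-s-e-2d)/4$.

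What pushes the vertex to $c^\ast\le 0$ is exactly $I_4$. If $s+e\ge 1$, then $g_3$ is non-decreasing on $c\ge 0$. Since $I_4$ does not involve $c$, every $c$-line either misses $\mathcal{S}$ entirely or carries a monotone $g_3$. This is the paper's argument, and it is the reason $I_4$ is in $\mathcal{S}$ at all.

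Neither $I_1$ nor $I_5$ can do this job. Take the line $s=d=e=0$:
\begin{itemize}
\item $I_1$ and $I_2$ hold trivially, and $I_5$ gives $c\le 1$;
\item $I_3$ becomes $c(c-\tfrac12)\ge 0$;
\item so $\{I_1,I_2,I_3,I_5\}$ cuts this line in $\{0\}\cup[\tfrac12,1]$, which is not an interval;
\item the vertex $c=\tfrac14$ satisfies both $I_1$ and $I_5$, so your planned CAD check would return a counterexample.
\end{itemize}
Only $I_4$, which fails on this line, removes it. Your general idea (the other constraints leave only one branch) is right, but the work is done by $I_4$ moving the vertex out of $c\ge 0$, not by an upper bound on $c$. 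Your whole-set CAD fallback would eventually certify orthogonal convexity. However, it is a much heavier quantifier-elimination problem and hides this one-line reason.

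There are two smaller misreadings.

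First, $I_3$ is not affine in $d$: it contains $-d^2/2$. It is concave in $d$, so its cut with a $d$-line is still an interval and your conclusion survives. The paper instead notes that $\partial g_3/\partial d = c-(1+d+\tfrac s2+\tfrac e2)\le 0$ is exactly $I_5$.

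Second, your Part 1 role assignments are off. This is harmless, since you run CAD against all of $\mathcal{S}$, but the correct picture is:
\begin{itemize}
\item $I_1$ is exactly $\vec{QR}\times\vec{RA}\le 0$; convexity at $A$, $D$, $Q$ is unconditional.
\item $I_2$ is exactly $\angle URQ\le 120^\circ$, not the diagonal condition, because $|\vec{RU}\times\vec{RQ}|+\sqrt3\,\vec{RU}\cdot\vec{RQ}=\sqrt3\,\bigl(e^2+(c+s+2d)e-d\bigr)$.
\item $I_3$ is exactly $\vec{AQ}\cdot\vec{DR}\ge 0$, the diagonal condition strengthened to $90^\circ$.
\item $\angle UQR$, $\angle VAD$, $\angle VDA$ hold unconditionally.
\item $I_4$ and $I_5$ contribute nothing to validity; they serve only to make $I_3$ monotone in $c$ and $d$.
\end{itemize}
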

\end{tcolorbox}
\begin{proof}
We perform the verification in two parts: first demonstrating that these conditions imply the geometric existence of the tree, and second proving that the defined region is orthogonally convex.

We employ \textbf{Cylindrical Algebraic Decomposition (CAD)} to rigorously verify all the following inequalities.

\subsubsection*{Part 1: Geometric Validity}
We verify the sufficient conditions from Theorem~\ref{thm:4point_validity}.

\noindent\textbf{1. Convexity of Quadrilateral $ADQR$}:
\begin{itemize}
    \item The convexity conditions at vertices $A, D, Q$ are verified to hold unconditionally.
    \item At vertex $R$, the condition $\vec{QR} \times \vec{RA} \le 0$ is algebraically equivalent to inequality $I_1$.
\end{itemize}

\noindent\textbf{2. Simpson Point Angles}:
\begin{itemize}
    \item $\angle UQR \le 120^\circ$, $\angle VAD \le 120^\circ$, and $\angle VDA \le 120^\circ$: Verified unconditionally.
    \item $\angle URQ \le 120^\circ$: Solving the algebraic inequality corresponding to this angle condition yields exactly inequality $I_2$.
\end{itemize}

\noindent\textbf{3. Intersection Angle}:
\begin{itemize}
    \item We impose a stronger condition than required: we enforce that the angle between diagonals $AQ$ and $DR$ is $\le 90^\circ$ (which implies $\le 120^\circ$). This corresponds to the condition $\vec{AQ} \cdot \vec{DR} \ge 0$. Algebraic expansion confirms this is equivalent to inequality $I_3$.
\end{itemize}

\subsubsection*{Part 2: Orthogonal Convexity Check}
A set defined by the intersection of inequalities is orthogonally convex if, for each defining function $g_k(\vw) \ge 0$ and each variable $w_i$, the function $g_k$ is monotonic with respect to $w_i$ (ensuring the cross-section is a connected interval). We verify this property for the system $\mathcal{S}$ using symbolic differentiation.

\noindent\textbf{Inequality $I_1$}: $g_1 = d(c+e-1)+e(s+c)$.
\begin{itemize}
    \item The function is linear in each variable individually. Linear functions are monotonic, satisfying the condition.
\end{itemize}

\noindent\textbf{Inequality $I_2$}: $e \ge f(c, s, d)$, where $f$ is the RHS.
\begin{itemize}
    \item Variable $e$: Linear (monotonic).
    \item Variables $c, s, d$: We examine the partial derivatives of the bound $f$. Symbolic verification confirms:
    \[ \frac{\partial f}{\partial c} \le 0, \quad \frac{\partial f}{\partial s} \le 0, \quad \frac{\partial f}{\partial d} \ge 0. \]
    Since the bound is monotonic in each parameter, the region defined by $e \ge f(c,s,d)$ is orthogonally convex.
\end{itemize}

\noindent\textbf{Inequality $I_3$}: $g_3 = \text{LHS of } I_3$.
\begin{itemize}
    \item Variables $s, e$: $g_3$ is linear in $s$ and $e$ (monotonic).
    \item Variable $c$: We compute $\frac{\partial g_3}{\partial c}$. Symbolic verification shows that $\frac{\partial g_3}{\partial c} \ge 0$ whenever $s+e \ge 1$. This condition is explicitly enforced by inequality $I_4$.
    \item Variable $d$: We compute $\frac{\partial g_3}{\partial d}$. Symbolic verification shows that $\frac{\partial g_3}{\partial d} \le 0$ whenever $c \le 1+d+s/2+e/2$. This condition is explicitly enforced by inequality $I_5$.
\end{itemize}

\noindent\textbf{Inequalities $I_4$ and $I_5$}:
\begin{itemize}
    \item These are linear inequalities, which inherently define orthogonally convex regions.
\end{itemize}

\noindent\textbf{Conclusion}:
The region is the intersection of sets defined by functions that are monotonic in every coordinate axis (either unconditionally or conditioned on other inequalities in the set). Therefore, the intersection is orthogonally convex.
\end{proof}
\begin{tcolorbox}[
  enhanced, breakable,
  colframe=magenta,             
  colback=magenta!5,            
  colbacktitle=magenta!15,      
  coltitle=black,               
  boxrule=0.75pt,               
  arc=1mm,               
  left=3mm, right=3mm,          
  top=1mm, bottom=1mm,          
  boxsep=1pt,                   
  title={\textbf{Step 9}}, 
  fonttitle=\small\sffamily\bfseries,
  attach boxed title to top left={xshift=4mm, yshift*=-\tcboxedtitleheight/2},
  boxed title style={
    sharp corners,
    boxrule=0.75pt,
    colframe=magenta,       
    colback=magenta!15,     
    top=1pt, bottom=0.5pt, left=4mm, right=4mm,
  },
  before skip=10pt, after skip=10pt
]
\begin{lemma}[Step 9] \label{lem:step8}
  If $e < s + 1$, then there exists a regular point $E$ such that $|AE| \le \max\{|AY|, |AZ|, |AR|, e + c - 1\}$.
\end{lemma}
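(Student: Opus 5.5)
The plan is to follow the template of Lemmas~\ref{lem:step1}, \ref{lem:step2} and \ref{lem:step5}, one link further along the Steiner Spiral Chain. Label the chain as $A_0=A$, $A_1=X$, $A_2=Y$, $A_3=Z$, $A_4=R$, so that $a_1=1$, $a_2=s$, $a_3=c$, $a_4=e$. If $R$ is a regular point we take $E=R$, and the bound $|AE|\le|AR|$ holds. Otherwise $R$ is a Steiner point. Then the right-turn edge leaving $R$ exists, with endpoint $A_5=:P$ and length $a_5=:p$. Here $p$ is an auxiliary variable that does not appear in the conclusion, so it has to be eliminated.

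First I would invoke the two Type A conditions recorded in Table~\ref{table:results}. Condition $C^A_{0,5,5}$ gives $|AP|\ge p$, and $C^A_{1,5,5}$ gives $|XP|\ge p$. Both hold unconditionally by Lemma~\ref{lem:type_a}. From the explicit quadratic forms in the A-type table, these two conditions confine $p$ to a bounded range in terms of $(s,c,e)$.

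The core step is to verify, as in Theorem~\ref{thm:finding_regular_point}, the implication
\[
(e<s+1)\wedge C^A_{0,5,5}\wedge C^A_{1,5,5}\implies C^B_{0,5,5}.
\]
Concretely, the target is the pair $-1+s+2c+e-p\ge 0$ and $\tfrac{\sqrt3}{2}(-1-s+e+p)\le p$. The second inequality is immediate from $e<s+1$, because then $-1-s+e<0$, and so it is strictly below $\tfrac{\sqrt3}{2}p\le p$. The first inequality (the projection $H$ lying on the ray $A_5A_6$) must come from the upper bound on $p$ supplied by the Type A conditions. I would establish it with Mathematica's \texttt{Reduce}/CAD on the polynomial system, as is done for the other steps. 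If $P$ is itself regular there is no next edge; in that case $P$ is the candidate $E$ and only the distance estimate below is needed.

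Once $C^B_{0,5,5}$ holds, Lemma~\ref{lem:type_b} traps a regular point $E$ in the polygon $AXYZRPH$. It follows that $|AE|$ is at most the largest distance from $A$ to a vertex of this polygon, namely $\max\{1,|AY|,|AZ|,|AR|,|AP|,|AH|\}$. The term $1=|AX|$ is dominated by $|AY|$. It remains to show that $|AP|$ and $|AH|$ are both at most $\max\{|AR|,\,e+c-1\}$, uniformly in the admissible $p$.

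I expect this last estimate to be the main obstacle. For $|AH|$, I would use the computed value $\tfrac{\sqrt3}{2}(p-(1+s-e))$ together with the upper bound on $p$. For $|AP|$, I would note that $|AP|^2$ is a convex quadratic in $p$. Its maximum over the admissible interval for $p$ is therefore attained at an endpoint: $p=0$ gives $|AR|$, and the extreme $p$ allowed by $C^A_{0,5,5}$ and $C^A_{1,5,5}$ should produce the $e+c-1$ term. If this clean endpoint argument fails, I would fall back on a CAD check of the inequality $\min\{|AP|,|AH|\}$-free form $\max\{|AP|,|AH|\}\le\max\{|AR|,e+c-1\}$ over the semialgebraic set defined by the hypotheses.
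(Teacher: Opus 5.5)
\paragraph{Verdict.} Your route is the one suggested by Table~\ref{table:results} ($C^A_{0,5,5},C^A_{1,5,5},C^B_{0,5,5}$), not the paper's written proof. As stated, the trapping step has a genuine gap.

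\paragraph{The second clause of $C^B_{0,5,5}$.} The B-table expression $\tfrac{\sqrt3}{2}(-1-s+e+p)\le p$ is a \emph{signed} distance. The true distance from $A$ to the line $A_5A_6$ is $|AH|=\tfrac{\sqrt3}{2}\,|1+s-e-p|$. The sign is negative exactly when $A$ lies on the outer side of that line. So your remark that $e<s+1$ makes the second clause ``immediate'' checks only the signed surrogate. The actual hypothesis $|A_0H|<a_5$ of Lemma~\ref{lem:type_b} fails whenever $p\le\sqrt3(1+s-e)/(2+\sqrt3)$. For example, $s=c=e=1$, $p=0.1$ satisfies $C^A_{0,5,5}$, $C^A_{1,5,5}$ and both signed B-inequalities, yet $|AH|\approx0.78>p$. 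In that case the mechanism of Lemma~\ref{lem:type_b} is unavailable: it is $|AH|<a_5$ that stops later edges from crossing $AH$.

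\paragraph{The first clause of $C^B_{0,5,5}$.} This clause also fails if you do not restrict $c$. Take $s=1$, $e=0.2$, $p=0.3$, $c=0$. These values satisfy $e<s+1$, $C^A_{0,5,5}$ and $C^A_{1,5,5}$, but give $-1+s+2c+e-p<0$. So the CAD call you propose would return a counterexample. You must first dispose of $c\le1$ by Lemma~\ref{lem:step0A}. For $c\ge1$, $C^A_{0,5,5}$ alone gives $p\le|AR|^2/K\le s+2c+e-1$, where $K=1+2s+c-e$.

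\paragraph{The missing case $p\le 1+s-e$.} You need a separate argument here. With $c\ge1$ we have $p\le s+c$. The ray $A_5A_6$ then meets line $AX$ at distance $1+s-e-p\ge0$ from $A$, on the side of $X$. It therefore crosses segment $AX$, or segment $XY$ if $p<s-e$. Lemma~\ref{lem:sublemma_for_type_b} then traps $E$ in the polygon $AXYZRA_5$.

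\paragraph{How the paper does it.} The paper avoids Type~B altogether.
\begin{itemize}
\item It introduces $W$, the intersection of ray $RN$ with ray $XA$, so that $|AW|=c+e-1$ and $|RW|=s+c$.
\item It uses the exchange inequality $|AW|<|RW|$, which is equivalent to $e<s+1$, to force $N\in RW$.
\item It excludes $NT$ crossing $AW$ by a second exchange argument.
\item Otherwise it applies Lemma~\ref{lem:sublemma_for_type_b} to trap $E$ in $WXYZR$.
\end{itemize}
That polygon is where the term $e+c-1$ comes from.

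\paragraph{The distance step is easy.} The estimate you flagged as the main obstacle goes through directly. We have $K^2-|AR|^2=3(s+c)(1+s-e)\ge0$, so $p\le|AR|^2/K\le K$. Hence $|AP|^2-|AR|^2=p(p-K)\le0$, and so $|AH|\le|AP|\le|AR|$. Once the trapping is repaired by the case split above, your route would give the sharper bound $\max\{|AY|,|AZ|,|AR|\}$.
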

\end{tcolorbox}
\begin{proof}
If $R$ is a regular point, the statement is proved. We now assume $R$ is a Steiner point. Let $RN$ denote the next edge emanating from $R$ in clockwise order. Let $W$ be the intersection of the ray $RN$ and the ray $XA$.

We proceed by case analysis.

\paragraph{Case 1: Point $W$ lies on the line segment $AX$.}
Applying a technique identical to that used in the proof of Lemma~\ref{lem:type_b}, we can show that there exists a trapped regular point within the polygon $WXYZR$. Since $|AW| \le |AX| < |AY|$, the distance from point $A$ to this trapped point is bounded by:
\[
\max\{|AY|, |AZ|, |AW|\}
\]

\paragraph{Case 2: Point $W$ does not lie on the line segment $AX$.}
In this case, a calculation yields $|AW| = e+c-1 > 0$ and $|RW| = s+c$. Our objective is to prove the existence of a trapped regular point within the polygon $WXYZR$.

First, we prove by contradiction that $N$ must lie on the line segment $RW$. Assume, for the sake of contradiction, that $N$ does not lie on the segment $RW$. The condition $e < s+1$ implies $|AW| < |RW|$. This would allow for a structural modification of the Steiner Minimal Tree (SMT)—by rerouting a path through $AW$ instead of one corresponding to $RW$—that results in a shorter total length. This contradicts the minimality hypothesis of the SMT. Therefore, $N$ must lie on the line segment $RW$.

Next, we analyze the properties of point $N$:
If $N$ is a regular point, our objective is met. If $N$ is a Steiner point, let $NT$ be the next edge emanating from $N$ in clockwise order.

\begin{itemize}
    \item If $NT$ intersects $AW$, let $M$ be the point of intersection. The condition $e < s+1$ implies $|AM| < |NM|$. By a similar argument as before, a shorter tree could be constructed, which leads to a contradiction.
    
    \item If $NT$ does not intersect $AW$, then should $T$ also be a Steiner point, the extension of the subsequent clockwise edge from $T$ must intersect either $XY$ or $YZ$. By Lemma ~\ref{lem:sublemma_for_type_b}, it follows that a trapped regular point, let us call it $E$, exists within the polygon $WXYZR$.
\end{itemize}
In this final scenario, the distance $|AE|$ is bounded as follows:
\[
|AE| \le \max\{ |AY|, |AZ|, |AR|, |AW| \}, \quad \text{where } |AW| = e+c-1
\]
This completes the proof.

\end{proof}


\newpage
\section{Algorithms and Implementation Details}

\subsection{Detail Algorithm for Reward Model}
\label{sec:appendix_reward_model_alg_detail}
In this section, we present the detailed pseudocode and a comprehensive description of the core algorithm for our reward model. Given a set of verification functions $\mathcal{F}$, the algorithm will determine whether $\rho$ can be obtained based on $\mathcal{F}$.

\begin{tcolorbox}[
  enhanced, breakable,
  colframe=cyan!50!black,       
  colback=cyan!5!white,         
  colbacktitle=cyan!15!white,   
  coltitle=black,               
  boxrule=0.75pt,               
  arc=1mm,                
  title={Feasibility Oracle},
  fonttitle=\sffamily\bfseries,
  attach boxed title to top left={xshift=4mm, yshift*=-\tcboxedtitleheight/2},
  boxed title style={
    sharp corners,
    boxrule=0.75pt,             
    colframe=cyan!50!black,     
    colback=cyan!15!white,      
    top=1pt, bottom=1pt, left=4mm, right=4mm,
  },
  boxsep=1.5mm, 
  top=1.5mm, bottom=1.5mm, 
  left=4mm, right=4mm,
  before skip=10pt, after skip=10pt
]
\vspace{-15pt}
\begin{algorithm}[H] 
\caption{Feasibility Oracle}
\label{alg:region_partitioning}
\SetAlgoLined
\KwIn{Set of functions $\mathcal{F}\subset \mathcal{F}_{\text{ver}}$, parameter $\rho$.}
\KwOut{\textbf{True} if $\rho$ is feasible based on verification functions $\mathcal{F}$, \textbf{False} otherwise}

\tcp{Initialize with $2^n$ regions covering the space}
\textbf{Initialization:} $Q \leftarrow \{ \prod_{i=1}^n I_i \mid I_i \in \{[0, 1], [1, +\infty)\} \}$\;

\While{$Q$ is not empty}{
    \If{iteration count exceeds threshold}{
        \Return \textbf{False}
    }
    Extract region $B = \prod_{i=1}^n [l_i, r_i]$ with $B \cap \text{Domain} \neq \emptyset$ from $Q$. Define $I_{\infty} \leftarrow \{ i \mid r_i = +\infty \}$ and $V \leftarrow \text{Vertices}(B)$. Set $verified\gets\textbf{False}$.\;
    
    \For{$F \in \mathcal{F}$}{
        \tcp{Check monotonicity conditions}
        \If{$F$ is monotonically decreasing in all $i \in I_{\infty}$}{
            \If{$\max_{v \in V} F(v, \rho) \le 0$}{
                $verified \leftarrow \textbf{True}$\;
            }
        }
    }

    \If{\textbf{not} $verified$}{
        $k^* \leftarrow \displaystyle\argmax_{i} \left( \text{if } i \in I_\infty \text{ then } 1/l_i \text{ else } r_i - l_i \right)$\;
        
        $m \leftarrow \text{if } k^* \in I_\infty \text{ then } 2 \cdot l_{k^*} \text{ else } (l_{k^*} + r_{k^*}) / 2$\;
        
        Split $B$ along dimension $k^*$ at $m$ into sub-regions $B_{left}$ and $B_{right}$\;
        Push $B_{left}$ and $B_{right}$ to $Q$\;
    }
}
\Return \textbf{True}\;
\end{algorithm}
\end{tcolorbox}

The algorithm  recursively partitions $\mathcal{W}$ into hyperrectangles, certifying that every region is covered by a verification function via the following steps:

\begin{enumerate}[topsep=0pt,leftmargin=10pt]\setlength{\itemsep}{0pt}\item \textbf{Initialization of Region Partitioning.} We maintain the queue $Q$ of hyperrectangles that partition the search space. We initialize $Q$ with the fundamental orthant splits to ensure the initial partition covers the entire unbounded space $\mathcal{W}$.

\item \textbf{Monotonicity Check for Unbounded Regions.} (Line 7) To handle regions extending to $+\infty$, we verify that the verification function $f$ is monotonically non-increasing along the unbounded dimension. This ensures that the maximum value is attained at the finite lower bound. The implementation details of the monotonicity check are provided in Appendix~\ref{sec:appendix_mono_check_detail}.

\item \textbf{Feasibility Verification for Bounded Regions.} (Line 8) For the finite vertices, we invoke the Vertex Maximum Property. By verifying $f(\vw)\le 0$ at all vertices, we mathematically guarantee that the inequality holds for every point $\vw$ within the region.

\item \textbf{Refinement via Subdivision.} (Lines 10-13) Unverified regions are subdivided along the dimension with the longest edge. For unbounded dimensions $[l, +\infty)$, we define a ``logical length'' of $1/l$ and apply a \textit{geometric doubling strategy}, partitioning the interval into a bounded segment $[l, 2l]$ and a remaining unbounded tail $[2l, +\infty)$. This recursive process continues until the entire space is covered, leaving no unverified gaps.
\end{enumerate}


\subsection{Monotonicity Verification Algorithm}
\label{sec:appendix_mono_check_detail}
The vertex-checking strategy of Theorem~\ref{thm:vertex_max} is not directly applicable to unbounded parameter regions. To address this case, we must first verify the function's monotonicity along any unbounded dimensions, thereby reducing the problem to one on a bounded domain.
Based on the lemmas in Appendix~\ref{sec:appendix_monotonicity} (Lemmas~\ref{lemma:path_deriv}, \ref{lemma:smt_deriv}, and \ref{lemma:smt_deriv_f}), we define the procedure to verify $\frac{\partial F}{\partial x} \le 0$, which is presented in Algorithm~\ref{alg:mono_check}.

\begin{tcolorbox}[
  enhanced, breakable,
  colframe=cyan!50!black,       
  colback=cyan!5!white,         
  colbacktitle=cyan!15!white,   
  coltitle=black,               
  boxrule=0.75pt,               
  arc=1mm,                
  title={Monotonicity Check Procedure},
  fonttitle=\sffamily\bfseries,
  attach boxed title to top left={xshift=4mm, yshift*=-\tcboxedtitleheight/2},
  boxed title style={
    sharp corners,
    boxrule=0.75pt,             
    colframe=cyan!50!black,     
    colback=cyan!15!white,      
    top=1pt, bottom=1pt, left=4mm, right=4mm,
  },
  boxsep=1.5mm, 
  top=1.5mm, bottom=1.5mm, 
  left=4mm, right=4mm,
  before skip=10pt, after skip=10pt
]
\vspace{-15pt}
\begin{algorithm}[H]
\setcounter{AlgoLine}{0}
\caption{Monotonicity Check Procedure}
\label{alg:mono_check}
\SetAlgoLined
\KwIn{Split $\tau = (V^*, S^-, S^+, t^*)$, Edge $e$ with length parameter $x$}
\KwOut{\textbf{True} if $\frac{\partial F_{\tau}}{\partial x} \le 0$ is guaranteed; \textbf{False} otherwise}

\tcp{1. Contribution from removed structure $S^-$ (Negative term)}
$\delta_{S^-} \leftarrow \mathbb{I}(e \in S^-)$\;

\tcp{2. Contribution from connector $t^*$ (Positive term)}
$N_{t^*} \leftarrow \sum_{(u, v) \in t^*} \mathbb{I}(e \text{ lies on the path between } u \text{ and } v)$\;

\tcp{3. Contribution from the component $S^+$ (Positive term)}
$\delta_{S^+} \leftarrow 0$\;

\For{each pair of distinct terminals $(u, v)$ in $S^+$}{
    \If{$e$ lies on the path between $u$ and $v$}{
        $\delta_{S^+} \leftarrow 1$\;
    }
}

\tcp{Verification Logic}
\If{$\delta_{S^-} + N_{t^*} + \delta_{S^+} = 0$}{
    \Return \textbf{True} \tcp*[r]{Variable $x$ is not involved in $F$}
}
\If{$N_{t^*} + \delta_{S^+} \le 1\wedge \delta_{S^-}=1$}{
    \Return \textbf{True}\;
}
\Else{
    \Return \textbf{False}\;
}
\end{algorithm}
\end{tcolorbox}

Here, $\delta_{S^-}$ indicates whether $e$ belongs to $S^-$, 
$N_{t^*}$ is the upper bound in Lemma~\ref{lemma:path_deriv}, 
and $\delta_{S^+}$ indicates whether $e$ lies on the path between some pair of distinct terminals in $S^+$.
These quantities are then combined to certify whether
$\frac{\partial F}{\partial x}\le 0$ is guaranteed.

\begin{proposition}[Correctness of Algorithm~\ref{alg:mono_check}]
\label{prop:mono_check_correctness}
    Suppose that either
    \begin{enumerate}[topsep=0pt]\setlength{\itemsep}{0pt}
        \item $S^+$ connects at most three terminals, or
        \item the parameter under consideration is $x=f$.
    \end{enumerate}
    If Algorithm~\ref{alg:mono_check} returns \textbf{True}, then
    \[
        \frac{\partial F_{\tau}}{\partial x} \le 0.
    \]
\end{proposition}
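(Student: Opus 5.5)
We decompose $F_\tau = \rho L_{t^*} + L_{S^+} - L_{S^-}$ and bound the (one-sided) derivative of each term with respect to $x$ separately. The decomposition is
\[
\frac{\partial F_\tau}{\partial x} = \rho\,\frac{\partial L_{t^*}}{\partial x} + \frac{\partial L_{S^+}}{\partial x} - \frac{\partial L_{S^-}}{\partial x}.
\]
The three terms are handled as follows.

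\textbf{The removed term.} $L_{S^-}$ is a sum of edge lengths of the original tree, i.e.\ a sum of coordinates of $\vw$. Hence $\partial L_{S^-}/\partial x = \delta_{S^-}$ exactly.

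\textbf{The connector term.} By Lemma~\ref{lemma:path_deriv}, $\partial L_{t^*}/\partial x \le N_{t^*}$.

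\textbf{The Steiner term.} We show $\partial L_{S^+}/\partial x \le \delta_{S^+}$.
\begin{itemize}
\item If $\delta_{S^+}=0$, then $e$ lies on no path between two terminals of $S^+$. All terminals of $S^+$ therefore lie on the same side of $e$ and move rigidly (by the same translation) when $x$ varies. So $L_{S^+}$ is constant in $x$ and its derivative is $0$.
\item If $\delta_{S^+}=1$, under hypothesis 1 Lemma~\ref{lemma:smt_deriv} gives $\partial L_{S^+}/\partial x \le 1$.
\item If $\delta_{S^+}=1$, under hypothesis 2 ($x=f$) Lemma~\ref{lemma:smt_deriv_f} gives $\partial L_{S^+}/\partial x \le 1$.
\end{itemize}

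\textbf{The two True branches.} In the first branch all three indicators vanish, so every term has zero derivative and $\partial F_\tau/\partial x = 0$. In the second branch $\delta_{S^-}=1$ and $N_{t^*}+\delta_{S^+}\le 1$. Using $0\le\rho\le 1$ (we may assume this, since $\rho$ is a candidate Steiner ratio below $1$),
\[
\frac{\partial F_\tau}{\partial x} \le \rho N_{t^*} + \delta_{S^+} - 1 \le N_{t^*} + \delta_{S^+} - 1 \le 0.
\]

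\textbf{Main obstacle: differentiability.} The lengths are not differentiable everywhere. A Euclidean distance fails to be differentiable where it vanishes, and the SMT length may be non-smooth where the Steiner point degenerates. The cleanest fix is to replace derivatives by difference quotients and prove Lipschitz-type bounds directly, as in the proof of Lemma~\ref{lemma:smt_deriv_f}. For $x_1<x_2$, moving $x$ translates one side of $e$ by a vector of length $x_2-x_1$. The triangle inequality then bounds:
\begin{itemize}
\item the increase of each $t^*$ edge crossing $e$ by $x_2-x_1$;
\item the increase of $L_{S^+}$ by $x_2-x_1$, by applying the same translation to the part of the old tree on the moved side and adding one connecting segment;
\item the change of $L_{S^-}$ by exactly $x_2-x_1$ when $\delta_{S^-}=1$.
\end{itemize}
This gives $F_\tau(x_2)-F_\tau(x_1)\le 0$, i.e.\ $F_\tau$ is non-increasing in $x$, which is what the unbounded-region argument actually needs.

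I expect the $S^+$ bound to need care in the $|V'|\le 3$ case when terminals sit on both sides of $e$. Translating the old Steiner tree and adding a segment of length $|x_2-x_1|$ produces a connected network spanning the new terminal positions. This shows the increase is at most $x_2-x_1$ without invoking the envelope theorem.
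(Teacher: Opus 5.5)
Your proof is correct and follows the paper's argument. You use the same term-by-term bound $\partial F_\tau/\partial x \le \rho N_{t^*}+\delta_{S^+}-\delta_{S^-}$, obtained from Lemmas~\ref{lemma:path_deriv}, \ref{lemma:smt_deriv}, \ref{lemma:smt_deriv_f} together with $0\le\rho\le 1$, and then check the two \textbf{True} branches in the same way.

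Your difference-quotient version is a worthwhile strengthening rather than a different route. It removes the reliance on differentiability and on the envelope theorem, and it gives directly the monotonicity that is actually needed on unbounded boxes. It works because, under either hypothesis, one side of $e$ contains at most one terminal of $S^+$. So a single added segment suffices, after first translating the whole old tree when that singleton lies on the fixed side.
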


\begin{proof}
    We assume $\rho \le 1$, which is valid for the Steiner ratio problem. Recall the derivative of the splitting function:
    \[
        \frac{\partial F_{\tau}}{\partial x} = \rho \frac{\partial L_{t^*}}{\partial x} + \frac{\partial L_{S^+}}{\partial x} - \frac{\partial L_{S^-}}{\partial x}.
    \]
    We analyze the bounds using the variables computed in Algorithm~\ref{alg:mono_check}:
    \begin{itemize}[topsep=0pt,leftmargin=10pt]\setlength{\itemsep}{0pt}
        \item Since $S^-$ is a fixed tree topology and $x$ is the length of edge $e$, $\frac{\partial L_{S^-}}{\partial x} = \delta_{S^-}$.
        \item By Lemma~\ref{lemma:path_deriv}, $\frac{\partial L_{t^*}}{\partial x} \le N_{t^*}$.
        \item For the $S^+$ term, if $\delta_{S^+}=0$, then $e$ lies on no path between any pair of terminals in $S^+$, so $L_{S^+}$ is independent of $x$ and $\frac{\partial L_{S^+}}{\partial x}=0$.
        If $\delta_{S^+}=1$, then $\frac{\partial L_{S^+}}{\partial x}\le 1$ follows from Lemma~\ref{lemma:smt_deriv} when $S^+$ connects at most three terminals, and from Lemma~\ref{lemma:smt_deriv_f} when $x=f$.
        Therefore, in either case, $\frac{\partial L_{S^+}}{\partial x}\le \delta_{S^+}$.
    \end{itemize}
    Substituting these into the derivative:
    \begin{equation}
        \label{eq:grad_bound}
        \begin{aligned}
        \frac{\partial F_{\tau}}{\partial x} & \le \rho \cdot N_{t^*} + \delta_{S^+} - \delta_{S^-} 
        \le N_{t^*} + \delta_{S^+} - \delta_{S^-}.
        \end{aligned}
    \end{equation}
    Whenever Algorithm~\ref{alg:mono_check} returns \textbf{True}, either
    $N_{t^*}+\delta_{S^+}\le 1$ and $\delta_{S^-}=1$, or
    $\delta_{S^-}=N_{t^*}=\delta_{S^+}=0$.
    In either case, \eqref{eq:grad_bound} implies
    \[
        \frac{\partial F_\tau}{\partial x}\le 0.
    \]
    Thus, Algorithm~\ref{alg:mono_check} provides a rigorous sufficient condition for monotonicity.
\end{proof}

\subsection{Strategies for Handling Complex Splits}

\subsubsection{Methods for Handling Splits with  Implicit Regular Points} \label{sec:impact_of_implicit_point}

When a split involves implicit regular points and utilizes edges connected to them (referred to as implicit edges), the splitting function cannot be evaluated directly, since its value depends on the locations of these implicit regular points. We therefore analyze instead the monotonicity of the verification functions generated by the Trapped Regular Point Lemma. We only perform monotonicity checks for the baseline verification functions; verification functions generated by other lemmas are not regarded as monotone.

For a baseline verification function, Algorithm~\ref{alg:mono_check} is applied only to variables with respect to which the implicit edge lengths are constant. For these variables, Proposition~\ref{prop:mono_check_correctness} remains valid. If an implicit edge length depends on a variable, we do not regard the corresponding verification function as monotone with respect to that variable. Under the settings in Section~\ref{app:experiment_settings}, the variables with respect to which each implicit edge length is not constant are listed in Table~\ref{tab:implicit_edge_nonconstant_vars}.

\begin{table}[H]
\centering
\begin{minipage}[t]{0.48\textwidth}
\centering
\begin{tabular}{cc}
\toprule
Implicit edge & Nonconstant variables \\
\midrule
$(A,E)$ & $c, s$ \\
$(V,F)$ & $v, d, c, f$ \\
$(U,E)$ & $c, d, s, u, e$ \\
$(V,E)$ & $c, d, s, v, e$ \\
\bottomrule
\end{tabular}
\noindent\par\vspace{0.5em}
{\footnotesize When $D$ is a Steiner point}
\noindent\par\vspace{1em}
\end{minipage}
\hfill
\begin{minipage}[t]{0.48\textwidth}
\centering
\begin{tabular}{cc}
\toprule
Implicit edge & Nonconstant variables \\
\midrule
$(A,E)$ & $c, s$ \\
$(D,F)$ & $d, c, f$ \\
$(D,E)$ & $c, d, s, e$ \\
\bottomrule
\end{tabular}
\noindent\par\vspace{0.5em}
{\footnotesize When $D$ is a regular point}
\noindent\par\vspace{1em}
\end{minipage}
\caption{Variables with respect to which each implicit edge length is not constant.}
\label{tab:implicit_edge_nonconstant_vars}
\end{table}

This conservative strategy is justified by the following observation: the verification bottlenecks (i.e., regions where the feasible $\rho$ is minimized) typically occur near the origin of the parameter space $\mathcal{W}$. Therefore, complex splits only need to be applied to bounded regions, rather than unbounded ones. As a result, we adopt a conservative strategy regarding monotonicity for complex splits.

\subsubsection{Methods for Handling Splits with 4-Point Steiner Tree}

When a split comprises an $S^+$ that connects four terminals, the monotonicity of variable $f$ can be verified using Algorithm~\ref{alg:mono_check}, as established in Proposition~\ref{prop:mono_check_correctness}. For all other variables, monotonicity is not assumed.

This conservative strategy follows the same rationale as discussed above.


\subsection{Enumeration of Valid Splits}
\label{sec:enumeration}

To generate a rich candidate set of splitting functions, we perform an algorithmic enumeration of topologically valid splits. The procedure is formally defined in Algorithm~\ref{alg:split_enumeration}.

\noindent\textbf{Definitions used in Algorithm:}
\begin{itemize}[topsep=0pt]
    \setlength{\itemsep}{0pt}
    \item $G$: The graph representing the local Steiner substructure.
    \item $V_{exp}$: The set of explicit regular points (e.g., $\{A, B, D\}$ or $\{A, B, U, V\}$).
    \item $W$: The union of $V_{exp}$ and the boundary connection points $\{Q, R\}$.
    \item $V_{imp}$: The set of available implicit regular points $\{E, F\}$.
\end{itemize}

\begin{tcolorbox}[
  enhanced, breakable,
  colframe=cyan!50!black,       
  colback=cyan!5!white,         
  colbacktitle=cyan!15!white,   
  coltitle=black,               
  boxrule=0.75pt,               
  arc=1mm,                
  title={Split Enumeration},
  fonttitle=\sffamily\bfseries,
  attach boxed title to top left={xshift=4mm, yshift*=-\tcboxedtitleheight/2},
  boxed title style={
    sharp corners,
    boxrule=0.75pt,             
    colframe=cyan!50!black,     
    colback=cyan!15!white,      
    top=1pt, bottom=1pt, left=4mm, right=4mm,
  },
  boxsep=1.5mm, 
  top=1.5mm, bottom=1.5mm, 
  left=4mm, right=4mm,
  before skip=10pt, after skip=10pt
]
\vspace{-15pt}
\begin{algorithm}[H]
\setcounter{AlgoLine}{0}
\caption{Split Enumeration}
\label{alg:split_enumeration}
\SetAlgoLined
\KwIn{Graph $G$, Terminals $V_{exp}$, Boundary Set $W$, Implicit Points $V_{imp}$}
\KwOut{Final list of valid splits $\mathcal{T}_{final}$}

\tcp{\textrm{\textbf{Phase 1:} Pruning (Generate $S^-$)}}
Initialize $\mathcal{T}_1 \leftarrow \varnothing$\;

\For{each non-empty subset $V^* \subseteq V_{exp}$}{
    \For{each subset of edges $S^- \subseteq E(G)$}{
        \If{all edges incident to $V^*$ in $G$ are included in $S^-$}{
            Add $(V^*, S^-)$ to $\mathcal{T}_1$\;
        }
    }
}

\tcp{\textrm{\textbf{Phase 2:} Re-optimization (Generate $S^+$)}}
Initialize $\mathcal{T}_2 \leftarrow \varnothing$\;

\For{each tuple $(V^*, S^-) \in \mathcal{T}_1$}{
    Let $G' = (V(G), E(G) \setminus S^-)$\;
    
    Find connected components $C_1, \dots, C_k$ of $G'$ that contain at least one vertex from $W \setminus V^*$\;
    
    \tcp{Enumerate all combinations of representatives}
    \For{each $(p_1, \dots, p_k) \in V(C_1) \times \dots \times V(C_k)$}{
        $S^+ \leftarrow \text{SMT}(\{p_1, \dots, p_k\})$\;
        
        Add $(V^*, S^-, S^+)$ to $\mathcal{T}_2$\;
    }
}

\tcp{\textrm{\textbf{Phase 3:} Re-connection (Generate $t^*$)}}
Initialize $\mathcal{T}_{final} \leftarrow \varnothing$\;

Let $V_{conn} = V_{exp} \cup V_{imp}$\;

\For{each tuple $(V^*, S^-, S^+) \in \mathcal{T}_2$}{
    \For{each subgraph $t^*$ on vertices $V_{conn}$}{
        \tcp{Topology Check via Graph Contraction}
        Construct graph $H$ from $t^*$: Contract all vertices in $V_{conn} \setminus V^*$ into a single super-node $\Omega$\;
        
        \If{$H$ is a spanning tree on the set $V^* \cup \{\Omega\}$}{
            Add $(V^*, S^-, S^+, t^*)$ to $\mathcal{T}_{final}$\;
        }
    }
}
\Return $\mathcal{T}_{final}$\;
\end{algorithm}
\end{tcolorbox}

The algorithm enumerates splittings in three phases.

First, it chooses every non-empty subset $V^*\subseteq V_{exp}$ and enumerates edge sets $S^-$ such $S^-$ contains all edges incident to vertices in $V^*$.

Second, let

\[
G^\prime=(V(G),E(G)\setminus S^-).
\]

The algorithm identifies all connected components of $G^\prime$ that contain at least one vertex from $W\setminus V^*$, where $W$ consists of all explicit regular points together with the boundary points. Since every path from a point in $V\setminus V_{exp}$ to the local graph $G$ enters through a boundary point, the components containing $W\setminus V^*$ capture all locally relevant residual pieces. The algorithm then chooses one representative vertex from each such component and, for each resulting tuple of representatives, takes $S^+$ to be the SMT on those vertices.

Finally, the algorithm enumerates candidate forests $t^*$ for reconnecting $V^*$ to the remaining regular points. To test whether a candidate has the required topology, all vertices in $V_{conn}\setminus V^*$ are contracted into a single super-node $\Omega$. The candidate is accepted if the resulting multigraph is a spanning tree on $V^*\cup\{\Omega\}$, with self-loops and parallel edges taken into account. This guarantees that every removed vertex is connected to the outside, while no two outside vertices are connected through $t^*$.

Thus, the algorithm systematically generates a broad set of valid splittings by exploring different choices of $V^*$, $S^-$, $S^+$, and $t^*$.

\newpage

\section{LLM Agent and Prompts}
\label{sec:appendix_prompt}
\lstset{
  literate={∠}{{$\angle$}}1
           {°}{{$^\circ$}}1
}

\subsection{Prompt for Searching Trapped Regular Point Lemmas}
\begin{responsebox}{Prompt for Searching Trapped Regular Point Lemmas}

\textbf{PROMPT START}

\textbf{Role:} \textit{You are an expert in Computational Geometry and Symbolic Logic.}

\textbf{Task:} You need to derive a linear condition on three variables ($a_2, a_3, a_4$) that guarantees a polygonal chain is ``trapped'' within a bounded region, assuming the chain satisfies Minimum Steiner Tree properties.

\textbf{Tools:} You have access to a \texttt{Mathematica} execution environment.

You must follow the \textbf{6 Phases} below strictly.

\medskip

\textbf{PHASE 1: Geometric Definitions}

Construct the coordinates of a polygonal chain $A_0, A_1, \dots, A_n$ and a point $D$ based on lengths $a_i$.

\begin{enumerate}
  \item \textbf{Coordinate System:}
    \begin{itemize}
      \item $A_0 = (0,0)$
      \item $A_1 = (1,0)$ (Since $a_1=1$)
    \end{itemize}
  \item \textbf{Recursive Definition:} For $k\ge 2$, point $A_k$ is derived from $A_{k-1}$ by moving distance $a_k$ at an inclination angle of $-60\times (k-1)$ degrees.
    \begin{itemize}
      \item $A_2 = A_1 + a_2(\cos(-60^\circ),\sin(-60^\circ))$
      \item $A_3 = A_2 + a_3(\cos(-120^\circ),\sin(-120^\circ))$
      \item $\dots$ and so on.
    \end{itemize}
  \item \textbf{Point D:} $D$ is connected to $A_2$ with length $d$ at angle $0^\circ$.
    \begin{itemize}
      \item $D = A_2 + (d,0)$
    \end{itemize}
\end{enumerate}

\medskip

\textbf{PHASE 2: The Logical Implication Problem}

You are working with two types of geometric properties.

\textbf{1. Properties (A) - Steiner Minimality:}
\begin{itemize}
  \item \textbf{Definition:} For any $0\le u < s \le v \le n$, the Euclidean distance satisfies $|A_uA_v| \ge a_s$.
  \item \textbf{Meaning:} These are the constraints that the sequence $a_i$ always satisfies.
\end{itemize}

\textbf{2. Properties (B) - The Polygon Trap:}
\begin{itemize}
  \item \textbf{Definition:} For $0\le u < s \le v < n$, let $H$ be the foot of the perpendicular from $A_u$ to the line containing $A_vA_{v+1}$.
  \item \textbf{The Trap:} \textit{IF} $H$ lies on the ray $A_vA_{v+1}$ \textit{AND} the distance $|A_uH| < a_s$, \textit{THEN} the entire tail of the chain ($A_n$) is trapped inside the polygon defined by vertices $A_u, A_{u+1},\dots,A_v,H$.
  \item \textbf{Meaning:} These are your goals. Proving one of these allows you to bound the distance to $A_n$.
\end{itemize}

\textbf{CRITICAL INDEX CONSTRAINT:}

For both property types $A(u,v)$ and $B(u,v)$, the index $s$ (associated with the threshold length $a_s$) \textbf{MUST} satisfy the condition

\[ u < s \le v. \]

\begin{itemize}
  \item \textit{Example:} For a condition involving $A_3$ and $A_4$ (i.e., $u=3,v=4$), the \textbf{only} valid $s$ is $4$. You cannot use $a_5$.
\end{itemize}

\textbf{The Logical Goal:}

Find a condition $C(a_2,a_3,a_4)$ such that:

\[
  \forall a_5,a_6,\dots (>0):\ \ [\text{Selected A-props are True}] \implies [\text{At least one Selected B-prop is True}].
\]

\medskip

\textbf{PHASE 3: Condition Selection Strategy (Look-Up Only)}

\textbf{CRITICAL INSTRUCTION:} Do NOT attempt to derive the algebraic expressions for conditions $A(u,v,s)$ or $B(u,v,s)$ yourself. Use the exact algebraic forms provided in the list below.

\begin{enumerate}
  \item \textbf{Analyze the Box:} Check the provided Box values against the $B$ conditions in the list. Identify which ones are likely to hold (i.e., where the inequality is satisfied or close to satisfied).
  \item \textbf{Select Constraints:} Select a small subset (e.g., 1 $B$ property and 0--3 $A$ properties) that seem most relevant to the Box.
  \item \textbf{Formulate Logic:} You want to prove that the $A$ constraints prevent $a_5,a_6$ from violating the $B$ goal. It is OK that $B$ goals do not involve $a_5,a_6$, in which case you may not need $A$ properties.
\end{enumerate}

\textbf{IMPORTANT Tip:} You should first check $B$ properties with $v\le 4$ and identify if there are any that already hold. In these cases, you do not need to derive upper bound for $a_5$.

\textbf{Tips:} In order to pose upper bound on $a_5$, you can typically use $A(0,5,5)$.

\textbf{Variable Instantiation (WARNING):}
\begin{itemize}
  \item The conditions in the list below contain the variable \texttt{a\_s}. When you select a condition $A(u,v)$ or $B(u,v)$, you must replace \texttt{a\_s} with a specific $a_k$.
  \item \textbf{CONSTRAINT:} You must ensure $u < k \le v$.
  \item \textbf{ERROR TRAP:} \textbf{Do NOT} use $B(3,4)$ with $a_5$. Since $u=3,v=4$, the condition $s\le v$ requires $s\le 4$. Using $s=5$ is geometrically invalid.
\end{itemize}

\textbf{The Provided List (Algebraic Forms):}

\begin{lstlisting}[breaklines=true,basicstyle=\ttfamily\small]
Example: A(u, v, s) represents the (A) condition for points |A_u A_v| >= a_s
A(0, 1): 1 >= a_s
A(0, 2): 1 + a2 + a2^2 >= a_s^2
A(0, 3): 1 + a2 + a2^2 - a3 + a2*a3 + a3^2 >= a_s^2
A(0, 4): a2^2 + a3^2 + a2*(1 + a3 - a4) + a3*(-1 + a4) + (-1 + a4)^2 >= a_s^2
A(0, 5): 1 + a2^2 + a3^2 - 2*a4 + a4^2 + a2*(1 + a3 - a4 - 2*a5) + a3*(-1 + a4 - a5) - a5 + a4*a5 + a5^2 >= a_s^2
A(0, 6): 1 + a2^2 + a3^2 - 2*a4 + a4^2 - a5 + a4*a5 + a5^2 + a3*(-1 + a4 - a5 - 2*a6) + a2*(1 + a3 - a4 - 2*a5 - a6) + a6 - a4*a6 + a5*a6 + a6^2 >= a_s^2
A(1, 2): a2 >= a_s
A(1, 3): a2^2 + a2*a3 + a3^2 >= a_s^2
A(1, 4): a2^2 + a3^2 + a2*(a3 - a4) + a3*a4 + a4^2 >= a_s^2
A(1, 5): a2^2 + a3^2 + a4^2 + a2*(a3 - a4 - 2*a5) + a3*(a4 - a5) + a4*a5 + a5^2 >= a_s^2
A(1, 6): a2^2 + a3^2 + a4^2 + a4*a5 + a5^2 + a3*(a4 - a5 - 2*a6) + a2*(a3 - a4 - 2*a5 - a6) - a4*a6 + a5*a6 + a6^2 >= a_s^2
A(2, 3): a3 >= a_s
A(2, 4): a3^2 + a3*a4 + a4^2 >= a_s^2
A(2, 5): a3^2 + a4^2 + a3*(a4 - a5) + a4*a5 + a5^2 >= a_s^2
A(2, 6): a3^2 + a4^2 + a5^2 + a3*(a4 - a5 - 2*a6) + a4*(a5 - a6) + a5*a6 + a6^2 >= a_s^2
A(3, 4): a4 >= a_s
A(3, 5): a4^2 + a4*a5 + a5^2 >= a_s^2
A(3, 6): a4^2 + a5^2 + a4*(a5 - a6) + a5*a6 + a6^2 >= a_s^2
A(4, 5): a5 >= a_s
A(4, 6): a5^2 + a5*a6 + a6^2 >= a_s^2
A(5, 6): a6 >= a_s

B(0, 1): False
B(0, 2): (1 - a2)/2 >= 0 && ((Sqrt[3]*(1 + a2))/2 <= a_s)
B(0, 3): (2 + a2 - a3)/2 >= 0 && ((Sqrt[3]*(a2 + a3))/2 <= a_s)
B(0, 4): (1 + 2*a2 + a3 - a4)/2 >= 0 && ((Sqrt[3]*(-1 + a3 + a4))/2 <= a_s)
B(0, 5): (-1 + a2 + 2*a3 + a4 - a5)/2 >= 0 && ((Sqrt[3]*(-1 - a2 + a4 + a5))/2 <= a_s)
B(1, 2): False
B(1, 3): (a2 - a3)/2 >= 0 && ((Sqrt[3]*(a2 + a3))/2 <= a_s)
B(1, 4): (2*a2 + a3 - a4)/2 >= 0 && ((Sqrt[3]*(a3 + a4))/2 <= a_s)
B(1, 5): (a2 + 2*a3 + a4 - a5)/2 >= 0 && ((Sqrt[3]*(-a2 + a4 + a5))/2 <= a_s)
B(2, 3): False
B(2, 4): (a3 - a4)/2 >= 0 && ((Sqrt[3]*(a3 + a4))/2 <= a_s)
B(2, 5): (2*a3 + a4 - a5)/2 >= 0 && ((Sqrt[3]*(a4 + a5))/2 <= a_s)
B(3, 4): False
B(3, 5): (a4 - a5)/2 >= 0 && ((Sqrt[3]*(a4 + a5))/2 <= a_s)
B(4, 5): False

\end{lstlisting}

\medskip

\textbf{The Required Box:}

Your condition must cover this region:

\{Bottleneck\}

\medskip

\textbf{PHASE 4: Mathematica Derivation}

Use the tool to solve the implication.

\textbf{Step 4.1: Attempt Derivation}

Run the following logic in Mathematica using the expressions copied from the list:

\begin{lstlisting}[breaklines=true,basicstyle=\ttfamily\small]
(* Define the logic: ForAll tail_vars, (Steiner_Constraints) implies (Trap_Goal) *)
Reduce[
  ForAll[{a5, a6}, 
    a5 > 0 && a6 > 0 && <Selected_A_Props>, 
    <Selected_B_Props>
  ] && a2 > 0 && a3 > 0 && a4 > 0, 
  {a2, a3, a4}
]
\end{lstlisting}

\textit{Tip:} If the result is non-linear, relax it to a linear inequality that still covers the Box.

\textbf{Step 4.2: Verification}

Verify your derived condition \texttt{<Cond>}:

\begin{lstlisting}[breaklines=true,basicstyle=\ttfamily\small]
Reduce[
  !ForAll[{a5, a6}, a5>0 && a6>0 && <Selected_A_Props>, <Selected_B_Props>] 
  && a2 > 0 && a3 > 0 && a4 > 0 
  && <Cond>, 
  {a2, a3, a4}
]
\end{lstlisting}

The output \textbf{MUST} be \texttt{False}.

\textbf{Important:} You MUST verify the implication in the \textbf{ENTIRE} domain, not in the Test Box.

\medskip

\textbf{PHASE 5: Upper Bound Calculation}

If the logic in Phase 4 holds, the point $A_n$ is trapped in the polygon $A_u\dots A_v H A_u$ defined by your selected $B$ property. You must now calculate the specific coordinates of these vertices to determine the maximum distance.

\begin{enumerate}
  \item \textbf{Identify and Fix Vertices ($A_u\dots A_v$):}
    \begin{itemize}
      \item \textbf{Case $v<5$:} The points $A_1,\dots,A_4$ are fully determined by the input parameters $a_2,a_3,a_4$. Use their standard coordinates.
      \item \textbf{Case $v=5$:} The point $A_5$ depends on the variable $a_5$. To fix this:
        \begin{itemize}
          \item In Phase 4, you should have ensured the condition implies a bound on $a_5$. Specifically, you may \textbf{ADD another goal} to your derivation: $a_5 \le a_2 + a_3$.
          \item For the Upper Bound calculation here, assume the ``worst case'' where $a_5$ reaches this limit: set $a_5 = a_2 + a_3$.
          \item Calculate the coordinates of $A_5$ using this substituted value.
        \end{itemize}
    \end{itemize}
  \item \textbf{Calculate Vertex H:}
    \begin{itemize}
      \item $H$ is the foot of the perpendicular from $A_u$ to the line defined by $A_v$ and direction $-60v^\circ$.
      \item Since $A_u$ and $A_v$ are now fixed (based on steps above), explicitly calculate the $(x,y)$ coordinates of $H$.
    \end{itemize}
  \item \textbf{Affine Constraint (CRITICAL):}
    \begin{itemize}
      \item You \textbf{MUST} ensure that the coordinates $(x,y)$ of \textbf{every} vertex in the trap polygon ($A_u,\dots,A_v,H$) are expressed as \textbf{Affine Functions} of $a_2,a_3,a_4$.
      \item \textit{Format:} $x = C_0 + C_1 a_2 + C_2 a_3 + C_3 a_4$.
      \item This ensures the upper bound function remains simple and robust.
    \end{itemize}
  \item \textbf{Compute Upper Bounds:}
    \begin{itemize}
      \item $MaxDist_A = \max( |A_0 P| )$ for all $P \in \{A_u,\dots,A_v,H\}$.
      \item $MaxDist_D = \max( |D P| )$ for all $P \in \{A_u,\dots,A_v,H\}$.
    \end{itemize}
\end{enumerate}

\medskip

\textbf{PHASE 6: Implementation \& Output Requirements}

\textbf{Detailed Derivation:} Show how you choose the $A$ / $B$ properties and how you derived the final condition.

\textbf{Upper Bound:} If $A_5$ is involved in the trapping polygon, \textbf{EXPLICITLY} state the upper bound of $a_5$.

\textbf{Mathematica Output:} Paste the code and the output confirming the \texttt{False} result for the implication checks.

\textbf{Constraints:}
\begin{enumerate}
  \item \textbf{Linearity:} The final condition on $a_2,a_3,a_4$ must be linear.
  \item \textbf{Generality:} Do not hardcode the box values. The condition must be algebraic.
  \item \textbf{Safety:} You DO NOT need to include inequalities like $a_2 > 0$ at 0.
\end{enumerate}

\textbf{Final Output:}
Provide a C++ code block containing these exact functions:

\begin{lstlisting}[breaklines=true,basicstyle=\ttfamily\small]
// Returns true if a2, a3, a4 satisfy the derived linear condition.
bool X_cond(double a2, double a3, double a4) {
  // Your linear inequality here
}

// Returns the upper bound of dist(A0, An) based on the trap polygon vertices.
double AX_upper_bound(double a2, double a3, double a4) {
  // Return max distance to polygon vertices
}

// Returns the upper bound of dist(D, An).
double DX_upper_bound(double a2, double a3, double a4, double d) {
  // Return max distance from D to polygon vertices
}
\end{lstlisting}

\textbf{If you fail to find a valid condition after rigorous attempts, return \texttt{false} and \texttt{0.0}.}

\textbf{CRITICAL WARNING:} Do NOT attempt to do anything that violates the above rules. Feel free if you indeed fail to find a valid condition.

\medskip

\textbf{--- END OF PROMPT ---}

\end{responsebox}
\newpage
\subsection{Prompt for Searching Valid 4-Point Steiner Tree Lemmas}
\begin{responsebox}{Prompt for Searching Valid 4-Point Steiner Tree Lemmas}

\textbf{PROMPT START}

\textbf{Role:} \textit{You are an expert Computational Geometer and Mathematician.}

\textbf{Objective:} Derive a \textbf{simplified, robust condition} $P(b, c, d, s, e)$ that guarantees the existence of a specific Steiner Tree.

\textbf{Constraint:} The condition you find must be ``Axis-Parallel Segment Closed'' (APSC).

\textbf{Tools:} You have access to a \texttt{Mathematica} execution environment and a \texttt{Python} execution environment.

\medskip

\textbf{PHASE 1: Geometric Definitions}

Construct the following points in the 2D plane based on these vector instructions.
\begin{itemize}
  \item \textbf{A}: $(0,0)$
  \item \textbf{X}: From A, move length $1$ at angle $0^\circ$.
  \item \textbf{Y}: From X, move length $s$ at angle $-60^\circ$.
  \item \textbf{B}: From X, move length $b$ at angle $60^\circ$.
  \item \textbf{D}: From Y, move length $d$ at angle $0^\circ$.
  \item \textbf{P}: From Y, move length $c$ at angle $-120^\circ$.
  \item \textbf{Q}: From P, move length $d$ at angle $-60^\circ$.
  \item \textbf{R}: From P, move length $e$ at angle $180^\circ$.
\end{itemize}

The domain of the variables is $b\ge 0\;\&\&\; c\ge 0\;\&\&\; d\ge 0\;\&\&\; s\ge 0\;\&\&\; e\ge 0$.

\textit{Note: Although Points X, Y, P are used for construction, your final task focuses on the Steiner Tree for points \textbf{\{A\}, \{B\}, \{C\}, \{D\}}.}

\medskip

\textbf{PHASE 2: The Existence Logic (Type \{A\}\{B\}-\{C\}\{D\})}

You are investigating the existence of an \textbf{(\{A\}\{B\})-(\{C\}\{D\})} type Steiner Tree.
For this specific type, define the auxiliary points $U$ and $V$ as follows:
\begin{enumerate}
  \item \textbf{Point U}: Rotate point \textbf{\{B\}} $60^\circ$ counter-clockwise around \textbf{\{A\}}.
  \item \textbf{Point V}: Rotate point \textbf{\{D\}} $60^\circ$ counter-clockwise around \textbf{\{C\}}.
\end{enumerate}

\textbf{The Exact Existence Conditions ($E_{exact}$) are:}
\begin{enumerate}
  \item \textbf{Convexity:} The quadrilateral formed by \{A, B, C, D\} must be convex.
  \item \textbf{Internal Angles:} The angles $\angle U C D$, $\angle U D C$, $\angle V A B$, and $\angle V B A$ must all be $\le 120^\circ$.
  \item \textbf{Crossing Angle:} Let $\theta$ be the angle between diagonal vectors $\vec{AC}$ and $\vec{BD}$. The condition is $\theta \le 120^\circ$.
    \begin{itemize}
      \item \textit{Relaxation Hint:} You may relax this third condition to $\theta \le 90^\circ$ (which is $\vec{AC}\cdot\vec{BD} \ge 0$) if it helps simplify the math.
    \end{itemize}
\end{enumerate}

The condition for ``convex quadrilateral'' can be formulated as follows: For any three adjacent points x, y, z (in clockwise order), the cross product of vectors xy and yz is less than or equal to zero.

The condition that the angle between vectors u and v is less than or equal to $120$ degrees can be expressed using the cosine rule or the dot product of vectors, namely: vectors u and v form an angle $\le 120^\circ$ iff
\[ 2\,\mathrm{dot}(u,v) + (|u|\,|v|) \ge 0. \]
You may relax it to $\mathrm{dot}(u,v) \ge 0$ if possible.

\medskip

\textbf{PHASE 3: Deriving the Relaxed Condition P}

The exact conditions above are too complex. You must find a \textbf{sufficient condition} $P(b,c,d,s,e)$ such that $P \implies E_{exact}$.

\textbf{Requirements for Condition P (APSC Property):}

\textbf{Definition:} A condition $P$ satisfies APSC if, for any line parallel to one of the coordinate axes, the set of points within the domain (variables $\ge 0$) that satisfy the condition forms a single connected interval.

\textbf{To satisfy APSC, $P$ must be the conjunction of inequalities from the following allowed categories only:}
\begin{enumerate}
  \item \textbf{Linear or Multilinear:} Inequalities that are linear in each variable separately.
  \item \textbf{Quadratic:} Inequalities in the form $Ax^2 + Bx + C \ge 0$ (or $\le 0$), subject to monotonicity rules explained in the prompt.
  \item \textbf{Functional Boundaries:} Inequalities in the form $e > f(b,c,s,d)$ or $e < f(b,c,s,d)$, where monotonicity of $f$ must be verified.
\end{enumerate}

\textbf{Coverage Requirement:}

Your condition $P$ must be general (not hardcoded numbers), but it must strictly cover the following ``Test Box'':
\{Bottleneck\}

\textbf{Important:} You MUST verify the APSC Property in the \textbf{ENTIRE} domain, not in the Test Box.

\medskip

\textbf{PHASE 4: Verification Protocol}

You must verify your work using Mathematica logic.

\textbf{Step A:} Write Mathematica code to compute the exact expressions for the 6 existence inequalities (Convexity + Angles).

\textbf{Step B:} Define your relaxed condition \texttt{condP}.

\textbf{Step C:} Verify Implication. You must run a command equivalent to this logic (check each inequality separately):

\begin{lstlisting}[breaklines=true,basicstyle=\ttfamily\small]
Reduce[ !<ExactCondition_i> && condP && b>=0 && c>=0 && d>=0 && s>=0 && e>=0, {vars} ]
\end{lstlisting}

If this returns \texttt{False}, your condition is valid. If it returns \texttt{True} or an expression, your condition is invalid.

\textit{Tips:} Reformulate $2\,\mathrm{dot}(u,v) + (|u|\,|v|) \ge 0$ into $\mathrm{dot}(u,v) \ge 0 \;||\; 4\,\mathrm{dot}(u,v)^2 \le |u|^2 |v|^2$ for faster speed when verifying.

\textit{Constraint:} Do not put all checks in one \texttt{Reduce} call. It will time out. Check each inequality individually against $P$.

\textbf{Important:} You MUST verify the implication in the \textbf{ENTIRE} domain, not in the Test Box.

\medskip

\textbf{PHASE 5: Output Requirements}

\begin{enumerate}
  \item \textbf{Detailed Derivation:} Show how you calculated coordinates and how you derived the relaxed inequalities.
  \item \textbf{APSC Proof:} Provide a logical argument explaining why your condition $P$ satisfies the APSC property, specifically addressing the Linearity/Monotonicity requirements for each part of your condition.
  \item \textbf{Mathematica Output:} Paste the code and the output confirming the \texttt{False} result for the implication checks.
  \item \textbf{C++ Implementation:}
\end{enumerate}

Provide a C++ code block containing these exact functions:

\begin{lstlisting}[breaklines=true,basicstyle=\ttfamily\small]
// Returns true if your condition $P$ is met.
bool steiner_cond(double b, double c, double d, double s, double e) {
    // Your condition here
}

// Returns the total length of the ({A}{B}-{C}{D}) type Steiner tree, which is exactly |UV|, the distance between U and V.
double steiner_length(double b, double c, double d, double s, double e) {
    // Return the distance between U and V
}
\end{lstlisting}

\textbf{Important:} Do NOT use the Test Box values inside the C++ code. The code must be a general algebraic formula.

\textbf{If you fail to find a valid condition after rigorous attempts, return \texttt{false} and \texttt{0.0}.}

\textbf{--- END OF PROMPT ---}

\end{responsebox}

\newpage
\section{A Simplified Version for Mathematicians}
\label{app:math_version}

If you are interested solely in the mathematical results of this work and are not concerned with the content related to LLMs, you may proceed directly to this section. We will present the complete logical structure of the mathematical proofs in the most concise manner possible.

\subsection{Preliminaries}

\subsubsection{Definitions and Notations}

We first introduce the necessary definitions and notations.

\textbf{Definition~\ref{def_mst}} (Minimum Spanning Tree).
\textit{Consider a set $V$ of $n$ points in the Euclidean plane $\mathbb{R}^2$. A spanning tree on $V$ is a connected, acyclic graph with vertex set $V$. When the length of each edge is defined as the Euclidean distance between its endpoints, a spanning tree that minimizes the total length is called a \textbf{Minimum Spanning Tree}.}

\textbf{Definition~\ref{def_smt}} (Steiner Minimal Tree).
\textit{Consider a set $V$ of $n$ points in the Euclidean plane. The shortest network interconnecting all points in $V$, where the length of each edge is measured by Euclidean distance, is necessarily a tree, referred to as a \textbf{Steiner Minimal Tree}. A Steiner Minimal Tree may contain auxiliary vertices not in $V$. These additional vertices are called \textbf{Steiner points}, while the points in $V$ are referred to as \textbf{regular points}.}

Then we formally define the Steiner ratio and the Steiner ratio conjecture.

\textbf{Definition~\ref{def_steiner_ratio}} (Steiner Ratio).
\textit{The Steiner ratio is defined as the infimum of the ratio of the length of the Steiner Minimal Tree to the length of the Euclidean Minimum Spanning Tree over all finite sets of points $V$:
\[
\textstyle\rho_{\text{Steiner}} = \inf_{V}L_S(V)/L_m(V),
\]
where $L_S(V)$ and $L_m(V)$ denote the lengths of Steiner Minimal Tree and Minimum Spanning Tree, respectively.}

\textbf{Conjecture (Gilbert-Pollak).} 
\citet{gilbert1968steiner} famously conjectured that this ratio is lower-bounded by: $\rho_{\text{Steiner}} = \sqrt{3}/2\approx 0.866.$

\subsubsection{Basic Properties}

Let $V$ be a finite set of points in the Euclidean plane. We denote the lengths of the Steiner Minimal Tree (SMT) and the Minimum Spanning Tree (MST) on $V$ as $L_S(V)$ and $L_m(V)$, respectively. For an arbitrary geometric graph $G$, let $L_G$ denote the sum of the Euclidean lengths of its edges. 

We first recall some fundamental geometric properties of SMTs.

\textbf{Lemma~\ref{lm:basic_property}} (\citet{gilbert1968steiner}, Basic Properties of SMTs).
\textit{Let $S$ be a Steiner Minimal Tree for a set $V$. The following properties hold:
\begin{enumerate}[topsep=0pt,leftmargin=10pt]\setlength{\itemsep}{0pt}
    \item Every Steiner point in $S$ has a degree of exactly $3$, and the three incident edges meet at $120^\circ$ angles.
    \item If $|V| = n$, then $S$ contains at most $n - 2$ Steiner points. A tree with exactly $n - 2$ Steiner points is called a \textbf{Full Steiner Tree} (FST).
    \item Any SMT can be decomposed into a union of edge-disjoint Full Steiner Trees. Consequently, it suffices to consider only Full Steiner Trees when proving lower bounds on the Steiner ratio.
\end{enumerate}
}

\subsubsection{Induction-Based Method}

In this section, we present the induction-based method for proving the lower bound on the Steiner ratio. We recommend readers to read Section~\ref{sec:framework} for a informmal but more intuitive explanation of the method.

Based on Lemma~\ref{lm:basic_property}, we restrict our analysis to Full Steiner Trees without loss of generality. A defining topological feature of a Full Steiner Tree is that all regular points (terminals) are leaves with a degree of $1$. We use \textbf{induction} to establish the lower bound. The base case where $|V|\le 4$ is shown by \citet{pollak1978some}. Thus, we may assume $|V|\ge 5$.

Let $V$ be the set of regular points, and let $S$ be $\mathrm{SMT}(V)$. To complete the induction step, we focus on a terminal substructure of $S$. Consider a leaf $A$ of maximum depth (choosing an arbitrary point as root) in $S$. Let $X$ be the parent of $A$. Since $A$ is at maximum depth, the sibling of $A$ must also be a terminal, denoted as $B$. Let $Y$ be the parent of $X$. The sibling component $D$ incident to $Y$ cannot have a height exceeding that of $X$; consequently, $D$ is necessarily either a single terminal or a Steiner point connected to two terminals.

This results in the two cases illustrated in Figure~\ref{fig:subtree}: in the left case, $D$ is a single regular point; in the right case, $D$ is a Steiner point connecting two regular points, $U$ and $V$. The notations $(R)$ and $(Q)$ denote the residual subtrees connected to the main structure at nodes $R$ and $Q$. The labels on the edges denote their Euclidean lengths. Because the Steiner ratio is scale-invariant, we normalize the scale of the tree such that the edge $AX$ has length $1$ ($|AX|=1$).

\begin{figure}[h] 
    \centering
    \includegraphics[width=0.6\linewidth]{fig/subtree.pdf}

    Figure~{\ref{fig:subtree}}. Local substructures of a Steiner Minimal Tree.

\end{figure}

\textbf{Definition~\ref{def:parameter_space}} (Parameter Space).
\textit{Given a local structure in Figure~\ref{fig:subtree}, let $n$ be the number of edges excluding the normalized edge $AX$. The \textbf{parameter space} $\mathcal W$ is defined as $[0, +\infty)^n$, where the $i$-th entry represents the length of the $i$-th edge. (Therefore $\vw \in \mathcal W$ is something like $(b, c, d, s, \cdots)$.)}

Since we only consider full Steiner trees, for any $\vw\in\mathcal{W}$, the positions of all points in the local structure are uniquely determined and depend affinely on $\vw$.

Next, we formalize the notion of a \emph{splitting}, which was introduced informally in Section~\ref{sec:framework}. We then state the reduction argument of \citet{du1983new} in terms of this notion.

\textbf{Definition~\ref{def:splitting}} (Splitting).
\textit{Given a Steiner tree $S$ for a set of points $V$, a tuple $\tau=(V^*,S^-, S^+, t^*)$ is called a \textbf{splitting} if: \begin{enumerate}[topsep=0pt]
    \setlength{\itemsep}{0pt}
        \item $V^*\subsetneq V$ is a non-empty subset of $V$.
        \item $S^-$ is a subset of the edges of $S$. Let $S_{rem}=S\setminus S^-$. No point in $V^*$ is incident to $S_{rem}$.
        \item $S^+$ is a graph (may contain auxiliary vertices not in $S$) such that $S^+\cup S_{rem}$ forms a valid Steiner tree for $V\setminus V^*$.
        \item $t^*$ is a forest interconnecting $V^*$ and $V\setminus V^*$ using only vertices in $V$. It must satisfy: (a) For every $x \in V^*$, there exists a path in $t^*$ connecting $x$ to some $y \in V \setminus V^*$. (b) There are no paths in $t^*$ connecting any two distinct points $y_1, y_2 \in V \setminus V^*$.
    \end{enumerate}
}

\textbf{Lemma~\ref{lm:split_function}} (\citet{du1983new} Lemma 3).
\textit{Let $S = \mathrm{SMT}(V)$ be the Steiner Minimal Tree for a set $V$ of $n$ points. Suppose that for all point sets with fewer than $n$ points, the Steiner ratio is at least $\rho'$. If there exists a splitting $\tau = (V^*, S^-, S^+, t^*)$ satisfying $(L_{S^-} - L_{S^+})/L_{t^*} \ge \rho'$, then the Steiner ratio for $V$ also satisfies $L_S(V) / L_m(V) \ge \rho'$.}

Based on Lemma~\ref{lm:split_function}, we reformulate the verification of a candidate lower bound $\rho$ for the Steiner ratio as a minimax condition.

\textbf{Corollary~\ref{cor:equivalent_formulation}} (Equivalent Formulation of the Lower Bound).
\textit{For any splitting $\tau = (V^*, S^-, S^+, t^*)$ and geometric configuration $\vw \in \mathcal{W}$, we define the \textbf{splitting function} $F_{\tau}(\vw, \rho)$ as:
\vspace{-5pt}
\[
    F_{\tau}(\vw, \rho) = \rho \cdot L_{t^*} + L_{S^+} - L_{S^-}.
    \vspace{-5pt}
\]
Let $\mathcal{F}=\{F_\tau:\tau \text{ is a splitting}\}$ denote the collection of all splitting functions. The lower bound $\rho_{\text{Steiner}} \geq \rho$ is established if:
\vspace{-5pt}
\begin{align}
    \label{eq:splitting_function_feasibility_mathver}
    \textstyle \max_{\vw \in \mathcal{W}} \min_{F \in \mathcal{F}} F(\vw, \rho) \le 0.
\end{align}
}

We say that a value $\rho$ is \textit{feasible} if the condition \eqref{eq:splitting_function_feasibility_mathver} holds.

Intuitively, equation \ref{eq:splitting_function_feasibility_mathver} states that for \emph{any} possible geometric configuration of the tree ($\max_{\vw}$), there must exist \emph{at least one} valid strategy for splitting the tree ($\min_{F}$) such that the splitting function is non-positive.

\subsection{Verification Function and Algorithm}

\subsubsection{Verification Function}

To verify the feasibility of a candidate lower bound $\rho$, we need to check the condition in Corollary~\ref{cor:equivalent_formulation}. However, directly verifying this condition is challenging due to the high dimensionality of the parameter space $\mathcal{W}$ and the complexity of the splitting functions. To address this, we introduce a \textit{verification function} that serves as an upper bound on the maximum of the minimum of the splitting functions.

\textbf{Definition~\ref{def:verification_function}} (Verification Function).
\textit{A function $f: \mathcal{W} \to \mathbb{R} \cup \{+\infty\}$ is called a \textbf{verification function} if it satisfies the following two conditions:
    \begin{enumerate}[topsep=0pt,leftmargin=10pt]\setlength{\itemsep}{0pt}
        \item \textbf{Shape Constraint:} The restriction of $f$ to any line parallel to a coordinate axis is unimodal (first non-increasing and then non-decreasing).
        \item \textbf{Bounding Constraint:} There exists a splitting function (defined in Theorem~\ref{thm:equivalent_formulation}) $g \in \mathcal{F}$ such that $g(\vw) \leq f(\vw)$, $\forall \vw \in \mathcal{W}$.
    \end{enumerate}
    We denote the set of all verification functions by $\mathcal{F}_{\text{ver}}$.}

\textit{Remark:} The central challenge in verifying $f(\vw) \le 0$ throughout each hyperrectangle is that exhaustive point-wise checking is infeasible in a continuous domain. The following theorem addresses this challenge by establishing that the maximum value of our verification function is attained at the vertices. This property effectively reduces the verification problem to a finite set of vertex checks. We now state the theorem as follows:

\textbf{Theorem~\ref{thm:vertex_max}} (Vertex Maximum).
\textit{Let $H \subset \mathcal{W}$ be a bounded axis-aligned hyperrectangle and let $f$ be any function in $\mathcal{F}_{\text{ver}}$. A key property is that the global maximum of $f$ over $H$ is attained at one of its vertices. Consequently, if}
\vspace{-5pt}
\[
\textstyle\max_{\vw \in \mathcal{V}(H)} f(\vw) \le 0,
\]
\par\vspace{-5pt}
\textit{then $f(\vw) \le 0$ for all $\vw \in H$, which implies that the underlying splitting function $g(\vw) \le 0$. Note that $\mathcal{V}(H)$ is the set of vertices of $H$.}

We'll discuss how to construct verification functions later. For now, we first present the algorithm for verifying the lower bound using verification functions.

\subsubsection{Algorithm for Verifying the Lower Bound}

To make verification tractable, we employ a Branch-and-Bound strategy that recursively partitions the continuous parameter space $\mathcal{W}$ into smaller hyperrectangles, as illustrated in Figure~\ref{fig:partition} for the 2D case. For each hyperrectangle, our objective is to identify a single function $f\in\mathcal{F}$ that ensures $f(\vw) \le 0$ globally within the subdomain.

\begin{figure}[ht]
    \centering
    \includegraphics[scale=0.3]{fig/partition.png} 

    Figure~\ref{fig:partition}. Branch-and-Bound on 2D Plane.
\end{figure}

We present the core algorithm below, deferring the full pseudocode and detailed explanations to Appendix~\ref{sec:appendix_reward_model_alg_detail}.
To determine the maximal valid lower bound, we employs a binary search, using a feasibility oracle grounded in Theorem~\ref{thm:vertex_max}. This oracle recursively partitions the continuous parameter space $\mathcal{W}$ into hyperrectangles and certifies each region via the following steps:

(1) \emph{Region Partitioning.} We recursively partition the infinite parameter space into axis-aligned hyperrectangle regions. 

(2) \emph{Vertex Verification.} For bounded regions, we select a verification function $f$ and invoke the Vertex Maximum Property. By verifying $f(\vw)\le 0$ at all vertices, we mathematically guarantee that the inequality holds for every point $\vw$ within the region. 

(3) \emph{Monotonicity Check for Unbounded Regions.} For unbounded regions, we first verify that $f$ is non-increasing along those unbounded dimensions, ensuring the maximum value occurs at the finite lower boundary (a ``slice'' of the region). We then apply vertex verification to this bounded slice. Details of the monotonicity check are provided in Appendix~\ref{sec:appendix_mono_check_detail}.

(4) \emph{Refinement via Subdivision.} Unverified regions are subdivided along one dimension into two halves, and recursively attempt verification.

The procedure terminates when either all regions are certified within the region number threshold, in which case the candidate $\rho$ is declared \emph{feasible}, or when the budget is exhausted with remaining uncertified regions, in which case $\rho$ is reported as \emph{infeasible}.

\subsection{Construction of Verification Functions}

In this section, we detail the construction of verification functions. We first introduce how to pre-prepare the list of splitting functions, and how to derive some `trivial' verification functions directly from the splitting functions. Then we present the two types of geometric lemmas that form the basis for deriving more complex verification functions. Finally, we enumerate the specific geometric lemmas used in our verification process.

\subsubsection{Pre-Preparation of Splitting Functions}

To generate a rich candidate set of splitting functions, we perform an algorithmic enumeration of topologically valid splits. Details of this procedure are provided in Appendix~\ref{sec:enumeration}.

we provide the proof in Appendix~\ref{sec:appendix_convex} that for any split where the component $S^+$ connects a set of terminals of size at most 3 (i.e., $|S^+| \le 3$), the resulting splitting function $F(\vw, \rho)$ is a convex function of the geometric parameters $\vw$.

Recall the definition of the splitting function:
\[
    F(\vw, \rho) = \rho \cdot L_{t^*}(\vw) + L_{S^+}(\vw) - L_{S^-}(\vw).
\]
We analyze the convexity of each term individually. We demonstrate that the splitting function $F(\vw, \rho)$ is the sum of a convex term ($\rho L_{t^*}$), a convex term ($L_{S^+}$), and an affine term ($-L_{S^-}$). Since the sum of convex functions is convex, $F(\vw, \rho)$ is convex with respect to $\vw$. As these functions inherently satisfy the constraints required by Theorem~\ref{thm:vertex_max}, they serve as the initial members of our verification set $\mathcal{F}_{\text{ver}}$. 

\subsubsection{2-Type of Geometric Lemmas}

We demonstrate that finding more complex verification function reduces to identifying two types of simpler geometric lemmas: \textbf{(1) the Trapped Regular Point Lemma} and \textbf{(2) the Valid 4-Point Steiner Tree Lemma}. 

\textbf{Trapped Regular Point Lemma.} Consider traversing a path on the Steiner tree. 
At each intermediate Steiner point, we consistently select the branch corresponding to a $60^\circ$ clockwise turn (``turning right''). This traversal traces a maximal chain of vertices $A_0, A_1, \ldots, A_n$, where $A_n$ is a regular point and $A_1, \dots, A_{n-1}$ are Steiner points. 

\begin{figure}[ht]
    \centering
    \includegraphics[scale=0.6]{fig/polygon.pdf} 
    \vspace{-10pt}

    Figure~\ref{fig:A_B_condition}. The Steiner Spiral Chain structure
    
\end{figure}

Due to the geometric constraints established in Lemma~\ref{lm:basic_property}, every internal angle $\angle A_{i-1} A_i A_{i+1}$ along this chain is exactly $120^\circ$. We abstract this configuration as the \textbf{Steiner Spiral Chain}, as illustrated in Figure~\ref{fig:A_B_condition}. For analysis, we denote the edge lengths as $a_i = |A_{i-1} A_{i}|$ for $i = 1, \ldots, n$, and normalize the system such that $a_1 = 1$. 

Crucially, this chain is non-intersecting and constitutes the Steiner Minimal Tree for the set of vertices $\{A_0, \ldots, A_n\}$. Our goal is to derive conditions under which $A_n$ is trapped inside a bounded polygon.

\textbf{Valid 4-Point Steiner Tree Lemma.} Consider four points $A$, $B$, $C$, and $D$ forming a convex quadrilateral in clockwise order. We focus on the $(AB)$-$(CD)$ topology, defined as follows:

\textbf{Definition~\ref{def:steiner_tree_type}} (Steiner Tree Type).
  \textit{The $(AB)$-$(CD)$ type Steiner tree consists of two Steiner points $S$ and $T$, and five edges: $(A,S), (B,S), (S,T), (T,C), (T,D)$. The edges incident to each Steiner point meet at $120^\circ$ angles, as illustrated in Figure~\ref{fig:steiner_type}.}
  

\begin{figure}[h]
    \centering
    \includegraphics[scale=0.4]{fig/4point.pdf} 

    Figure~\ref{fig:steiner_type}. $(AB)$-$(CD)$ type Steiner tree.
    
\end{figure}

We adopt the Valid 4-Point Steiner Tree Theorem from \citet{du1987steiner} to determine the existence of the $(AB)$-$(CD)$ topology. Let $U$ be the vertex of an equilateral triangle $\triangle ABU$ constructed such that $A, B, U$ appear in counterclockwise order.
Similarly, let $V$ be the vertex of an equilateral triangle $\triangle CDV$ constructed such that $C, D, V$ appear in counterclockwise order.
The existence of the $(AB)$-$(CD)$ topology is governed by the following conditions:

\textbf{Theorem~\ref{thm:4point_validity}} (Validity of $(AB)$-$(CD)$ Type Steiner Tree).
  \textit{We say an $(AB)$-$(CD)$ type Steiner tree is valid if:
  \begin{enumerate}[topsep=0pt,leftmargin=10pt]\setlength{\itemsep}{0pt}
    \item $A, B, C$ and $D$ form a convex quadrilateral.
    \item $\max(\angle UCD, \angle UDC, \angle VAB, \angle VBA) \le 120^\circ.$
    \item The intersection angle of the diagonals, $\angle AOB$, is at most $120^\circ$ (where $O$ is the intersection of $AC$ and $BD$).
  \end{enumerate}
  }

When these conditions are met, the length of the tree follows a concise formula:


\textbf{Lemma~\ref{lem: length_four_pt_ST}} (Length of $(AB)$-$(CD)$ Type Steiner Tree).
\textit{The length of the $(AB)$-$(CD)$ type Steiner tree is precisely given by $|UV|$, where $U$ denote the position of point $B$ after segment $AB$ is rotated counterclockwise $60^\circ$ about point $A$ and $V$ denote the position of point $D$ after segment $CD$ is rotated counterclockwise $60^\circ$ about point $C$.}

Finally, we formally state the translation process from these geometric lemmas to the verification functions.

\textbf{Theorem~\ref{thm:lemma_verification_function}} (From Geometric Lemmas to Verification Functions).
\textit{A series of $F \in \mathcal{F}_{\text{ver}}$ can be derived by establishing one of the following two types of lemmas:}
\begin{enumerate}[topsep=0pt,leftmargin=10pt]\setlength{\itemsep}{3pt}
  \item \textit{\textbf{Trapped Regular Point Lemma:} A proposition asserting that when the parameters $\vw$ satisfy certain \textbf{linear} constraints, an implicit regular point is guaranteed to lie within a specific bounded polygonal region.}

  \textit{Formally, let $v_0$ denote the trapped regular point in the lemma. Then for each splitting $\tau = (V^*, S^-, S^+, t^*)$, we can derive a verification function $\tilde{F}_\tau$ from the splitting function $F_\tau$, as long as $t^*$ contains $v_0$ as an endpoint of an edge.}
  
  \item \textit{\textbf{Valid 4-Point Steiner Tree Lemma:} A proposition asserting that when $\vw$ satisfies certain \textbf{orthogonally convex} conditions, a specific 4-point Steiner tree structure is valid.}

  \textit{Formally, let $X_1, X_2, X_3, X_4$ denote the 4 points  in the lemma. Then for each splitting $\tau = (V^*, S^-, S^+, t^*)$, we can derive a verification function $\tilde{F}_\tau$ from the splitting function $F_\tau$, as long as $S^+ = \text{SMT}(X_1, X_2, X_3, X_4)$.}
\end{enumerate}

\subsubsection{Logical Criteria for the Geometric Lemmas}
\label{sec:logical_criteria_geometric_lemmas}

The preceding theorem reduces the construction of new verification functions to
the proof of two kinds of geometric lemmas. We now record the logical criteria
used to obtain such lemmas. The statements below are purely geometric: they only
involve implications among constraints on the local parameters.

\paragraph{A criterion for trapped regular points.}
We use the notation of the Steiner spiral chain
\[
    A_0,A_1,\ldots,A_n,
\]
where $A_0,A_n$ are regular points, $A_1,\ldots,A_{n-1}$ are Steiner points, and
$a_i=|A_{i-1}A_i|$ with $a_1=1$.

\textbf{Lemma ~\ref{lem:type_a}} (Type A: Intrinsic Properties)
\textit{For any $0 \le u < s \le v \le n$, the inequality
\[
    |A_uA_v|\ge a_s
\]
is always satisfied. We denote this condition by $C^A_{u,v,s}$.}

\textbf{Lemma ~\ref{lem:type_b}} (Type B: Sufficient Trapping Conditions])
\textit{For $0 \le u < s \le v < \min(6,n)$, let $H$ be the orthogonal projection of
$A_u$ onto the line $A_vA_{v+1}$. Define $C^B_{u,v,s}$ to be the conjunction of
the following two conditions:
\[
    H\in \overrightarrow{A_vA_{v+1}},
    \qquad
    |A_uH|<a_s .
\]
If $C^B_{u,v,s}$ holds, then $A_n$ is trapped inside the polygonal region
\[
    A_uA_{u+1}\cdots A_vHA_u .
\]}

Let $\mathcal C^A$ and $\mathcal C^B$ denote the collections of all Type-A and
Type-B predicates, respectively.

\textbf{Theorem ~\ref{thm:finding_regular_point}} (Trapping criterion)
\textit{Let $\{C^A_i\}_{i=1}^k\subseteq \mathcal C^A$ and
$\{C^B_j\}_{j=1}^\ell\subseteq \mathcal C^B$ be finite subfamilies. Suppose that
a linear constraint $C$ on the local parameter vector $\vw$ satisfies
\[
    C(\vw)\wedge \left(\bigwedge_{i=1}^k C^A_i\right)
    \Longrightarrow
    \bigvee_{j=1}^{\ell} C^B_j .
\]
Then, if $\vw$ satisfies $C$, the regular point $A_n$ is trapped inside the corresponding polygonal region.}

Thus, to prove a Trapped Regular Point Lemma, it suffices to exhibit a linear
condition $C$ and verify the implication in Theorem~\ref{thm:finding_regular_point}.
If $P_1,\ldots,P_\ell$ are the corresponding trapping polygons, then for any
fixed point $x$ appearing in the comparison forest $t^*$ one obtains the
distance bound
\[
    |xA_n|
    \le
    \max_{1\le j\le \ell}\ \max_{p\in \mathcal V(P_j)} |xp|,
\]
where $\mathcal V(P_j)$ denotes the vertex set of $P_j$. This bound is then used
as an explicit upper bound for the relevant edge length in $L_{t^*}$.

\paragraph{A criterion for valid four-point Steiner trees.}
The second class of lemmas concerns the validity of a prescribed four-point
Steiner topology. Let
\[
    T=(X_1X_2)-(X_3X_4)
\]
be such a topology, where the points $X_1,X_2,X_3,X_4$ are determined by the
local parameters $\vw$. Let $\mathcal Q_T(\vw)$ denote the conjunction of the
three validity conditions in Theorem~\ref{thm:4point_validity}, applied to the
ordered quadruple $(X_1,X_2,X_3,X_4)$.

Thus, to prove a Valid 4-Point Steiner Tree Lemma, it is enough to find an
orthogonally convex condition $C(\vw)$ such that
\[
    C(\vw)\Longrightarrow \mathcal Q_T(\vw).
\]
Indeed, under this implication, Theorem~\ref{thm:4point_validity} guarantees
that the topology $T$ is valid throughout the region defined by $C$. Moreover,
by Lemma~\ref{lem: length_four_pt_ST}, if $\widehat U_T$ and $\widehat V_T$ are
the two equilateral vertices associated with $X_1X_2$ and $X_3X_4$, respectively,
then the length of this four-point Steiner tree is $|\widehat U_T\widehat V_T|$.

Consequently, on the region $C$, the term $L_{S^+}$ in the corresponding
splitting function can be evaluated by this explicit formula. By
Theorem~\ref{thm:lemma_verification_function}, such a lemma yields new
verification functions whenever the reconstruction part $S^+$ of the splitting
is this four-point topology.

The concrete lemmas listed below are instances of these two criteria: the
trapped-point lemmas arise from implications of the form
\(
    C\wedge A_{\mathrm{sel}}\Longrightarrow B_{\mathrm{sel}},
\),
whereas the four-point lemmas arise from implications of the form
\(
    C\Longrightarrow \mathcal Q_T .
\)



\subsubsection{List of Geometric Lemmas}

In this section, we enumerate the geometric lemmas (generated by the LLM agent and human verified). 

\begin{figure}[H]
    \centering
    \includegraphics[width=0.6\linewidth]{fig/subtree.pdf}

    Figure~\ref{fig:subtree}. Two cases of the local substructure.
\end{figure}

Edge lengths follow the notation in Figure~\ref{fig:subtree}. For example, $|AY| = \sqrt{1 + s + s^2}$ and $|AZ| = \sqrt{1 + s + s^2 - c + cs + c^2}$.

\textbf{Lemma~\ref{lem:step0A}} (Step 1, Side $A$).
  \textit{If $c \le 1$, then there exists a regular point $E$ such that $|AE| \le \max\{|AY|, |AZ|\}$.}

\textbf{Lemma~\ref{lem:step0B}} (Step 1, Side $D$).
  \textit{If $f \le d$, then there exists a regular point $F$ such that \begin{itemize}[topsep=0pt]
      \setlength{\itemsep}{0pt}
      \item $|DF| \le \max\{|DZ|, |DQ|\}$, when $D$ is a regular point.
      \item $|VF|\le \max\{|VY|, |VZ|, |VQ|\}$, when $D$ is a Steiner point.
  \end{itemize} }

\textbf{Lemma~\ref{lem:step1}} (Step 2).
  \textit{If $e < 1 - c + 2/\sqrt{3} \cdot s$ and $c + e > 1$, then there exists a regular point $E$ such that $|AE| \le \max\{1, |AY|, |AZ|, |AR|, \sqrt{3}/2 \cdot (c + e - 1)\}$.}

\textbf{Lemma~\ref{lem:step2}} (Step 3).
  \textit{If $e < 1 + (2/\sqrt{3} - 1) \cdot c$ and $c + e > 1$, then there exists a regular point $E$ such that $|AE| \le \max\{1, |AY|, |AZ|, |AR|, \sqrt{3}/2 \cdot (c + e - 1)\}$.}

\textbf{Lemma~\ref{lem:step3}} (Step 4).
  \textit{The region defined by the system of inequalities $\mathcal{S}$ below is an \textit{orthogonally convex set}. Furthermore, for any $\vw$ satisfying $f=d$ in this region, the $(RA)$-$(DQ)$ type Steiner tree is valid.}
  \[
  \mathcal{S}: \begin{cases}
  I_1: & e+c \ge 1\\
  I_2: & e \le s+c+d+2\\
  I_3: & e \ge 2-c-d+s\\
  I_4: & e \le 3s\\
  I_5: & (2+c^2+3d+2s+3cs+3ds+2s^2)-(2+c+3d+s)e \ge 0\\
  I_6: & 2(\vec{RD}\cdot \vec{AQ}) + |\vec{RD}||\vec{AQ}| \ge 0\\
  I_7: & 2(\vec{RV}\cdot \vec{RA}) + |\vec{RV}||\vec{RA}| \ge 0\\
  I_8: & 2(\vec{AV}\cdot \vec{AR}) + |\vec{AV}||\vec{AR}| \ge 0
  \end{cases}
  \]

\textbf{Lemma~\ref{lem:step4}} (Step 5).
  \textit{If $c \ge 1$ and $e \ge 1$ and $f=d$, then $(AD)$-$(QR)$ type Steiner tree is valid.}

\textbf{Lemma~\ref{lem:step5}} (Step 6).
  \textit{If $e < (2 / \sqrt{3} - 1) \cdot c$, then there exists a regular point $E$ in the polygon $YZRH$, where $H$ is the orthogonal projection of $Y$ onto the ray starting at $R$ with inclination $+120^\circ$. Therefore, $|AE| \le \max\{|AY|, |AZ|, |AR|, |AH|\}$, and \begin{itemize}[topsep=0pt]
      \setlength{\itemsep}{0pt}
      \item $|DE| \le \max\{|DY|, |DZ|, |DR|, |DH|\}$, when $D$ is a regular point.
      \item $|UE|\le \max\{|UY|, |UZ|, |UR|, |UH|\}$ and $|VE|\le \max\{|VY|, |VZ|, |VR|, |VH|\}$, when $D$ is a Steiner point.
  \end{itemize}}

\textbf{Lemma~\ref{lem:step6}} (Step 7).
  \textit{If $d \ge b$ and $f=d$, then $(BD)$-$(QR)$ type Steiner tree is valid.}

\textbf{Lemma~\ref{lem:step7}} (Step 8).
  \textit{The region defined by the system of inequalities $\mathcal{S}$ below is an \textit{orthogonally convex set}. Furthermore, for any $\vw$ satisfying $f=d$ in this region, the $(AD)$-$(QR)$ type Steiner tree is valid.}
  \[
  \mathcal{S}: \begin{cases}
  I_1: & d(c+e-1)+e(s+c)\ge 0\\
  I_2: & e\ge\frac{1}{2}\left(\sqrt{(c+s+2d)^2+4d}-(c+s+2d)\right)\\
  I_3: & -(1+s/2-c/2+d/2)(c/2+e+d)+\frac{3}{4} c(s+c+d)\ge 0\\
  I_4: & s+e\ge 1\\
  I_5: & c\le 1+d+s/2+e/2
  \end{cases}
  \]

\textbf{Lemma~\ref{lem:step8}} (Step 9).
  \textit{If $e < s + 1$, then there exists a regular point $E$ such that $|AE| \le \max\{|AY|, |AZ|, |AR|, e + c - 1\}$.}

Proofs of these lemmas are provided in Appendix~\ref{app:full_proof}.

\subsection{Iterative Construction of the Verification Set}
\label{sec:iterative_construction_verification_set}

We now summarize how the final family of verification functions is constructed.
This subsection is not needed for the logical validity of any individual
geometric lemma; rather, it explains how the lemmas above are assembled into a
certificate for the global lower bound.

For a finite family $\mathcal F_{\mathrm{ver}}$ of verification functions and a
candidate value $\rho$, write
\[
    \Phi(\mathcal F_{\mathrm{ver}},\rho)
    :=
    \max_{\vw\in\mathcal W}
    \min_{f\in\mathcal F_{\mathrm{ver}}} f(\vw,\rho).
\]
By Corollary~\ref{cor:equivalent_formulation} and the definition of verification
functions, the inequality
\[
    \Phi(\mathcal F_{\mathrm{ver}},\rho)\le 0
\]
is a sufficient certificate for the lower bound
$\rho_{\mathrm{Steiner}}\ge \rho$.

The construction starts from an initial family
$\mathcal F_{\mathrm{ver}}^{(0)}$, obtained from the basic splitting functions
and the baseline trapped-point estimates. Suppose that after the $k$-th stage we
have constructed a finite family $\mathcal F_{\mathrm{ver}}^{(k)}$. The
feasibility oracle described in Theorem~\ref{thm:vertex_max} and
Algorithm~\ref{alg:region_partitioning} is then applied to candidate values of
$\rho$. Equivalently, it attempts to certify the statement
\[
    \forall \vw\in\mathcal W,\qquad
    \exists f\in\mathcal F_{\mathrm{ver}}^{(k)}
    \ \text{such that}\ 
    f(\vw,\rho)\le 0.
\]
If the statement is certified for a value $\rho$, then $\rho$ is a valid lower
bound with respect to the current family. One may then test a larger value of
$\rho$.

If the oracle fails to certify a larger trial value, it produces a finite
collection of hyperrectangles
\[
    H_1,\ldots,H_m\subset\mathcal W
\]
which are not yet covered by the current family. We replace these uncovered
regions by their smallest axis-aligned enclosing hyperrectangle
\[
    \Omega_k
    =
    \prod_{r}
    \left[
        \min_{1\le i\le m}\inf_{\vw\in H_i} w_r,\,
        \max_{1\le i\le m}\sup_{\vw\in H_i} w_r
    \right],
\]
and regard $\Omega_k$ as the current obstruction to improving the bound.

The next step is to prove additional geometric lemmas whose hypotheses cover
this obstruction. Concretely, one seeks a condition $C(\vw)$ such that
\[
    \Omega_k\subseteq \{\vw:C(\vw)\},
\]
and such that one of the following two implications is valid:
\[
    C\wedge A_{\mathrm{sel}}\Longrightarrow B_{\mathrm{sel}},
    \qquad\text{or}\qquad
    C\Longrightarrow \mathcal Q_T .
\]
The first implication gives a Trapped Regular Point Lemma; the second gives a
Valid 4-Point Steiner Tree Lemma. By \textbf{Theorem~\ref{thm:lemma_verification_function}}, each newly proved lemma produces
a finite collection $\Delta\mathcal F_k$ of additional verification functions.
We then set
\[
    \mathcal F_{\mathrm{ver}}^{(k+1)}
    =
    \mathcal F_{\mathrm{ver}}^{(k)}
    \cup
    \Delta\mathcal F_k .
\]

The iterative construction may therefore be summarized as
\[
    \mathcal F_{\mathrm{ver}}^{(k)}
    \xrightarrow{\ \rho\text{-feasibility test}\ }
    (\rho_k,\Omega_k)
    \xrightarrow{\ \text{new geometric lemmas}\ }
    \Delta\mathcal F_k
    \xrightarrow{\ \text{augmentation}\ }
    \mathcal F_{\mathrm{ver}}^{(k+1)} .
\]
Only the final certificate is used in the proof. Namely, after finitely many
stages, the accumulated family $\mathcal F_{\mathrm{ver}}^{(K)}$ is checked
over the whole parameter space, separately in each of the cases described below.
Once this final check proves
\[
    \Phi(\mathcal F_{\mathrm{ver}}^{(K)},\rho_*)\le 0,
\]
Corollary~\ref{cor:equivalent_formulation} gives the rigorous lower bound
\[
    \rho_{\mathrm{Steiner}}\ge \rho_* .
\]
In our certificate, the final value is $\rho_*=0.8559$.
The intermediate boxes $\Omega_k$ only guide the construction of new lemmas;
the validity of the final bound follows solely from the proved lemmas and the
final covering certificate.

\subsection{Verification Procedure}

Since there are two possible local substructures, and Lemma~\ref{lem:step0B} determines whether $F$ exists---namely, $F$ exists when $f \le d$---we divide the verification into four cases according to the topological status of $D$ and the range of the parameter $f$:
\begin{enumerate}
    \item \textbf{Case 1:} $D$ is a regular point, and $f \in [0, d]$ ($F$ exists).
    \item \textbf{Case 2:} $D$ is a regular point, and $f \in (d, +\infty)$ ($F$ does not exist).
    \item \textbf{Case 3:} $D$ is a Steiner point, and $f \in [0, d]$ ($F$ exists).
    \item \textbf{Case 4:} $D$ is a Steiner point, and $f \in (d, +\infty)$ ($F$ does not exist).
\end{enumerate}

\noindent\textbf{Reduction Strategy for Unbounded $f$.}
For Cases 2 and 4, where $f > d$, the implicit connection $(D, F)$ (or $(U,F)$ and $(V,F)$) is invalid and thus excluded from the graph $t^*$. In these scenarios, we apply a strict filtering policy: we only admit splits whose splitting functions are \textbf{monotonically non-increasing} with respect to $f$.

Hence, in the region $f\ge d$, it suffices to verify the boundary $f=d$. For each function $F$, define $F^{\partial}$ as the reduced function obtained from $F$ by setting $f=d$ and removing $f$ from the input variables. The verification target is thus reduced to
\[
\max_{\vw\in\mathcal{W}^{\partial}}
\min_{F\in\mathcal{F}} F^{\partial}(\vw)\le 0,
\]
where $\mathcal{W}^{\partial}$ denotes the reduced parameter space obtained by removing the $f$-coordinate from $\mathcal{W}$, so that $\mathcal{W}^{\partial}=[0,+\infty)^{n-1}$.

Note that $F^{\partial}$ preserves both convexity and coordinatewise non-increasingness of $F$. In particular, its non-increasingness with respect to $d$ follows because increasing $d$ simultaneously increases the original $d$-entry and the substituted $f$-entry, and $F$ is non-increasing in both variables.

It remains to handle the four cases above separately. For each case, we construct the corresponding verification functions using Lemma~\ref{lem:step0A}--\ref{lem:step8}, with the boundary-reduced forms applied in Cases 2 and 4, and then verify the resulting inequalities via Algorithm~\ref{alg:region_partitioning}. This completes the verification procedure.

The complete implementation for generating and verifying the certificates is available at
\url{https://github.com/keyisi2006/Steiner-Ratio/tree/main/certificate}, allowing the computer-assisted verification to be checked independently.





\end{document}